\documentclass[runningheads]{llncs}
\usepackage{amssymb,amsmath}
\usepackage{graphicx,color}
\usepackage{enumerate,paralist}
\usepackage[font=small]{subfig,caption}

\usepackage{todonotes} 
\usepackage{verbatim} 
\usepackage{hyperref}

\let\doendproof\endproof
\renewcommand\endproof{~\hfill\qed\doendproof}

\newcommand{\rephrase}[3]{\noindent\textbf{#1 #2}.~\emph{#3}}

\spnewtheorem{observation}{Observation}{\bfseries}{\itshape}
\newcommand{\R}{\mathrm{I\!R}}

\begin{document}

%%%% Title
\title{On Vertex- and Empty-Ply Proximity Drawings}

%%%% Short Title for running head
\titlerunning{On Vertex- and Empty-Ply Proximity Drawings}
%
%%%% Author list of contribution
\author{
P.~Angelini\inst{1} \and
S. Chaplick\inst{2} \and
F. De Luca\inst{3} \and
J. Fiala\inst{4} \and
J. Han\v{c}l Jr. \inst{4} \and
N. Heinsohn \inst{1} \and
M.~Kaufmann\inst{1}  \and
S. Kobourov\inst{5} \and
J. Kratochv\'{\i}l\inst{4} \and
P. Valtr\inst{4} 
}

\institute{%
    Wilhelm-Schickhard-Institut f\"ur Informatik, Universit\"at T\"ubingen, Germany\\
\and
    Lehrstuhl f\"ur Informatik~I, Universit\"at W\"urzburg, Germany\\
\and
	Universit\`{a} degli Studi di Perugia, Perugia, Italy\\
\and
    Department of Applied Mathematics, Charles University (KAM), Czech Republic\\
\and 
    Department of Computer Science, University of Arizona, Tucson, USA
}

%
%%%% Author list for running head
\authorrunning{Angelini {\em et al.}}
\maketitle

% =================================================================
\begin{abstract}
We initiate the study of the \emph{vertex-ply} of straight-line drawings, as a relaxation of the recently introduced \emph{ply} number. 
 Consider the disks centered at each vertex with radius equal to half the length of the longest edge incident to the vertex. The vertex-ply of a drawing is determined by the vertex covered by the maximum number of disks.
The main motivation for considering this relaxation is to relate the concept of ply to proximity drawings. In fact, if we interpret the set of disks as proximity regions, a drawing with vertex-ply number $1$ can be seen as a weak proximity drawing, which we call \emph{empty-ply} drawing.
We show non-trivial relationships between the ply number and the vertex-ply number. Then, we focus on empty-ply drawings, proving some properties and studying what classes of graphs admit such drawings. Finally, we prove a lower bound on the ply and the vertex-ply of planar drawings.
\end{abstract}
%=================================================================

%=================================================================
\section{Introduction}\label{se:introduction}
%=================================================================

Constructing graph layouts that are readable and easily convey the information hidden in the represented data is one of the main goals of graph drawing research.
Several aesthetic criteria have been defined to capture the user requirement for a better understanding of the data, e.g., resolution rules~\cite{fhhklsww-dgphr-90,mp-arpg-92}, low-density~\cite{fr-gdfdp-91}, proximity drawings~\cite{liotta-proximity}. The \emph{ply number}~\cite{ply-original} of a graph is another such criterion. 
We adopt the following notation: given a straight-line drawing $\Gamma$ of a graph $G=(V,E)$, for each vertex $v \in V$ consider an open disk $D_v$ (called the \emph{ply-disk} of $v$) centered at $v$ with radius $r_v$ equal to half of the length of the longest edge incident to $v$. Over all points $p$ on the plane, let $k$ be the maximum number of ply-disks of $\Gamma$  that include the point $p$ in their interior. Then, the drawing $\Gamma$ has \emph{ply} $k$. The \emph{ply number of $G$} is the minimum ply over all its drawings.

The ply number was originally proposed by Eppstein and Goodrich~\cite{eg-snprn-08} in the context of interpreting road networks as
subgraphs of disk-intersection graphs.
The concept of a ply number is also related to proximity drawings of graphs~\cite{liotta-proximity}. 
A \emph{proximity drawing} of a graph $G$ is a straight-line drawing of $G$ in which for every two vertices $u$ and $v$, there exists a region of the plane, called \emph{proximity region} of $u$ and $v$, that contains other vertices in its interior if and only if $u$ and $v$ are not connected by an edge in $G$. If $G$ admits a proximity drawing, then it is a \emph{proximity graph}.
A proximity region specifies a set of points in the plane that are closer to $u$ and $v$ than to the other vertices, % in some respect. D
and different proximity regions lead to different definitions of proximity drawings. Regions can be \emph{global}, e.g., Euclidean minimum spanning trees~\cite{preparata1988computational}, or \emph{local},  e.g., Gabriel graphs~\cite{Gabriel01091969} (Fig.~\ref{fig:proximity-gabriel}), relative-neighborhood graphs~\cite{TOUSSAINT1980261} (Fig.~\ref{fig:proximity-rng}), and Delaunay triangulations~\cite{Del34,preparata1988computational}.
Proximity drawings of graphs are also studied in the \emph{weak} model~\cite{DBLP:journals/jda/BattistaLW06}, where the ``if" part of the condition is neglected: i.e.,  if two vertices are not connected by an edge, then their proximity region may be empty.

In this work, we are interested in deepening the study of the relationship between the notions of ply number and of proximity drawings. In this direction, one can consider the local proximity region associated with a pair of vertices $u$ and $v$ as the one composed of the disks centered at $u$ and at $v$, with radius equal to half of the length of the straight-line segment between $u$ and $v$ (Fig.~\ref{fig:proximity}). Due to the possible absence of edges,
this is a weak proximity model. 
However, a drawing $\Gamma$ may have ply larger than $1$ even if no proximity region contains a vertex different from the two which defined it, since the ply of $\Gamma$ is only determined by the way in which different regions intersect each other.

\begin{figure}[tb!]
	\centering
	\subfloat[\label{fig:proximity-gabriel}]{
		\includegraphics[width=0.18\textwidth,page=5]{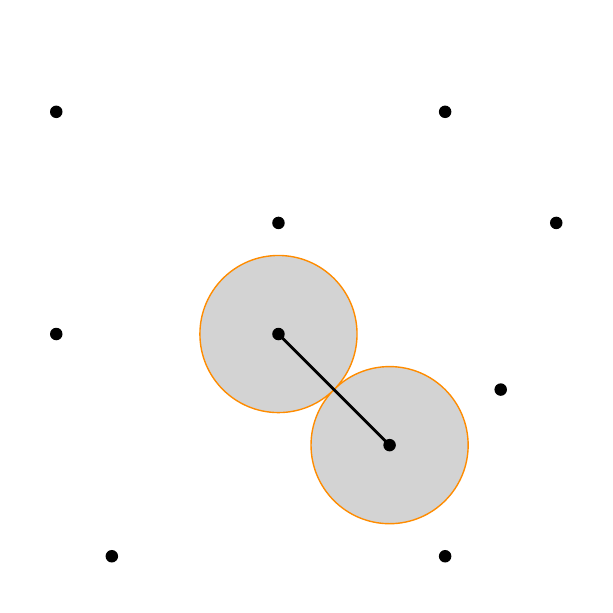}
	}\hfil
	\subfloat[\label{fig:proximity-rng}]{
		\includegraphics[width=0.18\textwidth,page=4]{images/proximity.pdf}
	}\hfil
	\subfloat[\label{fig:proximity}]{
		\includegraphics[width=0.18\textwidth,page=1]{images/proximity.pdf}
	}\hfil
	\subfloat[\label{fig:proximity-relationship-1}]{
		\includegraphics[width=0.18\textwidth,page=2]{images/proximity.pdf}
	}\hfil
	\subfloat[\label{fig:proximity-relationship-2}]{
		\includegraphics[width=0.18\textwidth,page=3]{images/proximity.pdf}
	}
	\caption{(a) Gabriel,
	(b)~Relative-neighborhood, and
	(c)~Ply proximity regions.
	(d)~A disconnected empty-ply graph.
	(e)~A non-planar empty-ply drawing.}
	\label{fig:proximity-figures}
\end{figure}
 
To improve this relationship, we relax the definition of ply number and introduce the concept of \emph{vertex-ply number}. Consider a straight-line drawing $\Gamma$ of a graph $G$. Over all vertex-points $p$ on the plane (i.e., points  which realize a vertex of $G$), let $k$ be the maximum number of ply-disks of $\Gamma$  that include the point $p$ in their interior. Then, the drawing $\Gamma$ has \emph{vertex-ply} $k$. The \emph{vertex-ply number of $G$} is the minimum vertex-ply over all its drawings. In the special case in which $\Gamma$ has vertex-ply $1$, i.e., every disk $D_v$ contains only $v$ in its interior, we say that $\Gamma$ is an \emph{empty-ply} drawing. Note that an empty-ply drawing is in fact a weak proximity drawing with respect to the proximity region defined above, that is, a drawing is empty-ply if and only if all the proximity regions are empty.

Some relationships between proximity models are known, e.g., any Delaunay triangulation contains a Gabriel graph as a spanning subgraph, which in turn contains a relative-neighborhood graph, which in turn contains a minimum spanning tree~\cite{liotta-proximity}. 
It is hence natural to ask about the role of empty-ply drawings in these relationships. 
We first note that an empty-ply drawing may be non-planar (see Fig.~\ref{fig:proximity-relationship-2}), which is not the case for Delaunay triangulations, and thus for any of the other type of proximity drawings. On the other hand, there exist empty-ply drawings that are not connected and that cannot be made connected by just adding edges while maintaining the empty-ply property (see Fig.~\ref{fig:proximity-relationship-1}), which differs from
the case for minimum spanning trees (and thus for all the other proximity drawings). These two observations imply that empty-ply drawings are not directly comparable with other types of disk-based proximity drawings. 

The concept of empty-ply is related to \emph{partial edge drawings} (PEDs)~\cite{DBLP:journals/corr/abs-1209-0830,Bruckdorfer2012,DBLP:conf/iisa/Bruckdorfer0L15}. A PED is a straight-line drawing of a graph in which each edge is divided into three segments: a \emph{middle part} that is not drawn and the two segments incident to the vertices, called \emph{stubs}, that remain in the drawing and that are not allowed to cross.  
Our Theorem~\ref{thm:peds} in Section~\ref{sec:properties} shows that an empty-ply drawing also yields a PED whose stubs have nontrivial lengths.

Drawing graphs with low ply was first considered by Di Giacomo {\em et al.}~\cite{ply-original}. They show that testing whether an internally triangulated biconnected planar graph has ply number $1$ can be done in $O(n \log n)$ time and that the class of graphs with ply number $1$ coincides with
unit-disk contact graphs~\cite{Breu19983}, which makes the recognition problem NP-hard. 
Angelini {\em et al.}~\cite{DBLP:conf/gd/AngeliniBBH0KSV16} studied area requirements of drawings of trees with low ply. De Luca {\em et al.}~\cite{DBLP:conf/walcom/LucaGDKL17} performed an experimental study demonstrating correlations between the ply of a drawing and aesthetic metrics such as stress and uniform edge-lengths. An interactive tool has been implemented by Heinsohn and Kaufmann \cite{plyTool}.

We first  demonstrate non-trivial relationships between the ply number and the vertex-ply number of graphs. In Section~\ref{sec:relationship}
we positively answer a question from~\cite{ply-original} (Problem $4$) regarding whether the ply number of an empty-ply drawing is constant.
In Section~\ref{sec:properties} we study properties of empty-ply graphs. In Section~\ref{sec:negative-results} we provide several classes of graphs that admit empty-ply drawings and some classes that do not (we consider $k$-ary trees, complete (bipartite) graphs, and squares of graphs with ply number $1$). Further, in Section~\ref{sec:lower-bound} we answer another question posed in~\cite{ply-original} (Problem $3$), regarding the relationship between (vertex-) ply and crossings, by presenting graphs that admit drawings with constant ply and only $3$ crossings but any corresponding planar drawing requires linear ply. We conclude in Section~\ref{sec:conclusions} with several open problems. 
 
\section{Relationships between Ply and Vertex-Ply}
\label{sec:relationship}

We start with a natural question about the relationship between the ply number and the vertex-ply number of a graph.

\begin{theorem}\label{th:vertex-ply-5k}
	The ply of a drawing of a graph with vertex-ply $h$ is at most $5h$.
\end{theorem}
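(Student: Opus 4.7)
The plan is to fix any point $p$ in the plane and bound the number $k$ of ply-disks whose open interiors contain $p$. If $p$ coincides with a vertex the bound is immediate from the vertex-ply assumption, so assume $p$ is not a vertex. Let the disks whose interiors contain $p$ be $D_{v_1},\dots,D_{v_k}$ with radii $r_1\le r_2\le\dots\le r_k$, so that $v_1$ has the smallest radius among them. I then plan to show $k\le 5h$ by charging each of the $k$ disks to one of five ``anchor'' vertex-points that the disk must contain, and invoking the vertex-ply bound at each anchor.

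The core geometric ingredient will be the following containment lemma: if $D_{v_i}$ and $D_{v_j}$ both contain $p$ in their interiors, with $r_i\le r_j$, and if the angle $\angle v_ipv_j$ at $p$ is at most $60^\circ$, then $v_i\in D_{v_j}$. To prove it, I would apply the law of cosines to the triangle $v_ipv_j$: since $\cos\theta\ge 1/2$ when $\theta\le 60^\circ$, one obtains $|v_iv_j|^2\le |pv_i|^2+|pv_j|^2-|pv_i|\,|pv_j|$. Writing $a=|pv_i|<r_i$ and $b=|pv_j|<r_j$, the function $g(a,b)=a^2+b^2-ab$ on the half-open rectangle $[0,r_i)\times[0,r_j)$ has supremum $r_j^2$, and a short corner check using $r_i\le r_j$ shows that this supremum is not attained on the half-open domain; hence $|v_iv_j|^2<r_j^2$ and $v_i\in D_{v_j}$.

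Using the lemma, I would partition the directions around $p$ into five regions: $R_0$ is the $120^\circ$ cone of all directions making angle at most $60^\circ$ with the ray $pv_1$, and $R_1,R_2,R_3,R_4$ are four $60^\circ$ sectors partitioning the complementary $240^\circ$; each center $v_j$ lies in exactly one region. For $R_0$, applying the lemma with $v_i=v_1$ (whose radius is minimal overall) shows that $v_1\in D_{v_j}$ for every $v_j\in R_0$, so the vertex-ply bound at $v_1$ gives $|R_0|\le h$. For each $R_\ell$ with $\ell\ge 1$, let $v_\ell^{*}$ be a center in $R_\ell$ of smallest radius; since any two centers of $R_\ell$ span an angle of at most $60^\circ$ at $p$, the lemma yields $v_\ell^{*}\in D_{v_j}$ for every $v_j\in R_\ell$, and the vertex-ply bound at $v_\ell^{*}$ gives $|R_\ell|\le h$. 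Summing over the five regions would give $k\le 5h$. The main subtlety is choosing the partition so that five regions suffice: a naive split of the plane around $p$ into six $60^\circ$ sectors only yields the bound $6h$, and the improvement to $5h$ comes from replacing two adjacent sectors by the single $120^\circ$ cone aligned with $v_1$'s direction, which the lemma permits precisely because $v_1$, as the globally smallest-radius center, is then forced to lie in every disk whose center falls in that cone.
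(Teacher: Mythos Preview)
Your proposal is correct and follows essentially the same partition-and-charging argument as the paper: a $120^\circ$ cone aligned with a distinguished vertex together with four $60^\circ$ sectors, applying the vertex-ply bound at one anchor per region. The only cosmetic difference is the choice of anchor: you take the vertex of smallest \emph{radius} and therefore need the law-of-cosines bound on $g(a,b)=a^2+b^2-ab$, whereas the paper takes the vertex \emph{closest to $p$}, which yields the containment $|v_1v_i|\le |pv_i|<r_{v_i}$ in one line without optimizing over a rectangle.
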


\begin{proof}
	Let $\Gamma$ be any drawing of a graph $G$ with vertex-ply $h$. 
	Let $p$ be any point in the plane and let $v_1, \ldots, v_k$ be the vertices whose ply-disks contain $p$ in their interior, appearing in this radial order around $p$; see Fig.~\ref{fig:vertex-ply-5k}.
	 Without loss of generality, assume that $v_1$ is the vertex closest to $p$. Let $l$ be the line through $p$ and $v_1$, and let $l'$ and $l''$ be two lines through $p$ creating angles $\frac{\pi}{3}$ and $-\frac{\pi}{3}$ with $l$. These lines determine a covering of the plane by six closed wedges $A_1, \dots, A_6$ centered at $p$, each having $\frac{\pi}{3}$ as its internal angle.
	
	Let $A_1$ and $A_2$ be the wedges delimited by the half-line starting at $p$ and passing through $v_1$. For each vertex $v_i \in A_1\cup A_2$ we have
 $\angle{v_1pv_i} \leq \frac{\pi}{3}$. This implies that $|v_1 v_i| \leq |v_i p|$ and hence that $v_1$ belongs to the ply-disk $D_{v_i}$, since $p$ belongs to $D_{v_i}$. Thus, if the union of the closed wedges $A_1$ and $A_2$ contains at least $h$ vertices among $v_2,\dots,v_k$, we obtain that $v_1$ belongs to at least $h+1$ ply-disks. This is not possible, since $\Gamma$ has vertex-ply $h$.
	
	We now prove that each wedge $A_i$ with $3 \leq i \leq 6$ contains at most $h$ vertices among $v_2, \ldots, v_k$.
	Namely if it contains at least $h+1$ vertices we can argue as above that the closest vertex to $p$ among them belongs to the ply-disks of all the other $h$ vertices. This completes the proof of the theorem that there exist at most $5h$ vertices whose ply-disks enclose $p$.
\end{proof}

	\begin{figure}[tb]
	\centering
	\subfloat[\label{fig:vertex-ply-5k}]{
		\includegraphics[height=3cm,page=2]{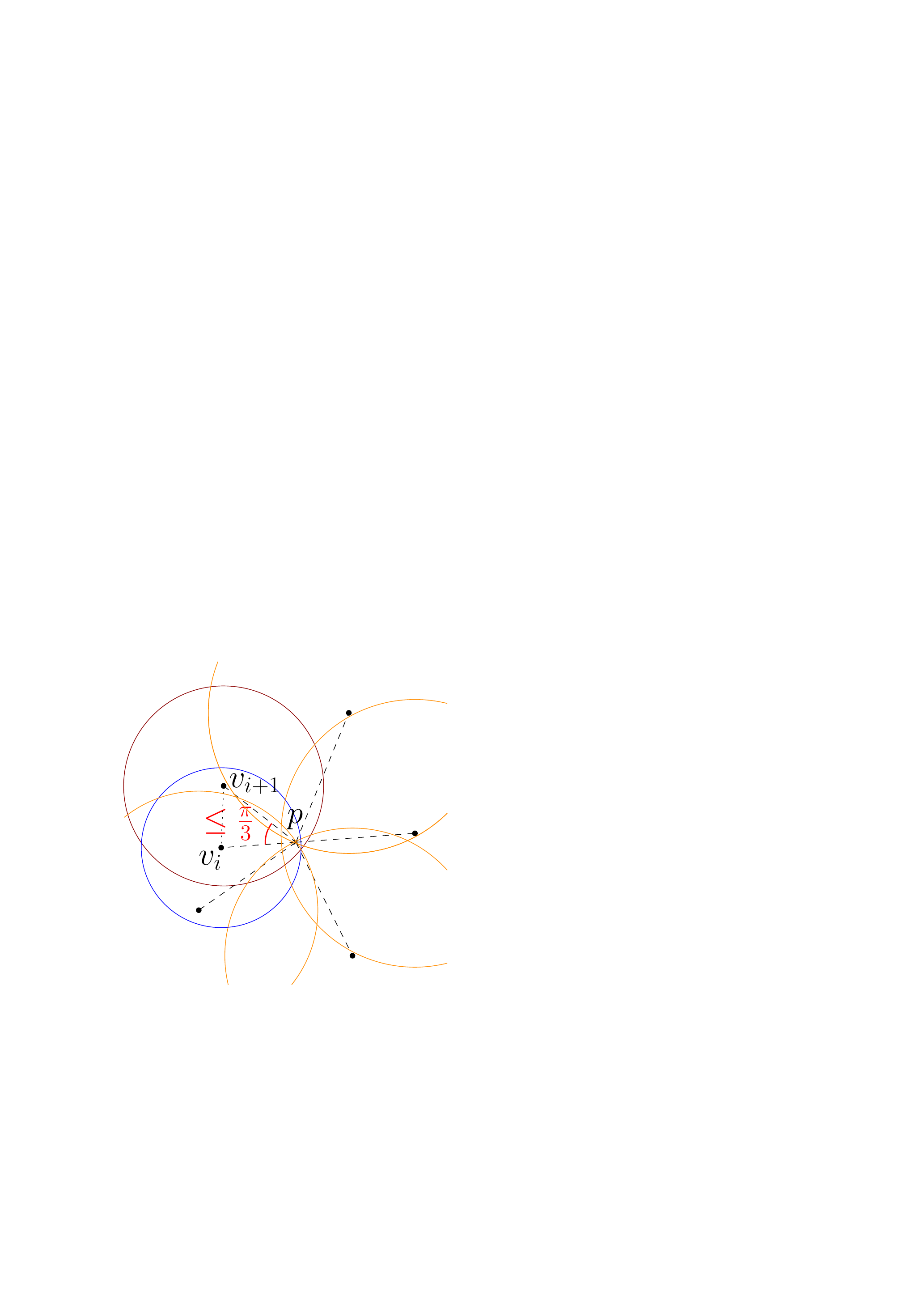}
	}
	\hfil
	\subfloat[\label{fig:S24}]{
		\includegraphics[height=3cm]{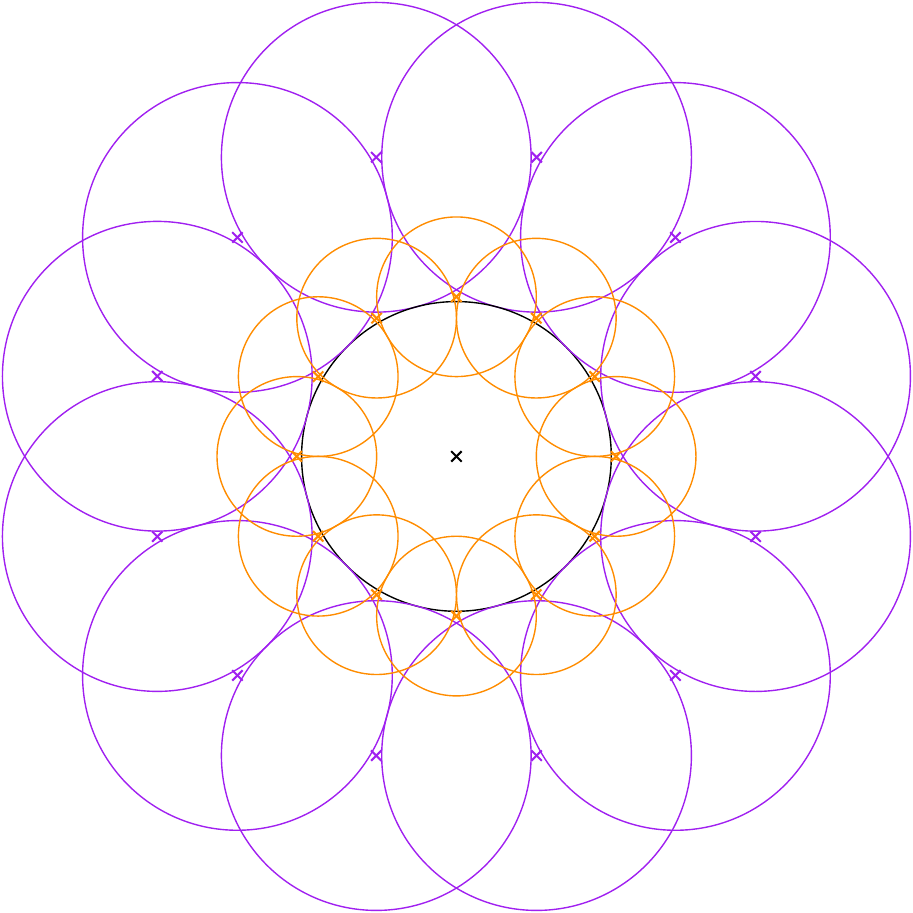}
	}
	\caption{(a) Illustration for the proof of Theorem~\ref{th:vertex-ply-5k}. (b) An empty-ply drawing of a star of degree $24$. For readability, edges are not drawn.}\label{fig:vertex-ply-original}
\end{figure}

\begin{corollary}\label{cor:emply-ply-ordinary-ply}
The ply of an empty-ply drawing of a graph is at most $5$.
\end{corollary}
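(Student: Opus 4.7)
The plan is to observe that this corollary follows immediately from Theorem~\ref{th:vertex-ply-5k} by specializing the parameter $h$. Recall that an empty-ply drawing is by definition a drawing of vertex-ply $1$: every ply-disk $D_v$ contains only $v$ in its interior, which is precisely the condition that the number of ply-disks covering any vertex-point is at most $1$. Hence an empty-ply drawing is a drawing of vertex-ply $h=1$.

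First I would invoke Theorem~\ref{th:vertex-ply-5k} applied to such a drawing. The theorem gives the bound $5h$ on the ply of any drawing of vertex-ply $h$, and substituting $h=1$ yields the bound $5$ on the ply. No further argument is required; in particular, no case analysis or geometric estimate is needed beyond what has already been carried out in the proof of the theorem (the wedge-covering argument with six $\pi/3$-sectors around a point $p$).

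The only small subtlety worth pointing out in the write-up, just to make the deduction airtight, is the equivalence between ``empty-ply'' and ``vertex-ply $1$''. This is immediate from the definitions given in the introduction, but stating it explicitly before applying Theorem~\ref{th:vertex-ply-5k} clarifies the reduction. Since there is no genuine obstacle here, the corollary is essentially a one-line consequence of the theorem, and the main expository task is simply to record that one-line deduction clearly.
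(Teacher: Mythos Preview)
Your proposal is correct and matches the paper's approach exactly: the corollary is stated immediately after Theorem~\ref{th:vertex-ply-5k} with no separate proof, being the specialization $h=1$ together with the observation that an empty-ply drawing has vertex-ply~$1$ by definition.
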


Note that the converse of Corollary~\ref{cor:emply-ply-ordinary-ply} does not hold. If a graph $G$ does not admit any empty-ply drawing, that does not imply that the ply number of $G$ is larger than $5$. A star graph with degree larger than $24$ does not have an empty-ply drawing (see Theorem~\ref{th:empty-ply-max-degree}), but can be drawn with constant ply $2$~\cite{ply-original}.

%-----------------------------------------------------
\section{Properties of Graphs with Empty-Ply Drawings}
\label{sec:properties}

Let $\Gamma$ be a straight-line drawing of a graph $G$. Let $\{D'_v, v\in V\}$ be the set of open disks where $D_v'$ is centered at $v$, but with radius only $\frac{r_v}{2}$. We can think of these disks as obtained by shrinking the original ply-disks of $\Gamma$ to half-length radius.
Note that if $\Gamma$ is an empty-ply drawing, then the disks in $\{D'_v, v\in V\}$ are pairwise disjoint. This observation implies the next result. 

\begin{lemma}\label{lem:area}
In an empty-ply drawing $\Gamma$ of a graph $G=(V,E)$ the sum of the areas of all ply-disks $\{D_v, v\in V\}$ does not exceed $4$ times the area of their union.  
\end{lemma}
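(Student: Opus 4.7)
The plan is to use directly the shrunken disks $D'_v$ introduced right before the lemma. The key intermediate claim is that in an empty-ply drawing the disks $\{D'_v : v \in V\}$ are pairwise disjoint; this is essentially the observation the authors just flagged, and everything else follows by a short area comparison.

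First I would prove the disjointness. Suppose that for two distinct vertices $u, v$ the open disks $D'_u$ and $D'_v$ share a point $p$. Then $|up| < r_u/2$ and $|vp| < r_v/2$, so by the triangle inequality
\[
|uv| \;\leq\; |up| + |pv| \;<\; \frac{r_u + r_v}{2} \;\leq\; \max(r_u, r_v).
\]
Without loss of generality assume $r_v = \max(r_u, r_v)$; then $|uv| < r_v$, so $u$ lies in the interior of the ply-disk $D_v$. This contradicts the empty-ply assumption that $D_v$ contains no vertex other than $v$ in its interior.

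Next I would finish with an area count. Since $D'_v$ has radius $r_v/2$, its area is exactly $\tfrac{1}{4}$ of the area of $D_v$. Because the disks $D'_v$ are pairwise disjoint and each one is contained in the corresponding $D_v$ (and hence in $\bigcup_v D_v$), we get
\[
\sum_{v \in V} \operatorname{area}(D_v) \;=\; 4 \sum_{v \in V} \operatorname{area}(D'_v) \;=\; 4 \operatorname{area}\!\Bigl(\bigcup_{v \in V} D'_v\Bigr) \;\leq\; 4 \operatorname{area}\!\Bigl(\bigcup_{v \in V} D_v\Bigr),
\]
which is the desired inequality.

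There is not really a hard step here; the only thing one has to be slightly careful about is the case $u=v$ (excluded by considering distinct vertices) and the fact that the disks are open, so the strict inequality above does yield disjointness rather than just non-overlap of interiors. Everything else is routine bookkeeping with $\pi r^2$ and additivity of area on disjoint measurable sets.
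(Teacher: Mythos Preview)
Your proof is correct and follows exactly the paper's approach: the authors state the disjointness of the shrunken disks $D'_v$ as an observation immediately before the lemma and then give the one-line area comparison, while you simply spell out the triangle-inequality verification of that observation. There is no substantive difference.
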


\begin{proof}
Each disk $D_v'$ has area four times smaller than $D_v$, but is drawn inside the union of all ply-disks.
\end{proof}

In the rest of the paper we frequently use disk-packing arguments based on Lemma~\ref{lem:area}. Another consequence of the observation above is a relationship between empty-ply drawings and the most popular type of PED, called $\frac{1}{4}$-SHPED~\cite{Bruckdorfer2012}, in which the length of both stubs of an edge $e$ is $\frac{1}{4}$ of~$e$'s length.

\begin{theorem}\label{thm:peds}
An empty-ply graph admits a $\frac{1}{4}$-SHPED.
\end{theorem}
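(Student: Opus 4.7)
The plan is to use the shrunken-disk observation immediately preceding Lemma~\ref{lem:area} as the ``no-crossing'' certificate for the PED. Given an empty-ply drawing $\Gamma$ of $G$, I construct a $\tfrac14$-SHPED by replacing every edge $e=uv$ with the two stubs of length $\tfrac{|uv|}{4}$ emanating from $u$ and $v$ along the segment $uv$. Verifying the SHPED property then reduces to checking that these stubs do not cross.

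The key containment is that each stub lies inside the corresponding shrunken ply-disk $D'_v$. Indeed, for any edge $e=uv$ the radius of the ply-disk at $u$ satisfies $r_u \geq \tfrac{|uv|}{2}$, since $e$ is incident to $u$ and $r_u$ is half the longest incident edge. Hence the stub at $u$ has length $\tfrac{|uv|}{4} \leq \tfrac{r_u}{2}$, so it is contained in the closed disk $\overline{D'_u}$. The symmetric statement holds at $v$.

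Now I invoke the observation preceding Lemma~\ref{lem:area}: in an empty-ply drawing the open disks $\{D'_v : v \in V\}$ are pairwise disjoint (this follows because $v \notin D_u$ implies $|uv| \geq r_u$, and symmetrically $|uv| \geq r_v$, hence the centers of $D'_u$ and $D'_v$ lie at distance at least $\tfrac{r_u+r_v}{2}$). Consequently, two stubs emanating from distinct vertices $u\neq w$ are contained in the disjoint closed regions $\overline{D'_u}$ and $\overline{D'_w}$, whose interiors do not meet; in particular the stubs have no proper crossing. Two stubs emanating from the same vertex share only that vertex, so they do not cross either.

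The only subtlety is that the shrunken disks are only \emph{open} disks, so boundaries may in principle touch; I would note explicitly that any common point of two stubs from distinct vertices must lie on $\partial D'_u \cap \partial D'_v$, which is at most a single tangency point and never a proper crossing, which suffices for the PED model. I expect this boundary bookkeeping to be the only delicate step; everything else is a direct consequence of the shrunken-disk disjointness property.
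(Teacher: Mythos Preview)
Your proof is correct and follows essentially the same approach as the paper: both arguments use the pairwise disjointness of the shrunken disks $D'_v$ to certify that the quarter-length stubs cannot cross. You are slightly more explicit about the boundary tangency case, but the core idea is identical.
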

\begin{proof}
Let $\Gamma$ be an empty-ply drawing of a graph $G=(V,E)$ with the set of disks  $\{D'_v, v\in V\}$. Let $\Gamma'$ be the drawing obtaining from $\Gamma$ by keeping for each edge $(u,v)$ only the two parts in the interior of disks $D_u'$ and $D_v'$. By definition, both these parts cover at least $\frac{1}{4}$ of $(u,v)$. Since no two such disks overlap, there is no crossing in $\Gamma'$, and the statement follows. 
\end{proof}

We now focus on the relationship between the radii of the ply-disks of adjacent vertices in an empty-ply drawing. For the following two lemmas we use that for each vertex $v$, and for each edge $(v,w)$ incident to $v$, we have $r_v \leq |vw|$, as the drawing is empty-ply, and $r_v\ge \frac{|vw|}{2}$, by the definition of the ply-disk $D_v$.

\begin{lemma}\label{le:ratio-incident-edges}
	In an empty-ply drawing, for any two edges $(u,v)$ and $(v,w)$ incident to the same vertex $v$, we have $\frac{1}{2} \leq \frac{|uv|}{|vw|} \leq 2$.
\end{lemma}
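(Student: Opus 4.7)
The plan is to derive the stated ratio bound directly from the two inequalities highlighted in the paragraph immediately preceding the lemma, applied simultaneously to the two edges $(u,v)$ and $(v,w)$. No geometric construction beyond the ply-disk $D_v$ is needed.

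First I would restate the two key facts about $v$ and $r_v$ in the empty-ply setting. For every neighbor $x$ of $v$, the empty-ply property forces $x$ to lie outside the interior of $D_v$, giving $|vx| \geq r_v$. On the other hand, $r_v$ is by definition half the length of the longest edge incident to $v$, so in particular $r_v \geq \tfrac{|vx|}{2}$, i.e.\ $|vx| \leq 2 r_v$. Thus every edge incident to $v$ has length sandwiched between $r_v$ and $2r_v$.

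Next I would specialise this to $x = u$ and $x = w$, obtaining
\[
r_v \;\leq\; |uv| \;\leq\; 2\,r_v \qquad\text{and}\qquad r_v \;\leq\; |vw| \;\leq\; 2\,r_v .
\]
Dividing the maximum of the first expression by the minimum of the second (and vice versa) yields $\tfrac{|uv|}{|vw|} \leq \tfrac{2r_v}{r_v} = 2$ and $\tfrac{|uv|}{|vw|} \geq \tfrac{r_v}{2r_v} = \tfrac{1}{2}$, which is exactly the claim.

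There is essentially no obstacle here: the proof is a two-line application of the inequalities explicitly flagged by the authors just before the statement. The only thing that deserves a short sentence of justification is why the empty-ply hypothesis gives $|vx| \geq r_v$ for every neighbor $x$ of $v$ (namely, the edge endpoint $x$ cannot sit in the interior of $v$'s own ply-disk).
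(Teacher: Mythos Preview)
Your proposal is correct and matches the paper's approach exactly: the paper states the two inequalities $r_v \leq |vw|$ and $r_v \geq \tfrac{|vw|}{2}$ in the sentence immediately preceding the lemma and treats the lemma as an immediate consequence, precisely the two-line derivation you wrote out.
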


\begin{lemma}\label{le:ratio}
In an empty-ply drawing, the radii of the ply-disks of two adjacent vertices $u$ and $v$ differ by at most a factor of $2$, i.e., $\frac{1}{2} \leq \frac{r_u}{r_v} \leq 2$.
\end{lemma}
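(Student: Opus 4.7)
The plan is to apply the two inequalities $r_v \le |vw|$ and $r_v \ge |vw|/2$ (which hold for any vertex $v$ and any incident edge $(v,w)$ in an empty-ply drawing, as stated in the paragraph preceding Lemma~\ref{le:ratio-incident-edges}) directly to the single edge $(u,v)$, rather than to a pair of edges sharing an endpoint as in Lemma~\ref{le:ratio-incident-edges}. This makes the proof essentially a one-line computation once both inequalities are instantiated on $(u,v)$.

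More precisely, first I would write down, for the edge $(u,v)$ viewed from $u$'s side, the upper bound $r_u \le |uv|$, which expresses the empty-ply requirement that $v$ must lie outside $D_u$. Then, viewing the same edge from $v$'s side, I would write down the lower bound $r_v \ge |uv|/2$, which follows from $|uv|$ being a candidate for the longest edge incident to $v$ and $r_v$ being half of that maximum. Dividing yields $r_u/r_v \le |uv|/(|uv|/2)=2$.

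Finally, swapping the roles of $u$ and $v$ in the same argument gives $r_v/r_u \le 2$, hence $r_u/r_v \ge 1/2$, completing the bound $\tfrac12 \le r_u/r_v \le 2$. There is essentially no obstacle here; the only thing worth emphasizing is that both inequalities are applied to the \emph{same} edge $(u,v)$, which is exactly what makes the argument tighter than (and trivially implies) the corresponding ratio bound of Lemma~\ref{le:ratio-incident-edges} on edge lengths.
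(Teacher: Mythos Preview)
Your proof is correct and is exactly the argument the paper intends: the paragraph preceding the two lemmas sets up the inequalities $r_v \le |vw|$ and $r_v \ge |vw|/2$ for any edge $(v,w)$, and Lemma~\ref{le:ratio} follows by instantiating both on the single edge $(u,v)$ and dividing, just as you do.

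One minor quibble with your closing parenthetical: Lemma~\ref{le:ratio} does not ``trivially imply'' Lemma~\ref{le:ratio-incident-edges}. A bound on $r_u/r_v$ for adjacent $u,v$ does not by itself yield a bound on $|uv|/|vw|$ for two edges sharing the endpoint $v$; rather, both lemmas are parallel, independent one-line consequences of the same pair of basic inequalities. That remark is not needed for the proof and can simply be dropped.
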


We conclude the section by presenting a tight bound on the maximum degree of graphs that admit empty-ply drawings.

\begin{theorem}\label{th:empty-ply-max-degree}
No vertex of an empty-ply graph has degree greater than $24$.
\end{theorem}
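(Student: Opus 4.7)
The plan is to carry out an angle-packing argument around the high-degree vertex $v$. First I would normalize so that $r_v = 1$; then every neighbor $u_i$ of $v$ has $\rho_i := |vu_i|$ in the annulus $[1,2]$ around $v$ (the upper bound from the definition of $r_v$ as half of the longest edge at $v$, the lower bound from $u_i \notin D_v$ by the empty-ply condition), and $r_{u_i} \geq \rho_i / 2$ by the definition of the ply-disk at $u_i$.

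Next I would partition the neighbors of $v$ into two rings $R_1 := \{u_i : \rho_i \leq \sqrt{2}\}$ and $R_2 := \{u_i : \rho_i > \sqrt{2}\}$ and prove that each ring contains at most $12$ neighbors. For any two distinct neighbors $u_i, u_j$ in the same ring with $\rho_i \leq \rho_j$, the empty-ply condition gives $|u_i u_j| \geq r_{u_j} \geq \rho_j / 2$, and the law of cosines applied to the triangle $u_i v u_j$ yields
$$\cos(\angle u_i v u_j) \;\leq\; \frac{\rho_i}{2\rho_j} + \frac{3\rho_j}{8\rho_i}.$$
Because $\rho_i, \rho_j$ lie in the same ring, the ratio $t := \rho_i / \rho_j$ belongs to $[1/\sqrt{2},\, 1]$, and elementary calculus shows that the right-hand side, regarded as a function $g(t) := t/2 + 3/(8t)$, attains its maximum $5\sqrt{2}/8$ on this interval at $t = 1/\sqrt{2}$. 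A direct numerical comparison then gives $5\sqrt{2}/8 < \cos(2\pi/13)$, so $\angle u_i v u_j > 2\pi/13$.

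Sorting the neighbors of a single ring by angular position around $v$, their consecutive angular gaps sum to $2\pi$ and each strictly exceeds $2\pi/13$; hence each ring contains at most $12$ neighbors, and the degree of $v$ is at most $24$.

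The main subtlety will be the choice of partition threshold together with the tight numerical inequality. Choosing $\sqrt{2}$, the geometric mean of $1$ and $2$, balances the two rings so that $\rho_j/\rho_i \leq \sqrt{2}$ uniformly; this, combined with the numerical gap $\cos(2\pi/13) - 5\sqrt{2}/8 > 0$, is what pins the degree bound exactly at $24$. A single undivided ring would be too coarse, because in the worst case where one incident edge has length $1$ and another length $2$, the two endpoints can lie in nearly the same direction from $v$, so the partition step is essential.
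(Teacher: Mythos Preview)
Your proof is correct and is essentially the same argument as the paper's: both split the annulus of neighbors at radius $\sqrt{2}$, use the empty-ply constraint $|u_iu_j|\ge \rho_j/2$ together with the law of cosines to force an angular gap strictly greater than $2\pi/13$ between any two same-ring neighbors, and conclude that each ring holds at most $12$ vertices. The only cosmetic difference is that the paper phrases the key inequality as a quadratic in $q=\rho_j/\rho_i$ whose roots straddle $[1,\sqrt2]$, whereas you maximize $g(t)=t/2+3/(8t)$ over $t\in[1/\sqrt2,1]$ and compare $5\sqrt2/8$ directly to $\cos(2\pi/13)$.
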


\begin{proof} 
To obtain a contradiction, let $\Gamma$ be an empty-ply drawing of a graph $G$ with a vertex $v$ of degree greater than $24$. By Lemma~\ref{le:ratio-incident-edges}, the lengths of all edges of $v$ are in the interval $[m, 2m]$, where $m$ is the length of the shortest edge. Note that there are at least $13$ edge lengths either in the interval $[m, \sqrt{2}m]$ or in the interval $[\sqrt{2}m, 2m]$. In either case, there exist two neighbors $u$ and $w$ of $v$ such that $|v u| \le |v w|\le \sqrt{2} |v u| $ and $\alpha=\angle uvw \leq \frac{2\pi}{13}$.
Scaling $\Gamma$ by a factor of $|v u|^{-1}$, we may assume w.l.o.g.\ that $|v u|=1$ and that 
$|v w|=q\in[1,\sqrt{2}]$. By the law of cosines, $|u w|^2= 1+ q^2 -2q \cos\alpha$. As $\Gamma$ is an empty-ply drawing, the vertex $v$ does not belong to the open disk centered at $w$. Hence $|u w| \ge \frac{q}2$.

From the above reasoning it follows that $q$ should satisfy the quadratic inequality $ \frac{q^2}4 \le 1+ q^2 -2q \cos\alpha$, which yields that either $q \leq \frac{4\cos\alpha - \sqrt{16\cos^2\alpha-12}}{3}$ or $q \geq \frac{4\cos\alpha + \sqrt{16\cos^2\alpha-12}}{3}$.
This contradicts the fact that $q\in[1,\sqrt{2}]$, because:
$4\cos\frac{2\pi}{13} - \sqrt{16\cos^2\frac{2\pi}{13}-12}\doteq 2.8 < 3$
and
$4\cos\frac{2\pi}{13} + \sqrt{16\cos^2\frac{2\pi}{13}-12}\doteq 4.27 > 4.24 \doteq 3\sqrt{2}$. This concludes the proof of the theorem.
\end{proof}

Note that $K_{1, 24}$ admits an empty-ply drawing with only two different lengths of edges (see Fig.~\ref{fig:S24}) and so the degree bound provided in Theorem~\ref{th:empty-ply-max-degree} is tight.

% ------------------------------

\section{Graph Classes with and without Empty-Ply Drawings}
\label{sec:negative-results}

%Here we consider what classes of graphs admit empty-ply drawings and what classes do not.

\subsection{Complete Graphs}

%We prove that $K_7$ is the largest complete graph admitting an empty-ply drawing.

\begin{theorem}\label{thm:k8}
Graph $K_n$ admits an empty-ply drawing if and only if $n \leq 7$.
\end{theorem}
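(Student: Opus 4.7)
The proof splits into the easy ``if'' direction and the harder ``only if''. For the ``if'' direction, I would first note that empty-ply is preserved under vertex deletion, since removing a vertex can only shrink the remaining ply-disks. Hence it suffices to construct an empty-ply drawing of $K_7$. A natural candidate is the six vertices of a regular hexagon of unit radius together with a seventh vertex infinitesimally perturbed from the centre, so that the three antipodal pairs of hexagon vertices are no longer collinear with the centre vertex. A direct check then confirms empty-ply: each hexagon vertex has ply-disk of radius $1$ (half of the long diagonal) and the centre has ply-disk of radius close to $\tfrac12$, and no other vertex lies in the interior of any of the seven open disks.

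For the ``only if'' direction I would assume for contradiction an empty-ply drawing of $K_8$ and derive a geometric impossibility by a local disk/angle packing at a well-chosen vertex. Let $u$ be a vertex incident to a globally shortest edge, of length $d$. Since $K_8$ is complete, $u$ is adjacent to the other seven vertices, and by Lemma~\ref{le:ratio-incident-edges} the distances $p_i:=|uu_i|$ all lie in $[d,2d]$, so the seven neighbours sit in the annulus $d\le |pu|\le 2d$ about $u$. Writing $\alpha_{ij}$ for the angle at $u$ between two neighbours $u_i,u_j$, the empty-ply condition applied at $u_j$ gives $|u_iu_j|\ge \tfrac12 \max(p_i,p_j)$, which combined with the law of cosines yields, for $p_i\le p_j$, the bound $\cos\alpha_{ij} \le \frac{p_i^2+3p_j^2/4}{2\,p_ip_j}$. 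Moreover the induced subdrawing on $\{u_1,\dots,u_7\}$ is itself an empty-ply drawing of $K_7$, so Lemmas~\ref{le:ratio-incident-edges} and~\ref{le:ratio} apply at every $u_i$ as well, which in turn constrains the pairwise distances among the $u_i$ from both sides.

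A case analysis that buckets the seven radii $p_i$ into, say, $[d,\sqrt 2\,d]$ and $[\sqrt 2\,d,2d]$ would place at least four neighbours in one bucket, and then combining angular pigeonholing at $u$ with the cosine inequality and the induced-$K_7$ constraints should force some pair of neighbours to violate either the local pairwise bound $|u_iu_j|\ge \tfrac12\max(p_i,p_j)$ or the diameter upper bound inherited from the induced $K_7$. The main obstacle is tightness: the angular law-of-cosines argument alone (as used in Theorem~\ref{th:empty-ply-max-degree}) only forbids degree above $24$, and in isolation it allows roughly a dozen neighbours around $u$. The critical extra ingredient is the empty-ply structure of the induced $K_7$ on the seven neighbours of $u$; combining this global constraint with the local annular one is precisely what should make the bound drop all the way from $24$ to $7$ exactly when $n$ jumps from $7$ to $8$.
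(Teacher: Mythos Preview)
Your ``if'' direction is essentially the paper's construction (regular hexagon plus centre, see Fig.~\ref{fig:k7}), but the perturbation you add is actually harmful, not harmless: moving the centre by any $\varepsilon>0$ puts it strictly inside the open ply-disk (of radius~$1$) of the nearest hexagon vertex, destroying the empty-ply property. The unperturbed configuration is the correct one; the resulting edge overlaps are permitted (the paper's figure caption explicitly notes them), and with open disks the centre sitting on the boundaries causes no problem.

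For the ``only if'' direction your approach is genuinely different from the paper's and, as you yourself flag, incomplete. The paper anchors at the \emph{longest} edge $(x_1,x_2)$, normalises it to length~$2$, and uses the radius-$\sqrt{2}$ circles around $x_1,x_2$ to partition the feasible lune into eight explicit cells $A^\pm,B^\pm,C^\pm,D^\pm$; the argument is then a concrete case analysis bounding how many vertices each cell and each combination of cells can hold. Your plan instead anchors at a \emph{shortest} incident edge, puts the seven neighbours in an annulus $[d,2d]$, and hopes that the law-of-cosines inequality together with ``the induced $K_7$ being empty-ply'' will prune the count from roughly a dozen down to seven. The missing piece is precisely that pruning: you never say which constraint on the induced $K_7$ you intend to invoke, and the obvious ones (Lemmas~\ref{le:ratio-incident-edges} and~\ref{le:ratio}) only give pairwise bounds of the same shape you already have at $u$, so they do not obviously tighten the angular count. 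Without a concrete mechanism for exploiting the global $K_7$ structure, the argument stalls at the same place the degree-$24$ proof does. The paper's region decomposition is less elegant but supplies exactly the combinatorial leverage your sketch is lacking.
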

\begin{proof} (sketch) 
%To prove that $K_8$ does not admit an empty-ply drawing, suppose
For a contradiction, suppose that $K_8$ has an empty-ply drawing $\Gamma$. Let $(x_1,x_2)$ be the longest edge of $\Gamma$, w.l.o.g. having length $2$; assume that $x_1$ and $x_2$ lie on an horizontal line $l$. 
Since $(x_1,x_2)$ is the longest edge, the remaining six vertices lie in the intersection of two disks centered at $x_1$ and $x_2$, respectively, with radius $2$; also, by Lemma~\ref{le:ratio-incident-edges}, they lie outside the two disks centered at $x_1$ and $x_2$ with radius $1$; see Fig.~\ref{fig:BasicSplitABCD}. This defines two closed regions in which these vertices lie: one above $l$ and one below.

\begin{figure}[tb]
	\centering
	\subfloat[\label{fig:k7}]{
		\includegraphics[height=3cm,page=1]{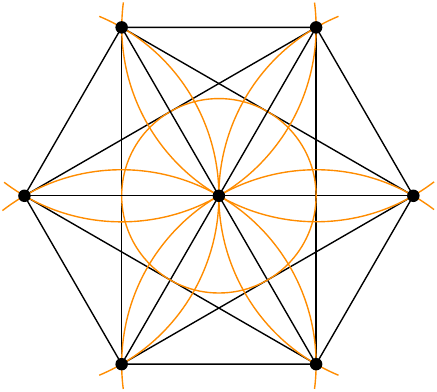}
	}\hfil
	\subfloat[\label{fig:BasicSplitABCD}]{
		\includegraphics[height=3.2cm]{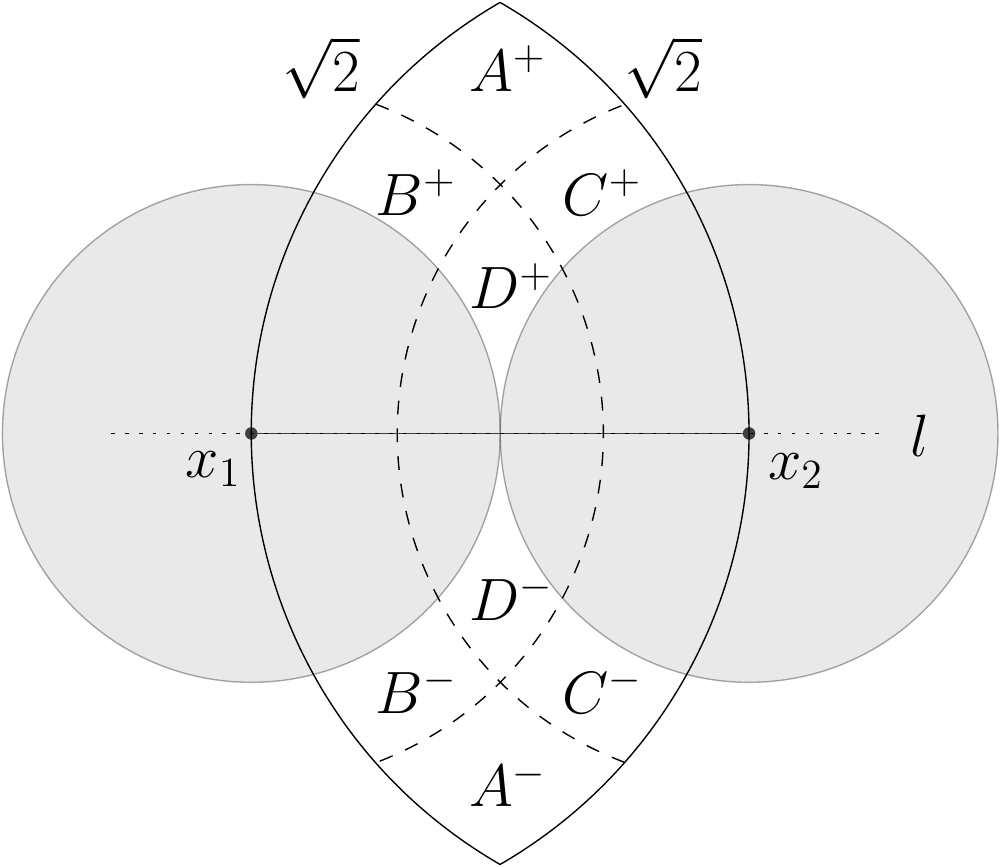}
	}
	\caption{(a) Empty-ply drawing $K_7$; note that there are edges drawn on top of each other. (b) Partition of the region where the vertices of $K_8$ can be placed.
	}\label{fig:complete-graphs-first}
\end{figure}

Using two circles centered in $x_1$ and $x_2$ with radius $\sqrt{2}$, we partition each of these two regions into four closed subregions, called $A^+,B^+,C^+,D^+$ and $A^-,B^-,C^-,D^-$, where the apex $^+$ or $^-$ indicates the region above or below $l$, respectively. Namely, any point in the interior of $A^+ \cup A^-$ (of $D^+ \cup D^-$) has distance larger (smaller) than $\sqrt{2}$ from both $x_1$ and $x_2$; while any point in the interior of $B^+ \cup B^-$ (of $C^+ \cup C^-$) has distance smaller (larger) than $\sqrt{2}$ from $x_1$ and distance larger (smaller) than $\sqrt{2}$ from $x_2$.

We show that any placement of the six remaining vertices in these regions leads to a contradiction.
 We denote by $|X^y|$, with $X\in\{A,B,C,D\}$ and $y\in\{+, -\}$, the number of vertices in $X^y$.
First note that each region can contain at most one vertex, except for $D^+$ and $D^-$, which may contain two vertices.
In fact, if we place any vertex $w$ in a region $X^y$, with $X \in \{A,B,C\}$ and $y \in \{+,-\}$, then the ply-disk $D_w$ of $w$ (defined, at least by the distance to $x_1$, $x_2$) covers the entire region $X^y$.
Regions $D^+$ and $D^-$, on the other hand, have area with height $1$ and width $0.5$. Let $w \in D^+$ be the point at distance $\sqrt{2}$ from both $x_1$ and $x_2$ and $D_w$ be its ply-disk. Then, 
set $D^+ \setminus D_w$ defines an area with diameter at most $\frac13$
and it is not sufficient to place more than one vertex, since the ply disks would have at least a radius $0.5$.

Combining the placement of the vertices in different regions, we can use similar arguments to prove that $|D^+ \cup D^-| \leq 3$ 
 and $|A^+ \cup A^-| \leq 1$.
Also, if $|A^+|=1$ (resp. $|A^-|=1$), then $|D^-|\leq1$ (resp. $|D^+|\leq1$).
 Thus, if $|A^+\cup A^-| = 1$ then  $|D^+\cup D^-| \leq 2$.
Also, if $|A^+|=1$ (resp. $|A^-|=1$) and $|B^-|=1$ (resp. $|B^+|=1$) then either $|B^+|=0$ or $|C^+|=0$ (resp. $|B^-|=0$ or $|C^-|=0$), i.e., $|B^+ \cup C^+| \leq 1$ (resp. $|B^- \cup C^-| \leq 1$).
 By symmetry, if $|A^+|=1$ (resp. $|A^-|=1$) and $|C^-|=1$ (resp. $|C^+|=1$) then either $|B^+|=0$ or $|C^+|=0$ (resp. $|B^-|=0$ or $|C^-|=0$), i.e., $|B^+ \cup C^+| \leq 1$ (resp. $|B^- \cup C^-| \leq 1$).
Hence, if $|A^+ \cup A^-| = 1$, the other regions cannot contain 5 vertices.

The final case where $|A^+ \cup A^-| = 0$ 
directly implies the claim for $K_9$. To prove this for $K_8$ we can see that
 if $|B^-|=1$ and $|C^+|=1$ (resp. $|B^+|=1$ and $|C^-|=1$), then $|D^+\cup D^-| \leq 1$, which again leads to a contradiction.
To conclude the proof, we present an empty-ply drawing for $K_7$ in Fig.~\ref{fig:k7}.
We strongly believe that this drawing is unique.
\end{proof}

\subsection{Complete bipartite graphs}

We now consider complete bipartite graphs. For proof-by-picture of the next theorem see Fig.~\ref{fig:S24} and Figs.~\ref{fig:k-2-12}-\ref{fig:k-4-6}.

\begin{figure}[tb!]
	\centering
	\subfloat[\label{fig:k-2-12}]{
		\includegraphics[height=3cm]{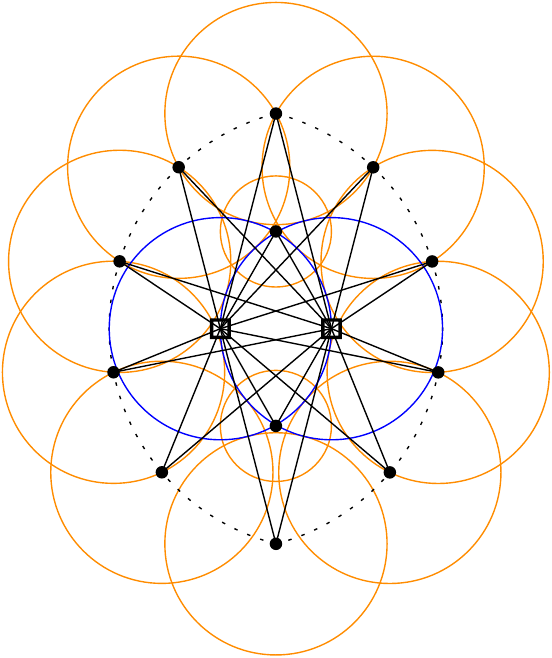}
	}\hfil
	\subfloat[\label{fig:k-3-9}]{
		\includegraphics[height=3cm]{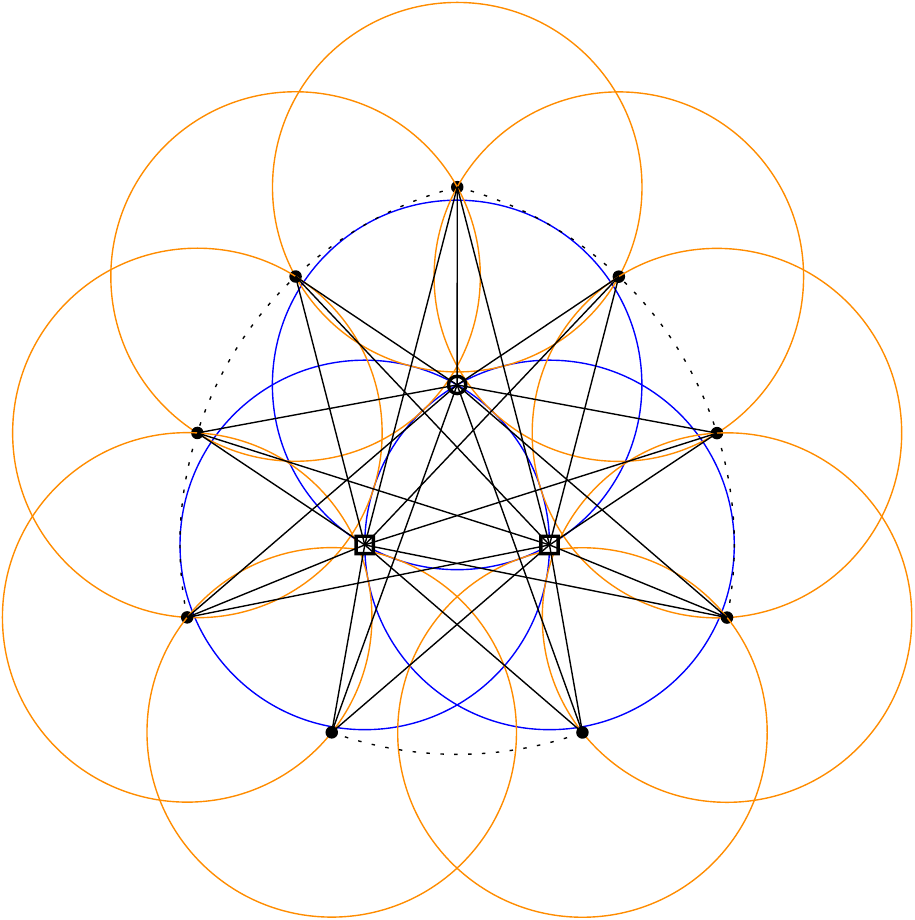}
	}\hfil
	\subfloat[\label{fig:k-4-6}]{
		\includegraphics[height=3cm]{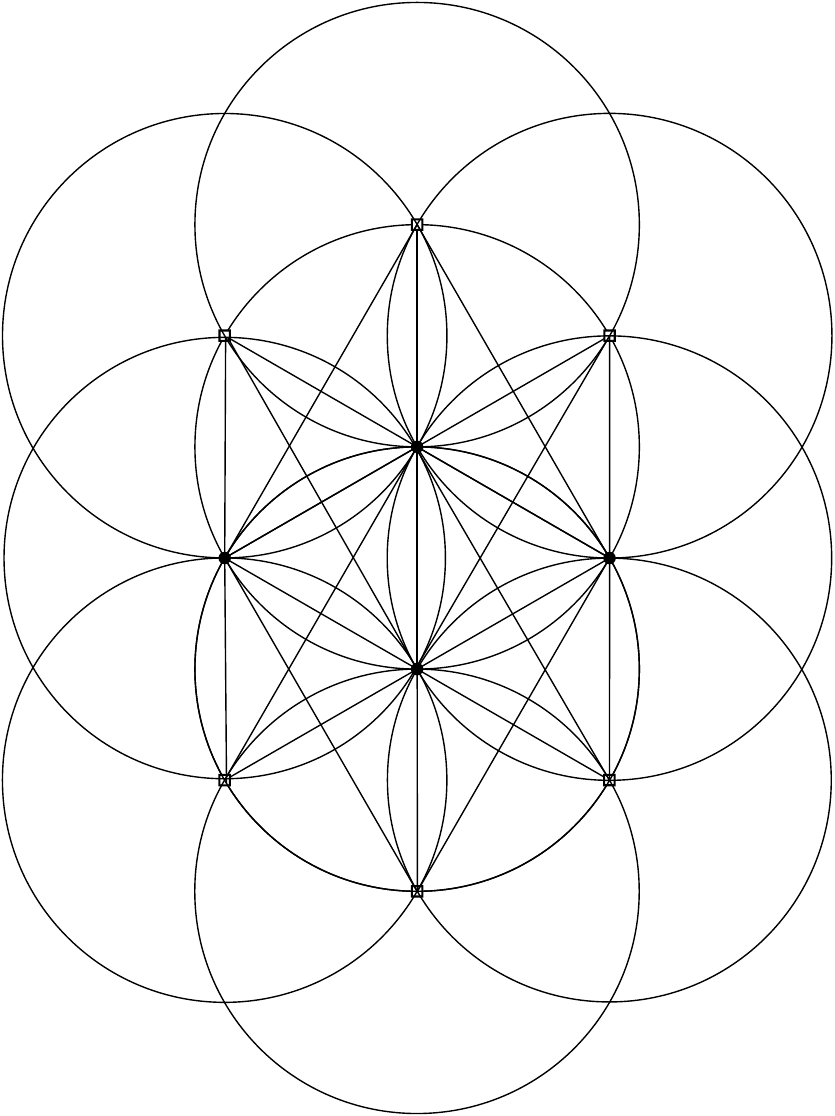}
	}
	\caption{Empty-ply drawing of (a) $K_{2,12}$, (b) $K_{3,9}$, and (c) $K_{4,6}$. Note that the drawing of $K_{4,6}$ has edges drawn on top of each other. 
}
	\label{fig:complete-graphs}
\end{figure}

\begin{theorem}\label{thm:complete-bipartite}
	Graphs $K_{1,24}$, $K_{2,12}$, $K_{3,9}$, and $K_{4,6}$ admit empty-ply drawings.
\end{theorem}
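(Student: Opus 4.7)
The plan is to establish the theorem constructively by exhibiting an explicit drawing for each of the four graphs and verifying, in each case, that every ply-disk is empty. By Lemma~\ref{le:ratio-incident-edges}, the edges incident to a common vertex in an empty-ply drawing differ in length by a factor of at most two, so every construction I use involves at most two distinct edge lengths and places vertices on one or two concentric circles around each hub, exploiting the reflective or rotational symmetries of the graph.

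For $K_{1,24}$ I would place the center $v$ at the origin, $12$ ``long'' leaves at radius $2$ at angles $k\cdot\pi/6$ ($k=0,\dots,11$), and $12$ ``short'' leaves at radius $1$ at the interleaved angles $\pi/12+k\cdot\pi/6$. The ply-disks then have radii $1$ (center and long leaves) and $\tfrac12$ (short leaves). By the law of cosines, two short leaves at angular separation $\pi/6$ sit at Euclidean distance $2\sin(\pi/12)>\tfrac12$; two long leaves at distance $4\sin(\pi/12)>1$; and a short--long pair at angular separation $\pi/12$ at distance $\sqrt{5-4\cos(\pi/12)}>1$. Hence every pairwise constraint between two leaves is satisfied, and since each short leaf is at distance $1$ from the center and each long leaf at distance $2$, the center also lies outside every leaf-disk.

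For $K_{2,12}$, $K_{3,9}$, and $K_{4,6}$ the template is analogous, with the two partition classes playing the roles of ``hubs'' and ``clients''. I would place the hubs symmetrically (two antipodal points; an equilateral triangle; a small rectangle, respectively) and distribute the clients on one or two symmetric orbits around the hubs, so that every client is at the same distance from each hub to which it is connected. Choosing the positions so that all edge-lengths take at most two values reduces the empty-ply check to a finite list of trigonometric inequalities, one per orbit of vertex-pairs under the symmetry group of the construction, which can be verified exactly as for $K_{1,24}$.

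The main obstacle is that all four graphs sit essentially at the boundary of what Theorem~\ref{th:empty-ply-max-degree} and the disk-packing estimate of Lemma~\ref{lem:area} allow: the pairwise distance constraints leave only narrow feasible regions for the construction parameters (the long-to-short edge ratio and the angular offsets between orbits), and tuning them so that every ply-disk is simultaneously empty is the delicate step. Exploiting the full symmetry of each configuration is what keeps the number of inequalities small and makes the verification tractable, as made manifest by Figs.~\ref{fig:S24} and~\ref{fig:k-2-12}--\ref{fig:k-4-6}.
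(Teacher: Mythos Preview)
Your approach is exactly the paper's: the theorem is proved constructively by exhibiting symmetric drawings (the paper simply refers to Figs.~\ref{fig:S24} and~\ref{fig:k-2-12}--\ref{fig:k-4-6}), and your explicit two-radius construction for $K_{1,24}$ with the accompanying distance checks is correct and matches the drawing in Fig.~\ref{fig:S24}. Your sketches for $K_{2,12}$, $K_{3,9}$, and $K_{4,6}$ are at the same level of detail as the paper's proof-by-picture; the only imprecision is the phrase ``every client is at the same distance from each hub'', which cannot hold for six clients and four non-concyclic hubs in the plane---but since you immediately relax to ``at most two edge-lengths'' this does not affect the argument.
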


Note that Theorem~\ref{th:empty-ply-max-degree} implies that $K_{1,25}$ does not admit any empty-ply drawing, and hence this is true for any complete bipartite graph $K_{n,m}$ with $n$ or $m$ greater than $24$. This leaves a wide open gap between the upper bounds on the values of $n$ and $m$, and the lower bounds from Theorem~\ref{thm:complete-bipartite}. 

For $K_{2,m}$, we give a negative result for $m \geq 15$ in the following theorem
based on arguments similar to those in Theorem~\ref{thm:k8}. 

\begin{theorem}\label{thm:bipartiteK2X}
	Graph $K_{2,m}$ with $m \geq 15$ does not admit any empty-ply drawing.
\end{theorem}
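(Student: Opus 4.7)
The plan is to obtain a contradiction by adapting the partition-and-count strategy of Theorem~\ref{thm:k8}. Suppose an empty-ply drawing $\Gamma$ of $K_{2,m}$ with $m \geq 15$ exists, and denote the two degree-$m$ vertices by $x_1, x_2$ and the remaining ones by $v_1, \ldots, v_m$. Since every edge of $K_{2,m}$ is incident to $x_1$ or to $x_2$, the longest edge of $\Gamma$ is incident to one of them; without loss of generality it is $(x_1, v_1)$, and by scaling we may assume $|x_1 v_1| = 2$. Lemma~\ref{le:ratio-incident-edges} applied at $x_1$ then forces $|x_1 v_i| \in [1, 2]$ for every $i$, so all $v_i$ lie in the annulus $R_1$ around $x_1$ with radii $1$ and $2$. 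Letting $L_2$ denote the length of the longest edge at $x_2$, the same lemma applied at $v_1$ gives $L_2 \in [1, 2]$, hence every $v_i$ also lies in the annulus $R_2$ around $x_2$ with radii $L_2/2$ and $L_2$. The empty-ply condition $x_2 \notin D_{x_1}$ yields $|x_1 x_2| \geq 1$, while the triangle inequality through $v_1$ gives $|x_1 x_2| \leq 2 + L_2 \leq 4$.

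Next I would place $x_1$ and $x_2$ on a horizontal line $l$ and, mirroring Theorem~\ref{thm:k8}, split $R_1 \cap R_2$ into upper and lower halves by $l$, and further subdivide each half into four sub-regions using the circles of radius $\sqrt{2}$ around $x_1$ and of radius $L_2/\sqrt{2}$ around $x_2$ (the geometric means of the annular radii at each center). This yields eight sub-regions analogous to the $A^\pm, B^\pm, C^\pm, D^\pm$ of Theorem~\ref{thm:k8}, classified by whether a point is ``far'' or ``close'' to $x_1$ and to $x_2$. In each sub-region I would bound the number of $v_i$'s that fit by combining two ingredients: a packing bound, via Lemma~\ref{lem:area} together with the fact that the half-radius disks $D'_{v_i}$ have radius at least $1/4$ and are pairwise disjoint, and an angular constraint in the spirit of the proof of Theorem~\ref{th:empty-ply-max-degree}: the law of cosines forces any two $v_i, v_j$ with comparable distances from $x_1$ (or from $x_2$) to subtend an angle of about $30^\circ$ at that center. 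Summing the per-region bounds over the eight sub-regions should give $m \leq 14$, contradicting $m \geq 15$.

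The main obstacle is that, unlike in Theorem~\ref{thm:k8}, where $(x_1, x_2)$ itself was the longest edge and so $|x_1 x_2|$ and the outer radii of $R_1, R_2$ were all equal to $2$, here $|x_1 x_2|$ and $L_2$ are free parameters ranging over $[1, 4]$ and $[1, 2]$, respectively. Consequently, the shape of $R_1 \cap R_2$ and the intersection pattern of the auxiliary circles vary with these parameters, so the per-region counts must be verified through a case analysis on $|x_1 x_2|$ and $L_2$. I expect the hardest case to be when $|x_1 x_2|$ is close to $1$, because then $x_2$ sits essentially on the inner circle of $R_1$ and the ``close-close'' sub-region (the analogue of $D^\pm$) changes shape qualitatively, potentially admitting extra vertices. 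The looseness of the resulting bound $m \leq 14$ relative to the construction of $K_{2,12}$ in Theorem~\ref{thm:complete-bipartite} suggests that some per-region estimates are not tight, which is acceptable here.
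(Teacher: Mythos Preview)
Your plan diverges from the paper's argument in two structural respects, and it also has a concrete gap that would block the $K_8$-style region count from going through.

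\textbf{Differences from the paper.} You normalise the longest edge to length~$2$, leaving both $|x_1x_2|$ and $L_2$ as free parameters. The paper instead normalises $|x_1x_2|=1$; then empty-ply forces $r_{x_1},r_{x_2}\le 1$, so every $v_i$ lies in the intersection of the two radius-$2$ disks, and there is essentially a single size parameter. More importantly, the paper does \emph{not} use an eight-region $A^\pm,B^\pm,C^\pm,D^\pm$ partition. Its decomposition is into just two concentric bands (an inner one $R_{uv_1}$ and an outer one $R_{uv_2}$, split at the $\sqrt{2}m$-circles around each centre), and the counting is purely angular: two $v_i$ in the same band around the same centre must subtend at least $\arccos\tfrac{5}{4\sqrt 2}\approx 27.89^\circ$ there. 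Sector counting gives at most $10$ vertices in the outer band and $6$ in the inner, and a short trade-off argument (outer $=10\Rightarrow$ inner $\le 4$; outer $=9\Rightarrow$ inner $\le 5$) caps the total at~$14$. The ``$\approx 30^\circ$'' ingredient you mention is exactly this lemma, but in the paper it is the whole engine rather than a supplement to a quadrant partition.

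\textbf{The gap.} Several of the tight per-region and cross-region bounds in the $K_8$ proof that you want to reuse---e.g.\ $|A^+\cup A^-|\le 1$, or the incompatibility of $|A^+|=1$ with $|D^-|=2$---hinge on the remaining vertices being \emph{pairwise adjacent}, so that any two of them are at distance at most the longest edge. In $K_{2,m}$ the $v_i$ form an independent set; two of them may sit more than $2$ apart without any contradiction, and those refinements collapse. At the same time, when $|x_1x_2|$ is near~$1$ (which is exactly the regime realised by the empty-ply drawing of $K_{2,12}$), the ``far--far'' regions $A^\pm$ are much larger than in the $K_8$ picture and certainly hold more than one vertex each. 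So the eight-region scheme does not inherit usable per-region caps from Theorem~\ref{thm:k8}, and you have not supplied replacements. The paper's two-band angular count sidesteps precisely this issue and avoids the two-parameter case analysis you anticipate.
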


\subsection{Trees of Bounded Degree}

A \emph{$d$-ary tree $T$ with $k$ levels} is a rooted tree where all vertices at distance less than $k$ from the root have at most $d$ children and the remaining ones are leaves. If all the non-leaf vertices have exactly $d$ children, we say that $T$ is \emph{complete}. Any tree with maximum degree $\Delta$ is a subtree of a $(\Delta-1)$-ary tree.

Note that binary trees admit empty-ply drawings, as the drawings with ply $2$ constructed by the algorithm in~\cite{ply-original} are empty-ply drawings. %For larger degree, we can 
Applying  Corollary~\ref{cor:emply-ply-ordinary-ply} 
to the class of complete $10$-ary trees (which do not admit drawing with constant ply~\cite{DBLP:conf/gd/AngeliniBBH0KSV16}) shows that they do not admit empty-ply drawing. But we can prove something stronger.
%We show that even $4$-ary trees do not admit  empty-ply drawing.

\begin{theorem}\label{thm:empty-4-ary}
For sufficiently large $k$, the complete $4$-ary tree $T_k$ with $k$ levels admits no empty-ply drawing.
\end{theorem}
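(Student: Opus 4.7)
The plan is to derive a contradiction for sufficiently large $k$ via a disk-packing argument. Assume $\Gamma$ is an empty-ply drawing of $T_k$ with root $v_0$, and set $r:=r_{v_0}$. Iterating Lemma~\ref{le:ratio} along each root-to-descendant path gives $r_v \ge r/2^d$ for every vertex $v$ at depth $d$. Since the shrunken disks $\{D'_v\}$ of radius $r_v/2$ are pairwise disjoint (as observed just before Lemma~\ref{lem:area}) and each $D'_v$ lies inside the union $U := \bigcup_u D_u$, summing their areas level by level yields
$$|U| \;\ge\; \sum_v \pi (r_v/2)^2 \;\ge\; \sum_{d=0}^{k} 4^d \cdot \pi\Big(\frac{r}{2^{d+1}}\Big)^2 \;=\; (k+1)\,\frac{\pi r^2}{4},$$
so the union of the ply-disks has area $\Omega(k\, r^2)$.

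Next, I would seek an upper bound on $|U|$ that contradicts this lower bound for large $k$. A direct diameter-based bound (edge lengths may at worst double per step, by Lemma~\ref{le:ratio-incident-edges}) yields only $O(4^k r^2)$, which matches rather than contradicts the lower bound. The plan is to sharpen this via a recursive analysis on subtrees: for each internal vertex $v$, applying the trigonometric reasoning of Theorem~\ref{th:empty-ply-max-degree} to the fan of $5$ neighbors of $v$ (its $4$ children and its parent) imposes angular and distance constraints that, accumulated across subtrees, should force an area-growth factor strictly greater than $4$ per level. Equivalently, one aims to show that the radius of the smallest disk containing the drawing of a depth-$d$ subtree rooted at $v$ grows strictly faster than $2^d \cdot r_v$ for large $d$.

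The main obstacle is making this recursive step strict: around a single vertex of degree~$5$ there is substantial local slack (indeed, Theorem~\ref{th:empty-ply-max-degree} permits up to~$24$ neighbors), so a purely local argument is insufficient and the analysis must aggregate many small slacks across the tree. Two natural aggregation strategies are (i)~amortizing angular slack along root-to-leaf paths and showing that the cumulative ``waste'' cannot be absorbed within a single rotation around any ancestor, or (ii)~partitioning $U$ according to subtree membership and applying Lemma~\ref{lem:area} to each piece separately, so that the constant $4$ in that lemma is effectively improved once the tree structure is taken into account. In either case, the delicate bookkeeping needed to turn the per-vertex empty-ply trigonometric constraints into a per-level area-growth factor strictly exceeding~$4$ is the heart of the proof.
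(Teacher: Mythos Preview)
Your lower bound $|U|\ge (k+1)\pi r^2/4$ is correct, but the argument stalls exactly where the real work begins: you need an upper bound on~$|U|$ (or on some containing region) of order $o(k\,r^2)$, and as you yourself observe, the diameter bound gives only $O(4^k r^2)$. The two ``aggregation strategies'' you sketch are not carried out, and there is no reason to expect them to succeed: the angular slack around a degree-$5$ vertex is enormous (Theorem~\ref{th:empty-ply-max-degree} allows up to~$24$ neighbors), so local trigonometric constraints will not force a per-level area-growth factor strictly above~$4$. In short, you have identified the obstacle but not overcome it.

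The paper's proof does not try to bound the global~$|U|$ at all. The missing idea is a \emph{pigeonhole partition of the $4^k$ leaves} into $O(k)$ buckets according to their ply-disk radii. Some bucket $L_i$ then contains $\Omega(4^k/k)$ leaves whose radii are all comparable, say of order $2^{2i-k}$; and the path-length bounds (Lemmas~\ref{le:ratio-incident-edges} and~\ref{le:ratio}) confine every leaf of $L_i$ to a disk of radius roughly $2^{i}$ around the root. Now one compares the total area of the ply-disks in $L_i$ to the area of this small containing disk and invokes Lemma~\ref{lem:area} to get a contradiction whenever $i\gtrsim\log k$; a second case, using also the ancestors of~$L_i$, handles small~$i$. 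The crucial point is that the contradiction comes from confining a \emph{large subfamily} of disks to a \emph{small} region---something a global estimate on $|U|$ cannot provide, since $|U|$ may genuinely be of order $4^k r^2$.
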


\begin{proof}
Assume without loss of generality that $k$ is even and that $T_k$ has an empty-ply drawing $\Gamma$ where the ply-disk of the root $v_0$ has unit radius. We announce that for simplicity the following estimates are not stated in the tightest form.
We will make use of the following consequences of Lemmas~\ref{le:ratio-incident-edges} and \ref{le:ratio}:

\begin{claim}[A] 
If a ply-disk of a vertex $u$ in $\Gamma$ has radius at least $2^i$, then all the leaves of the subtree rooted at $u$ have radii at least $2^{2i-k}$.
\end{claim} 

\begin{proof}
Since $r_u \geq 2^i$, the distance between $u$ and the root is greater than $i$ by Lemma \ref{le:ratio}. Thus the path from $u$ to its leaves has length at most $k-i$.
\end{proof}

\begin{claim}[B] 
If $v$ is a leaf whose ply-disk has radius $r_v\in (2^{2i-k},2^{2i-k+2}]$, with $i\in\{0,k-1\}$, then its Euclidean distance from the root is $|v_0 v|\leq 2^{i+2}$.
\end{claim} 

\begin{proof}
Let $v_0,v_1,\dots,v_k=v$ be the path from the root $v_0$ to $v_k$ in $T_k$.
Since $r_{v_0}=1$, edge $(v_0,v_1)$ has length at most $2$. Also, by Lemma~\ref{le:ratio-incident-edges}, the lengths of the edges can grow at most by a factor $2$ along the path; hence, $|v_{j-1} v_j|\leq 2^j$ for $j\in\{1,\dots,i\}$. 
If we traverse the path in the opposite direction from $v_k$, whose ply-disk has radius at most $2^{2i-k+2}$, we get analogously that $|v_{k-j+1} v_{k-j}|\leq 2^{j+2i-k+2}$ for $j\in\{1,\dots,k-i\}$. 

The total distance is thus bounded by $|v_0 v_k| \leq |v_0 v_1|+|v_1 v_2|+\dots+|v_{k-1} v_k|=$ $\sum_{j=1}^{i} |v_{j-1} v_j| + \sum_{j=1}^{k-i} |v_{k-j+1} v_{k-j}|
\leq \sum_{j=1}^{i} 2^j + \sum_{j=1}^{k-i}  2^{j+2i-k+2} = 2^{i+1}-2 +2^{i+1}- 2^{3+2i-k} \ \leq \ 2^{i+2}$, and the statement follows.
\end{proof}

We now distribute the $4^k$ leaves to $k$ sets $L_0,\dots,L_{k-1}$ (all logarithms binary):

\begin{itemize}
\item[a)] if $i\geq 3\log k$ then $L_i=\{v: r_v\in (2^{2i-k},2^{2i-k+2}]\}$
\item[b)] if $i< 3\log k$ then $L_i=\{v: r_v \leq 2^{6\log k-k}$  and whose largest predecessor $u$ has radius  $r_u\in [2^i,2^{i+1})\}$
\end{itemize}

In the first case, the radii of the leaves in $L_i$ are sufficient to obtain a good bound on enclosing area of the disks in $L_i$. In the other case, the radius on the enclosing disk for $L_i$ mostly depends on the presence of predecessors that are larger than the root disk.

Some of the sets are empty by the definition, but it is irrelevant to our further deductions. By pigeonhole principle, either some $L_i$, $i\geq 3\log k$ satisfies $|L_i| \ge \frac{4^k}{2 k}$ or some $L_i$, $i<3\log k$ satisfies $|L_i| \ge \frac{4^k}{6\log k}$, since $\frac{k+1 - 3 \log k}{2 k}+\frac{3\log k}{6\log k}\leq 1$ when $k\ge \sqrt[3]{2}$.

The rough idea behind the distinction of these two cases is that in case a), when the diameters of leaves are sufficiently large, it suffices to consider twice smaller proportion than the uniform pigeonhole principle would use and show that the total area of ply-disks corresponding to leaves of $L_i$ is still too large for an empty-ply drawing $\Gamma$. In case b) we use a slightly more elaborate argument considering also the area of the predecessors of the vertices in $L_i$.
\paragraph{Case a)} Assume that for some $i\geq 3\log k$ it holds that $|L_i| \ge \frac{4^k}{2 k}$. The total area occupied by the disks in $L_i$ is at least $\frac{4^k}{2 k} \pi 4^{2i-k}=\frac{8^i\pi}{2 k}$.
By Claim (B), for every $v\in L_i$ it holds that $|v_0 v|\leq 2^{i+1}$, hence all 
ply-disks of $L_i$ must be contained in a disk centered at the root of radius $2^{i+1}+2^{2i-k+2} \leq 5\cdot 2^i$, since for $i\in\{0,\dots,k\}: i>2i-k$. In particular this disk has area at most $25 \pi 4^i$.

In order to apply Lemma~\ref{lem:area}, it suffices to choose $k$ large enough such that $\frac{8^i\pi}{2k} > 4 \cdot 25 \pi 4^i$ for all $i \geq 3\log k$, i.e., $k > \sqrt[5]{200} 
\doteq 2.9$.
\paragraph{Case b)} Assume that for some $i < 3\log k$ it holds $|L_i| \ge \frac{4^k}{6\log k}$. Any $v\in L_i$ has radius smaller than $2^{6\log k-k}$, as otherwise we would be in case a).
To obtain the maximum distance between $v$ and the root $v_0$ we argue that the first $3\log k$ disks along the path from $v_0$ to $v$ may have radius at most $2^{i+1}$. Analogously as in the proof of Claim (B), the $j$-th predecessor of $v$ has radius at most $2^{6\log k-k+j}$. An upper bound of $|v_0 v|\le 2^{i+2} (3\log k + 1)$ is obtained by summing up. 

We now consider the subtree $T'$ of $T_k$ induced by the vertices of $L_i$ and all their predecessors. Note that the drawing of the entire tree $T'$ shall be contained within a disk of radius $2^{i+2} (3\log k + 1) + 2^{3\log k-k}$, i.e., in area at most $4^{i+3}\pi$.
On the other hand, by Claim (A), each of the leaves has radius at least $2^{2i-k}$ . Thus, their total area is at least $\frac{4^k}{6\log k}4^{2i-k}\pi=\frac{8^i\pi}{6\log k}$.

The number of parents of disks in $L_i$ is at least $\frac{4^{k-1}}{6\log k}$, each of radius at least $2^{i-1}$, hence they occupy area also bounded from below by $\frac{8^i\pi}{6\log k}$. Thus, all leaves in $L_i$ and all their $k-i$ predecessors occupy space at least $\frac{8^i\pi}{6\log k}(k-i) \ge \frac{8^i\pi k}{12 \log k}$.
Again, to apply Lemma~\ref{lem:area}, it suffices to choose $k$ large enough such that $\frac{8^i\pi k}{12 \log k} > 4 \cdot 64 \pi 4^i$ for all non-negative $i < 3\log k$ (in particular for $i=0$). A straightforward calculation verifies that the inequality holds e.g., for $k \ge 2^{16}$.  

For $k=2^{16}$ %at least 
one of the two cases applies, which concludes the proof.
\end{proof}

Theorem~\ref{thm:empty-4-ary} leaves open the question for $3$-ary trees. We remark that the algorithm for binary trees~\cite{ply-original} adopts a common drawing style: the orthogonal one with a shrinking factor of $1/2$; see also~\cite{efk-ogd-99}. We prove that this technique fails for $3$-ary trees, for any shrinking factor in $(0,1)$.

\begin{theorem}\label{thm:ternary-shrinking}
Rooted ternary trees do not admit empty-ply drawings constructed in orthogonal fashion with shrink factor $q$ for any $q\in(0,1)$, i.e., when the distance from a vertex to its children is $q$ times the distance to its parent.
\end{theorem}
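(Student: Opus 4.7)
The plan is to exhibit, in any orthogonal drawing of a sufficiently deep rooted ternary tree with shrink factor $q\in(0,1)$, a short ``clockwise spiral'' $v_0,v_1,\dots,v_5$ such that $v_5$ lies inside the open ply-disk of $v_1$, which contradicts the empty-ply condition. After rescaling and rotating, I assume the root $v_0$ is at the origin and a chosen child $v_1$ is placed at $(1,0)$. I then build the spiral inductively: given that the edge $v_{i-1}v_i$ points in direction $d$, let $v_{i+1}$ be the child of $v_i$ lying in the direction obtained by rotating $d$ by $90^\circ$ clockwise. Such a child always exists, because in an orthogonal drawing of a ternary tree the three children of a non-root vertex occupy exactly the three axis directions different from the parent direction; in particular, the clockwise-$90^\circ$ direction is always among them.

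Since the successive edge lengths along the spiral are $1,q,q^2,q^3,q^4$, a direct coordinate computation yields
\[
v_2=(1,-q),\quad v_3=(1-q^2,-q),\quad v_4=(1-q^2,-q+q^3),\quad v_5=(1-q^2+q^4,-q+q^3),
\]
so that $v_5-v_1=-(1-q^2)(q^2,q)$ and hence $|v_5-v_1|^2=q^2(1-q^2)^2(1+q^2)$. The longest edge incident to $v_1$ is its parent edge $v_0v_1$, of length $1$, since its three child-edges all have length $q<1$; thus the ply-disk of $v_1$ has radius exactly $\tfrac12$.

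It therefore remains to verify the polynomial inequality $q^2(1-q^2)^2(1+q^2)<\tfrac14$ for every $q\in(0,1)$. Setting $x=q^2$, this reduces to showing $f(x):=x-x^2-x^3+x^4<\tfrac14$ on $(0,1)$. Since $f(0)=f(1)=0$ and $f'(x)=(x-1)(4x^2+x-1)$, the unique critical point of $f$ inside $(0,1)$ is $x^*=(-1+\sqrt{17})/8\approx 0.390$, at which $f(x^*)\approx 0.202<\tfrac14$. Consequently $|v_5-v_1|<\tfrac12=r_{v_1}$, so $v_5$ lies strictly inside the open ply-disk of $v_1$, and the drawing is not empty-ply.

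The whole argument is elementary; the only step that is not fully mechanical is the polynomial maximisation above, which I expect to be the main obstacle in writing the proof carefully. It can be handled either by evaluating $f(x^*)$ in closed form, using the defining relation $4{x^*}^2=1-x^*$ to eliminate higher powers of $x^*$, or by a direct case analysis splitting $(0,1)$ around $x^*$ and invoking the monotonicity of $f$ on each piece.
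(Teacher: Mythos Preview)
Your argument is correct and in fact follows the same spiral idea as the paper, but with a pleasant simplification. The paper considers the infinite clockwise spiral $v_0,v_1,v_2,\dots$ and its limit point $w$, computes $|v_1w|=\bigl(q-\tfrac{q^3}{1-q^2}\bigr)\sqrt{1+q^2}$, and checks that this is strictly less than $\tfrac12$ for every $q\in(0,1)$; it then concludes that infinitely many spiral vertices fall inside the ply-disk of $v_1$. You truncate the same spiral at $v_5$ and show directly that $|v_1v_5|<\tfrac12$, which avoids the limiting argument altogether and immediately yields an explicit depth bound (a complete ternary tree of depth~$5$ already fails). The polynomial analysis you give is clean; note that your $|v_1v_5|^2=q^2(1-q^2)^2(1+q^2)$ and the paper's $|v_1w|^2=\tfrac{q^2(1-2q^2)^2(1+q^2)}{(1-q^2)^2}$ encode the same obstruction, with your version requiring only a one-variable maximisation on $(0,1)$ rather than a rational function. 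Both approaches rest on the same structural observation---that in an orthogonal drawing of the complete ternary tree each internal non-root vertex has a child in each of the three non-parent axis directions---so the clockwise-$90^\circ$ child always exists.
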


\subsection{Graph Squares}

The \emph{square} of a graph $G$ is the graph obtained from $G$ by adding an edge between each vertex and the neighbors of its neighbors.

\begin{theorem}\label{le:square}
Let $G^2$ be the square of a graph $G$. If $G$ admits a drawing with ply $1$, then $G^2$ admits an empty-ply drawing. Also, if $G$ is a subgraph of a triangular tiling, then $G^2$ admits an empty-ply drawing with ply at most $4$.
\end{theorem}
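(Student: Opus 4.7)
The plan is to argue that the ply-$1$ drawing of $G$ itself, after possibly translating its connected components far apart, is an empty-ply drawing of $G^2$. The first step is to establish the rigid consequence of ply~$1$: for any edge $(u,v)$, combining $r_u, r_v \ge |uv|/2$ (from the ply-disk definition) with $|uv| \ge r_u + r_v$ (from the disks being pairwise non-overlapping) forces $r_u = r_v = |uv|/2$. Propagating along edges, all vertices of any connected component share a common ply-disk radius $r$, and all edges of that component have length $2r$; after rescaling so that $r = 1$, every pair of distinct vertices of the component lies at Euclidean distance at least~$2$.

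Next, I would interpret this same drawing as a straight-line drawing of $G^2$. Every new edge of $G^2$ joins vertices at $G$-distance at most~$2$ and hence has length at most $2 + 2 = 4$, so the new ply-disk radius $R_v$ in $G^2$ satisfies $R_v \le 2$. Since every pairwise distance in the drawing is already $\ge 2 \ge R_v$, no other vertex lies in the interior of any $D_v$, and the drawing is empty-ply. If $G$ is disconnected, I would simply translate its components far enough apart so that cross-component distances also exceed the maximum $R_v$, preserving the property globally.

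For the second part, when $G$ is a subgraph of the triangular tiling, I would draw $G$ at the tiling vertices, giving a ply-$1$ drawing of $G$ in which all edges have length~$2$. By the argument above, this is an empty-ply drawing of $G^2$ with every $R_v \le 2$. The ply at a point $p$ is then bounded by the number of tiling vertices within open distance~$2$ of~$p$. I would locate $p$ in some closed triangle $T$ of the tiling with corners $a,b,c$: each corner is at distance at most~$2$ from~$p$, contributing at most~$3$ lattice points to the count. The only other candidates are the three ``across'' lattice points $a',b',c'$, obtained by reflecting each corner of~$T$ across the opposite edge, since every other lattice point of the tiling lies at distance at least~$2$ from every point of~$T$ --- in particular, the six ``outer'' lattice neighbors of $a,b,c$ outside this candidate set are each at distance exactly~$2$ from one corner of $T$ and farther from the rest of~$T$. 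Finally, $a',b',c'$ are pairwise at Euclidean distance~$4$, with pairwise midpoints equal to $c, b, a$ respectively, so the three open disks of radius~$2$ around them are pairwise disjoint and at most one can contain~$p$, yielding at most $3 + 1 = 4$. The main obstacle will be this last coordinate-level verification --- checking cleanly that all lattice points outside the six-element candidate set are at distance at least~$2$ from every~$p \in T$, and confirming the pairwise distance~$4$ between the across points.
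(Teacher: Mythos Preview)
Your approach is correct and matches the paper's argument in essence. Both proofs use the given ply-$1$ drawing itself as the drawing of $G^2$, observe that the new edges have length at most twice the common edge length of $G$, and conclude empty-ply from the resulting bound on the new ply-disk radii versus the pairwise vertex distances. For the triangular tiling part, both proofs reduce the ply bound to counting lattice points of the tiling inside an open disk whose radius equals the tiling edge length.

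Where you differ is only in the level of detail. The paper simply cites~\cite{ply-original} for the fact that a ply-$1$ drawing has all edges of equal length and all non-adjacent vertices at least that far apart, whereas you derive this directly from $r_u,r_v\ge |uv|/2$ together with $r_u+r_v\le |uv|$. You also handle disconnected $G$ explicitly by translating components, which the paper omits. For the second part, the paper merely asserts that the open unit disk around any non-lattice point contains at most four tiling vertices; your argument via the triangle $T$, its corners $a,b,c$, and the three ``across'' points $a',b',c'$ at pairwise distance $4$ actually supplies the missing verification. Your identified ``obstacle'' is routine: once you check that the six immediate outer neighbors of $a,b,c$ attain their minimum distance to $T$ at a corner (distance exactly~$2$), every remaining lattice point is strictly farther from the centroid of $T$ than these and hence also at distance $\ge 2$ from $T$.
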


\begin{proof}
Let $\Gamma$ be a straight-line drawing of $G$ with ply $1$. As proved in~\cite{ply-original}, all the edges of $G$ have the same length, say $1$, in $\Gamma$, and every two non-adjacent vertices are at distance at least $1$ from each other. Hence, adding the edges of $G^2 \setminus G$ to $\Gamma$ produces a drawing $\Gamma^2$ of $G^2$ in which each edge has length at most $2$. This implies that every ply-disk has radius at most $1$ in $\Gamma^2$, and thus $\Gamma^2$ is an empty-ply drawing. Note that $\Gamma^2$ may contain edge overlaps. 

For the second part of the statement, recall that if $G$ is a subgraph of a triangular tiling, then it admits a drawing $\Gamma$ in which all edges have the same length and all the angles are multiples of $\frac{\pi}{3}$. Hence, $\Gamma$ has ply $1$. Also the drawing $\Gamma^2$ obtained by adding the additional edges of $G^2 \setminus G$ to $\Gamma$ is an empty-ply drawing.  In this case, however, we can also prove that the ply of $\Gamma^2$ is at most $4$; recall that an upper bound of $5$ to the ply of $\Gamma^2$ is already implied by Corollary~\ref{cor:emply-ply-ordinary-ply}.

W.l.o.g.\ let the triangular tiling be of unit edge length. Consider the open disk of unit radius, which is centered at an arbitrary point $p$ on the plane. If $p$ is not a vertex of the triangular tiling, at most four vertices of the triangular tiling may fall in this disk.  
In the case where $p$ is a vertex of the triangular tiling, no other vertex of the tiling falls in the disk, but only on its boundary. Thus, any point $p$ can be internal to at most four ply-disks of the tiling vertices.
\end{proof}

% ------------------------------
\section{Ply and Vertex-Ply of Planar Drawings}
\label{sec:lower-bound}

In the original paper on the ply number it
 was observed that considering only plane graph drawings may prevent finding low ply non-plane drawings~\cite{ply-original}. 
In particular, for the class of \emph{nested-triangles} graphs 
the ``most natural'' planar drawing has ply $\Omega(n)$ (see Fig~\ref{fig:nested-triangles-natural}), while there exist non-planar drawings (with edge overlaps) with ply $5$ (see Fig.~\ref{fig:nested-triangles-non-planar}).
Note that however a ``less natural'' planar drawing with ply $4$ can always be constructed; see Fig.~\ref{fig:nested-triangles-2}. 
\begin{figure}[t!]
	\centering
	\subfloat[\label{fig:nested-triangles-natural}]{
		\includegraphics[height=3cm,page=5]{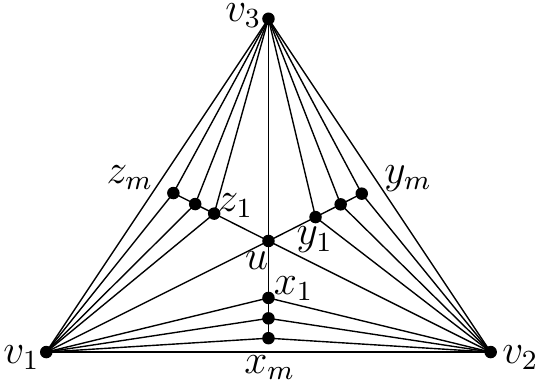}
	}\hfil
	\subfloat[\label{fig:nested-triangles-non-planar}]{
		\includegraphics[height=3cm,page=8]{images/theta-n-graph.pdf}
	}\hfil
	\subfloat[\label{fig:nested-triangles-2}]{
		\includegraphics[height=3cm,page=4]{images/theta-n-graph.pdf}
	}
	\caption{Nested triangles graph:
	(a) ``The most natural'' drawing. 
	(b) A non-planar drawing with ply 5. 
	(c) A planar drawing with ply $4$. The disks of three vertices at the same level do not properly overlap, and disks at levels $i$ and $i+3$ do not overlap.}
	\label{fig:nested}
\end{figure}

We strengthen this observation by providing a planar $3$-tree $G$ admitting a non-planar drawing (with only $3$ crossings) with ply $5$, such that \emph{any} planar drawing of $G$ has ply $\Omega(n)$; the same linear lower bound holds even for vertex-ply when the outer face is fixed. Recall that a \emph{planar $3$-tree} can be constructed, starting from a $3$-cycle, by repeatedly adding a vertex inside a triangular face and connecting it to all three vertices of this face.

Our result also gives a negative answer to an open question posed in~\cite{ply-original} on whether there exists a relationship between the number of crossings and the ply number of a drawing. Our example shows that one can reduce the ply number from $\Omega(n)$ to $O(1)$, by introducing only $O(1)$ crossings.

\begin{theorem}\label{th:linear-lower-bound}
There exists an $n$-vertex planar $3$-tree $G$ such that any planar drawing of $G$ with a fixed outer face has vertex-ply $\Theta(n)$, and hence ply $\Theta(n)$, while $G$ admits a drawing with ply $5$ and vertex-ply $4$ with three edge crossings.
\end{theorem}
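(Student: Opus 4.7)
The plan is built around an explicit planar $3$-tree $G$ of chain-nesting type, with detailed analysis for the lower bound and a construction inspired by Fig.~\ref{fig:nested-triangles-non-planar} for the upper bound.

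\textbf{Construction.} I take $G$ to be the planar $3$-tree obtained from an outer triangle $v_1 v_2 v_3$ by chained face-insertion: place $u_1$ in the face $v_1 v_2 v_3$ adjacent to all three, and for $i \ge 2$ place $u_i$ in the face $v_1 v_2 u_{i-1}$ adjacent to $v_1, v_2, u_{i-1}$. Since every planar $3$-tree admits a unique planar embedding (up to reflection) once the outer face is fixed, in every planar drawing $\Gamma$ of $G$ with outer face $v_1 v_2 v_3$ the vertices $u_1, u_2, \dots, u_{n-3}$ are forced to be strictly nested, with $u_{i+1}$ lying in the interior of the triangle $v_1 v_2 u_i$.

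\textbf{Lower bound on vertex-ply.} In $\Gamma$, let $L = |v_1 v_2|$. Because every $u_i$ is adjacent to both $v_1$ and $v_2$, the triangle inequality yields $\max(|u_i v_1|,|u_i v_2|) \ge L/2$ and hence $r_{u_i} \ge L/4$. I partition the indices into three classes: class $A = \{i : |u_i v_1| < |u_i v_2|/2\}$, class $B$ defined symmetrically, and the ``balanced'' remainder $C$. For $i \in A$, the inequality $r_{u_i} \ge |u_i v_2|/2 > |u_i v_1|$ places $v_1$ in the interior of $D_{u_i}$; symmetrically for $B$ and $v_2$. Thus if either $|A|$ or $|B|$ is $\Omega(n)$, the outer vertex-point $v_1$ or $v_2$ is already covered by $\Omega(n)$ ply-disks. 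Otherwise $|C| \ge (n-3)/3$, and every $u_i \in C$ sits at distance in $[L/3, 2L/3]$ from both $v_1$ and $v_2$, with $r_{u_i} \ge L/6$, concentrated in a central region of bounded area. Combining this with the strictly decreasing heights of the nested class-$C$ vertices above segment $v_1 v_2$, a packing/pigeonhole argument will identify an innermost class-$C$ vertex-point $u_*$ lying in the interior of $\Omega(n)$ ply-disks of earlier class-$C$ vertices. In all three cases $\mathrm{vertex\text{-}ply}(\Gamma) = \Omega(n)$, which implies the same bound for ply; the matching $O(n)$ upper bounds are trivial.

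\textbf{Upper bound.} Generalizing the scheme of Fig.~\ref{fig:nested-triangles-non-planar}, I realize $G$ non-planarly by ``unfolding'' the nested chain so that the $u_i$'s are drawn at comparable scale rather than shrinking geometrically, and the only crossings are the three edges where the outer triangle's sides meet the $u$-chain edges. With every ply-disk of comparable small radius, a direct local check yields ply $5$ and vertex-ply $4$, using the same angular-sector argument as in Theorem~\ref{th:vertex-ply-5k}.

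\textbf{Main obstacle.} The technically hard step is the balanced class $C$: classes $A$ and $B$ automatically pile $\Omega(n)$ ply-disks onto an outer vertex-point, but in case $C$ the witnessing vertex-point must be chosen among the internal $u_i$'s. Here one must leverage the monotone-nesting geometry and the uniform lower bound $r_{u_i} \ge L/6$ to guarantee that the coverage concentrates on a single vertex-point rather than merely overlapping somewhere in the plane --- this is precisely where vertex-ply is strictly more delicate than ply.
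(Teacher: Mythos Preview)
Your construction has a fatal flaw: the one-chain planar $3$-tree you propose does \emph{not} have linear vertex-ply in every planar drawing with outer face $v_1v_2v_3$. Here is an explicit constant-vertex-ply planar drawing. Put $v_1=(-1,0)$, $v_2=(1,0)$, $v_3=(0,H)$ with $H=2.5^{\,n}$, and $u_i=(0,h_i)$ with $h_i=0.4^{\,i}H$. This respects the nesting ($u_{i+1}$ lies in triangle $v_1v_2u_i$), and a direct check gives $r_{u_j}=0.75\,h_j$ for $j\ge 2$ (the longest edge at $u_j$ is $u_{j-1}u_j$ of length $1.5\,h_j$, not the edge to $v_1$ or $v_2$). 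Then $|u_ju_i|=h_j(1-0.4^{\,i-j})\ge 0.84\,h_j>r_{u_j}$ for $i\ge j+2$, so each $u_i$ lies only in $D_{u_i}$, $D_{u_{i-1}}$, $D_{v_1}$, $D_{v_2}$: vertex-ply $4$ everywhere. In particular all your class-$C$ vertices are present here, their heights are strictly decreasing, and yet no vertex-point is covered by more than four ply-disks --- so the promised ``packing/pigeonhole argument'' in case~$C$ cannot exist. (Your claim that class-$C$ distances lie in $[L/3,2L/3]$ is also wrong: the upper bound has no justification when the outer triangle is tall.)

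The paper sidesteps exactly this obstruction by using \emph{three} chains, one nested along each side $v_iv_j$ of the outer triangle. In any planar drawing one may then choose the chain sitting on the \emph{longest} outer side; that chain is confined to a triangle of diameter at most $|v_iv_j|$, which can be covered by $28$ disks of radius $\tfrac{1}{8}|v_iv_j|$. Pigeonhole puts $\Omega(n)$ chain vertices in one such small disk, and since each of them has ply-radius at least $\tfrac{1}{4}|v_iv_j|$, every vertex in that small disk is covered by $\Omega(n)$ ply-disks. No case split into $A/B/C$ is needed. The three chains are equally essential for the upper bound: unfolding them in three directions is what produces a drawing with exactly three crossings and ply~$5$; with a single chain there is no obvious analogue.
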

\begin{proof}

Graph $G$ has three vertices $v_1$, $v_2$, and $v_3$ on the outer face, and a vertex $u$ that is connected to all of $v_1$, $v_2$, and $v_3$. Refer to Fig.~\ref{fig:theta-n-graph}. 
In addition, it contains three paths $x_1,\dots,x_m$, $y_1,\dots,y_m$, and $z_1,\dots,z_m$, each on $m=\frac{n-4}{3}$ vertices. The edge set further contains edges $(u,x_1)$, $(u,y_1)$, $(u,z_1)$ and also $(x_i,v_1)$, $(x_i,v_2)$, $(y_i,v_2)$, $(y_i,v_3)$, $(z_i,v_1)$, $(z_i,v_3)$ for each $i\in \{1,\dots,m\}$.
	
Consider any planar drawing $\Gamma$ of $G$. Suppose, w.l.o.g., that $(v_1,v_2)$ is of unit length and that it is the longest edge in $\Gamma$ among the three edges incident to the outer face, that is, $|v_2 v_3|, |v_1 v_3| \leq  1$. 
Since vertex $u$ lies inside the triangle $v_1 v_2 v_3$, we have $|u v_1|, |u v_2| < 1$. Hence, it is possible to cover the whole region of the plane delimited by triangle $u v_1 v_2$ with a set of $28$ disks, each having radius $\frac{1}{8}$, as illustrated in Fig.~\ref{fig:disk-cover}. Thus, at least one disk $D$ out of these 28 contains in its interior at least $\frac{m}{28}=\frac{n-4}{84}$ vertices out of $x_1,\dots,x_m$. 

\begin{figure}[t!]
	\centering
	\subfloat[\label{fig:theta-n-graph}]{
		\includegraphics[height=2.5cm,page=1]{images/theta-n-graph.pdf}
	}\hfil
	\subfloat[\label{fig:disk-cover}]{
		\includegraphics[height=2.5cm,page=1]{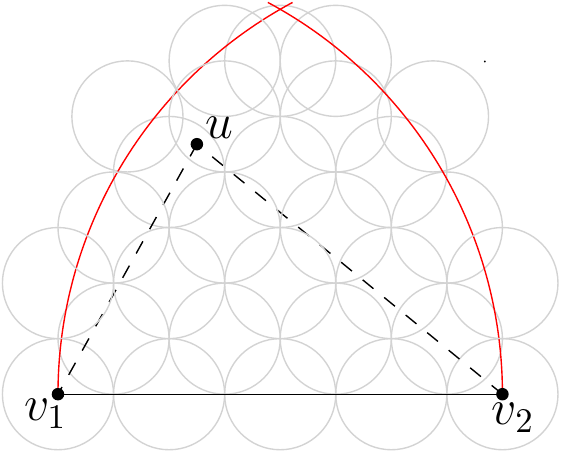}
	}\hfil
	\subfloat[\label{fig:theta-n-graph-non-planar}]{
		\includegraphics[height=3cm,page=2]{images/theta-n-graph.pdf}
	}
	\caption{(a) The planar $3$-tree $G$ in the proof of Theorem~\ref{th:linear-lower-bound}. (b) A set of $28$ disks of radius $\frac{1}{8}$ covering the whole region delimited by triangle $u v_1 v_2$ when $|v_1 v_2| = 1 > |u v_1|,|u v_2|$. (c) A non-planar drawing of $G$ with ply $5$ and vertex-ply $4$.}
	\label{fig:linear-lower-bound}
\end{figure}

Consider any vertex $x_i\in D$. Since $x_i$ is connected to both $v_1$ and $v_2$, the longest of its incident edges has length at least $\frac{1}{2}$, and hence the radius of the ply-disk of $x_i$ is at least $\frac{1}{4}$. 
Hence the ply-disk of $x_i$ entirely contains the disk $D$ in its interior, and thus it contains all the vertices inside it. 
Since this is true for all the $\frac{n-4}{84}$ vertices inside $D$, the first part of the statement follows.
	
A non-planar drawing of $G$  with ply $5$ and vertex-ply $4$ is depicted in Fig.~\ref{fig:theta-n-graph-non-planar}. Here vertices $v_1, v_2$ and $v_3$ form an equilateral triangle with barycenter $u$.
Vertices $x_1,\dots,x_m$ are arranged along the axis of the segment $v_1v_2$ at distances growing exponentially by a factor of $2$, analogously for vertices $y_1,\dots,y_m$ and $z_1,\dots,z_m$. The disk $D_u$ overlaps with $D_{x_1}$, $D_{y_1}$, and $D_{z_1}$, without enclosing these vertices. 
The drawing of the subset of vertices  $\{u,x_1,y_1,z_1\}$
is empty-ply and of ply $2$. After considering the remaining vertices, the disks of $v_1$, $v_2$, $v_3$ may contain all of them in their interior. Thus we obtain ply $5$ and vertex-ply $4$. 
\end{proof}

\section{Conclusions and Future Work}
\label{sec:conclusions}

We defined and studied the vertex-ply of a straight-line drawing, paying  particular attention to the special case of empty-ply drawings, whose vertex-ply is $1$. 
We conclude with several natural open problems.

\begin{enumerate}
\item We know that binary trees admit empty-ply drawings~\cite{ply-original} and that $4$-ary trees do not (Theorem~\ref{thm:empty-4-ary}). What about $3$-ary trees? Note that Theorem~\ref{thm:ternary-shrinking} rules out a large class of possible drawings (orthogonal and shrinking).
\item
Another way of generalizing binary trees is to maintain the degree restriction, leading to the question: do (planar) max-degree-$3$ graphs admit empty-ply drawings? 
\item In Theorem~\ref{le:square} we proved that the square $G^2$ of a graph $G$ with ply $1$ admits an empty-ply drawing, which has ply at most $5$ by Corollary~\ref{cor:emply-ply-ordinary-ply}. 
On the other hand, if $G$ is a subgraph of a triangular tiling, then the empty-ply drawing of $G^2$ has ply at most $4$. Does the square of every graph with ply $1$ admit an (empty-ply) drawing with ply $4$? 
Note that there are ply $1$ graphs that are not subgraphs of a triangular tiling.
\item Looking at empty-ply drawings from the proximity perspective, it is natural to consider the generalization in which ply-disks do not need to be empty, but can contain at most $k$ vertices. We call a drawing with this property a \emph{$k$-empty-ply} drawing, in compliance with the definition of $k$-Gabriel and $k$-relative-neighborhood drawings~\cite{liotta-proximity}. With the argument of Theorem~\ref{th:linear-lower-bound} there exist $n$-vertex graphs whose any planar drawing is $\Omega(n)$-empty-ply.
\item In Theorem~\ref{thm:k8} we proved a tight bound of $7$ on the size of complete graphs admitting empty-ply drawings. For complete bipartite graphs $K_{n,m}$, we have a tight bound of $m=24$, for $n=1$, and an almost tight bound of $12 \leq m \leq 14$, for $n=2$, with larger gaps between the bounds for larger values of $n$. 
\end{enumerate}

\paragraph*{Acknowledgments}
This work began at the 2015 HOMONOLO meeting. We gratefully thank M.~Bekos, T.~Bruckdorfer, G.~Liotta, M.~Saumell, and A.~Symvonis for great discussions on the topic.
Research was partially supported by project CE-ITI P202/12/G061 of GA\v{C}R (J.F., J.K., P.V.), by project SVV--2017--260452 (J.H.), and by DFG grant Ka812/17-1 (P.A., N.H., M.K.).

\bibliographystyle{splncs03}
\bibliography{biblio}

\newpage
\section*{Appendix}

\section*{Complete proof of Theorem~\ref{thm:k8}}
In this section we give a complete proof of Theorem~\ref{thm:k8}, which we restate here for the reader's convenience.

\medskip
\rephrase{Theorem}{\ref{thm:k8}}{
Graph $K_n$ admits an empty-ply drawing if and only if $n \leq 7$.
}
\medskip

For the contrary we assume that there exists an empty ply drawing of $K_n$ where $n \geq 8$.
We state observations regarding the edge $(x_1, x_2)$ and placement of vertices in specified regions and conclude the Theorem by lemmas where we distinguish the different cases about the number of vertices placed below the edge $(x_1, x_2)$. 

Let $\Gamma$ be an \textit{empty-ply} drawing of $K_n$ where $e = (x_1, x_2)$ is the longest edge of $\Gamma$ with length $2$. Thus all other vertices lie in the intersection of the annuli centred at $x_1, x_2$ with radii 1 and 2. 
Without loss of generality we draw this edge as horizontal line segment as in Figure \ref{fig:BasicSplit}.\\
We partition the intersection is as follows: 
\begin{align*}
A &= \{x \in \R^2 : \sqrt{2} < |x_1x|\leq 2 , \sqrt{2} < |x_2x| \leq 2\},\\
B &= \{x \in \R^2 : 1 \leq |x_1x|\leq \sqrt{2}, \sqrt{2} \leq |x_2x| \leq 2\},\\
C &= \{x \in \R^2 : \sqrt{2} \leq |x_1x|\leq 2, 1 \leq |x_2x| \leq \sqrt{2}\},\\
D &= \{x \in \R^2 : 1 \leq |x_1x|\leq \sqrt{2}, 1 \leq |x_2x| \leq \sqrt{2}\}.\\
\end{align*}

\begin{figure}
\begin{center}
\includegraphics[width=.5\textwidth]{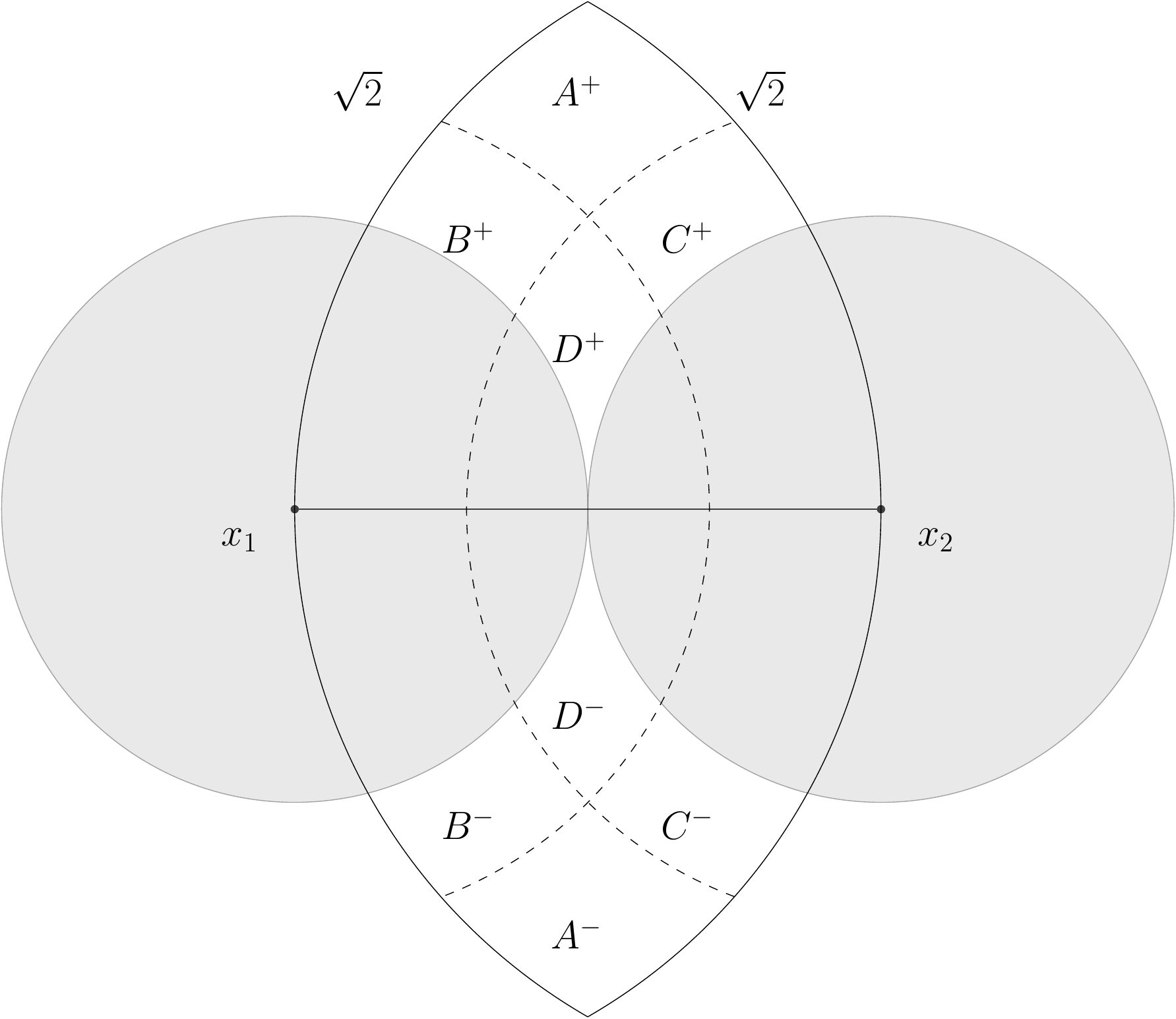}
\caption{Where $+$ is above the edge $(x_1, x_2)$ and $-$ below respectively.}\label{fig:BasicSplit}
\end{center}
\end{figure}

\begin{observation}\label{Obs:B+=1}
There is at most one vertex in $B^+$ and by symmetry in $B^-$ as well as in $C^+$ and $C^-$.
\end{observation}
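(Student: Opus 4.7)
The plan is to argue by contradiction. I shall prove the claim for $B^+$; the cases $B^-$, $C^+$, $C^-$ follow by the reflection symmetry about the line $l$ and by swapping the roles of $x_1$ and $x_2$. Suppose $B^+$ contains two distinct vertices $u$ and $w$. Since $(u,x_2)$ and $(w,x_2)$ are edges of $K_n$, the ply-disk radii satisfy $r_u\ge |ux_2|/2$ and $r_w\ge |wx_2|/2$, both at least $\sqrt{2}/2$ by the definition of $B^+$. Assume without loss of generality that $|ux_2|\ge|wx_2|$; the goal is to show $|uw|<|ux_2|/2\le r_u$, which would place $w$ in the open ply-disk $D_u$ and contradict the empty-ply property.

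Set $a=|ux_2|$, $b=|wx_2|$, and $\alpha=\angle u x_2 w$. The law of cosines gives $|uw|^2=a^2+b^2-2ab\cos\alpha$, so the target inequality $|uw|^2<a^2/4$ is equivalent to
\[
\cos\alpha \;>\; \frac{3a}{8b}+\frac{b}{2a}.
\]
For the ratio $t=b/a\in[\sqrt{2}/2,\,1]$, the right-hand side is a strictly convex function of $t$, hence its maximum on this interval is attained at an endpoint. A routine computation yields the maximum value $5\sqrt{2}/8$, attained at $t=\sqrt{2}/2$ (i.e.\ at $a=2$, $b=\sqrt{2}$).

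The geometric core is a lower bound on $\cos\alpha$. Writing the defining inequality $1\le |xx_1|^2\le 2$ of $B^+$ in polar coordinates centred at $x_2$ yields
\[
-\frac{3+r^2}{4r} \;\le\; \cos\theta \;\le\; -\frac{2+r^2}{4r}, \qquad r=|xx_2|\in[\sqrt{2},2].
\]
A brief monotonicity analysis of the two bounding expressions in $r$ on $[\sqrt{2},2]$ identifies the corners $P_1$ and $P_2$, both on the arc $|xx_2|=\sqrt{2}$, as the points realizing the extreme angular directions from $x_2$; therefore $\alpha\le \angle P_1 x_2 P_2$. A direct dot-product computation gives $\cos(\angle P_1 x_2 P_2)=\tfrac{(P_1-x_2)\cdot(P_2-x_2)}{|P_1-x_2|\,|P_2-x_2|}=\tfrac{5+\sqrt{7}}{8}$, and since $\sqrt{7}>5(\sqrt{2}-1)$ (readily verified by squaring both sides), one obtains $(5+\sqrt{7})/8 > 5\sqrt{2}/8$. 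Hence $\cos\alpha\ge(5+\sqrt{7})/8>5\sqrt{2}/8$, which gives the required inequality and the contradiction.

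The main technical obstacle is the monotonicity step that pinpoints $P_1$ and $P_2$ as the extreme angular points of $B^+$ as seen from $x_2$; once this is established, the rest is an elementary calculation using the law of cosines and a comparison of two explicit algebraic numbers.
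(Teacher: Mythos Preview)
Your argument is correct and takes a genuinely different route from the paper. The paper splits $B^+$ into two sub-regions $B_1=\{x\in B^+:\sqrt{2}\le |x_2x|\le\sqrt{3}\}$ and $B_2=\{x\in B^+:\sqrt{3}\le |x_2x|\le 2\}$, cites numerically computed diameters ($\diam B^+\le 0.75$, $\diam B_1\le 0.54$), and observes that a vertex in $B_2$ has ply radius $\ge\sqrt{3}/2>0.75$ so its disk swallows all of $B^+$, while two vertices in $B_1$ would each have ply radius $\ge\sqrt{2}/2>0.54$ and hence cover each other. Your approach instead bounds the angular extent of $B^+$ as seen from $x_2$ and plugs directly into the law of cosines, avoiding any case split and any externally supplied diameter figures: the only nontrivial step is the calculus check that on $[\sqrt{2},2]$ the function $-(2+r^2)/(4r)$ peaks at $r=\sqrt{2}$ and $-(3+r^2)/(4r)$ attains its minimum at the same endpoint, which pins down $P_1=(1,1)$ and $P_2=(3/4,\sqrt{7}/4)$. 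The paper's decomposition is more pictorial and generalizes readily to the other regions $A$ and $D$, where analogous sub-splittings are reused; your angular argument is more self-contained and yields explicit algebraic constants, but would need to be redone from scratch for each region. Both are valid; yours is arguably cleaner for $B^+$ in isolation.
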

\begin{proof}
Note that the diameter of $B^+$ meaning the maximal distance of any two points  in $ B^+$ is $\leq 0.75$ (see Figure \ref{fig:diameterB}).\\
We split the area of $B^+$ into two parts like indicated in Figure \ref{fig:splitB1B2}:
\begin{align*}
B_1 &= \{x \in B : \sqrt{2} \leq |x_2x| \leq \sqrt{3} \},\\
B_2 &= \{x \in B : \sqrt{3} \leq |x_2x| \leq 2 \}.
\end{align*}

\begin{figure}
 \begin{minipage}[t][][t]{.5\textwidth}
\includegraphics[width=\textwidth]{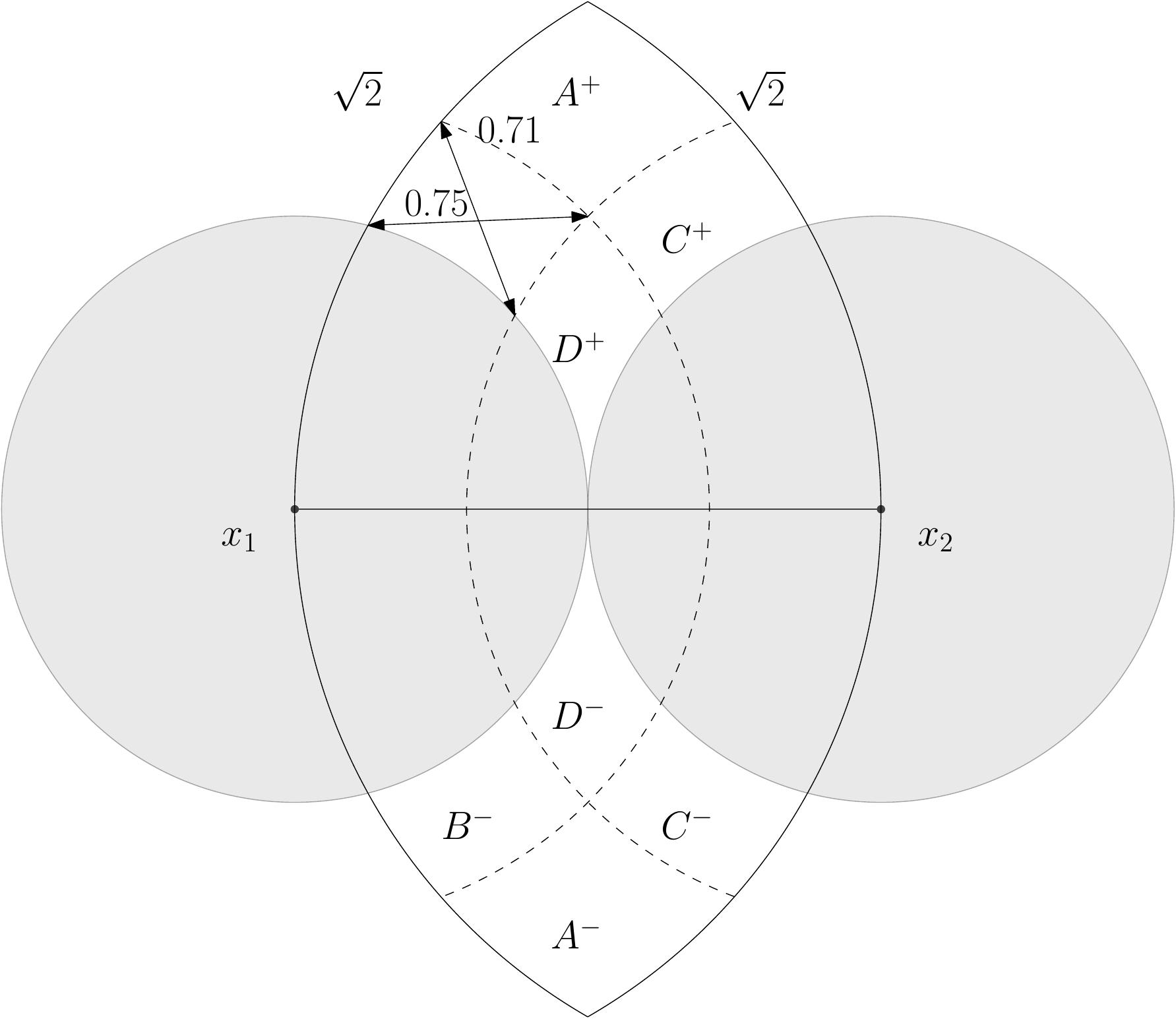}
\caption{The diameter of $B^+$ is 0.75.}\label{fig:diameterB}
\end{minipage}
\begin{minipage}[t][][t]{.5\textwidth}
\includegraphics[width=\textwidth]{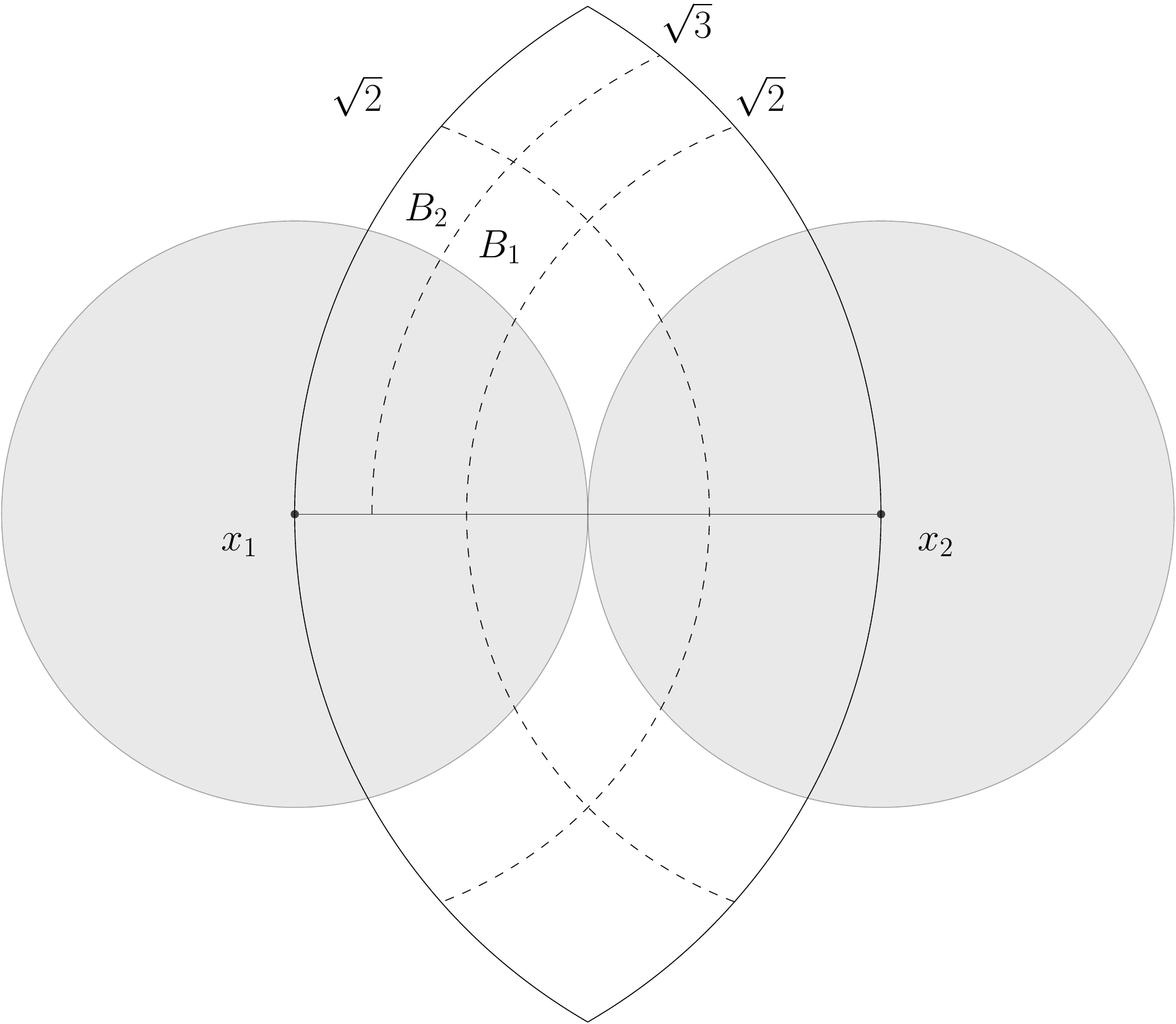}
\caption{Split into $B^+_1$ and $B^+_2$.}\label{fig:splitB1B2}
\end{minipage}
\end{figure}
Any vertex located in $B^+_2$ has a ply disk radius of $\geq \frac{\sqrt{3}}{2} \approx 0.85$. Thus whenever there exists a vertex in $B^+_2$ there cannot be a vertex in $B^+_1$. Furthermore the diameter of $B^+_1$ is 0.54 hence there exists at most one vertex with ply disk radius of $\frac{\sqrt{2}}{2}\approx 0.7$.
\end{proof}

\begin{observation}\label{Obs:A+=1}
There is at most one vertex in $A^+$ and by symmetry in $A^-$.
\end{observation}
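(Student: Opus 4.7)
The plan is to argue by contradiction: assume two vertices $u, v$ of $\Gamma$ both lie in $A^+$.

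First I would record the basic estimates. Since $u, v \in A^+$, each of the four distances $|ux_1|, |ux_2|, |vx_1|, |vx_2|$ lies in $(\sqrt{2}, 2]$, so the ply-disk radii satisfy $r_u \geq \max(|ux_1|, |ux_2|)/2 > \sqrt{2}/2$ and analogously $r_v > \sqrt{2}/2$; the empty-ply condition then forces $|uv| \geq \max(r_u, r_v) > \sqrt{2}/2$. A direct computation of the four boundary arcs defining $A^+$ identifies its corners as $(0,1)$, $(\pm \tfrac12, \tfrac{\sqrt{7}}{2})$, $(0,\sqrt{3})$, and its supremum diameter as exactly $1$, attained only in the limit between the two ``side'' corners (which themselves lie on the open inner boundary of $A^+$ and so do not belong to $A^+$). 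Hence $|uv| \in (\sqrt{2}/2, 1)$.

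Mirroring the strategy of Observation~\ref{Obs:B+=1}, I would then split $A^+$ into a ``core'' $A^+_{\mathrm{in}} = \{p \in A^+ : \max(|x_1 p|, |x_2 p|) \leq \sqrt{3}\}$ and its complement $A^+_{\mathrm{out}}$. A direct computation shows that the corners of $A^+_{\mathrm{in}}$ are $(0,1)$, $(0,\sqrt{2})$, $(\pm \tfrac14, \tfrac{\sqrt{23}}{4})$, and its diameter works out to exactly $\tfrac12$ (realized between the latter two corners), which is strictly less than $\sqrt{2}/2$; so if both $u, v \in A^+_{\mathrm{in}}$ the basic estimate is already violated. This reduces us to the case where at least one of them, say $u$, lies in $A^+_{\mathrm{out}}$, and thus $r_u > \sqrt{3}/2 \approx 0.866$. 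When $v \in A^+_{\mathrm{in}}$, a case check shows that the maximum distance from $u \in A^+_{\mathrm{out}}$ to $\overline{A^+_{\mathrm{in}}}$ stays uniformly below $\sqrt{3}/2$ (intuitively, the ``core'' sits low in $A^+$ while $A^+_{\mathrm{out}}$ sits near the top and side corners), again contradicting $|uv| \geq r_u$.

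The delicate remaining sub-case is $u, v \in A^+_{\mathrm{out}}$ both near opposite side corners $(\pm \tfrac12, \tfrac{\sqrt{7}}{2})$, because there $r_u$, $r_v$, and $|uv|$ all simultaneously approach $1$ and the boundary configuration $u = (-\tfrac12, \tfrac{\sqrt{7}}{2})$, $v = (\tfrac12, \tfrac{\sqrt{7}}{2})$ is tight, satisfying $|uv| = 1 = r_u = r_v$ exactly. To handle it I would parameterize $u = (-\tfrac12 + \alpha, \tfrac{\sqrt{7}}{2} + \beta)$; the defining inequalities $|x_1 u|^2 > 2$ and $|x_2 u|^2 \leq 4$ linearize to $\alpha + \beta\sqrt{7} > 0$ and $-3\alpha + \beta\sqrt{7} \leq 0$, which together force $\alpha > 0$. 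A symmetric parameterization $v = (\tfrac12 - \alpha', \tfrac{\sqrt{7}}{2} + \beta')$ and a first-order expansion then yield $r_u - |uv| \geq \tfrac{5\alpha}{8} + \alpha' + \tfrac{\beta\sqrt{7}}{8} > 0$, contradicting empty-ply. The main obstacle is precisely this tight side-corner computation, where the strict inequality has to be extracted by exploiting the openness of $A^+$ on its inner boundary arcs.
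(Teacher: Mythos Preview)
Your two-way split $A^+_{\mathrm{in}}/A^+_{\mathrm{out}}$ along the $\sqrt3$-circles parallels the paper's argument, which refines this further into four pieces $A_1^+,\dots,A_4^+$ (your $A^+_{\mathrm{in}}$ is their $A_4^+$; your $A^+_{\mathrm{out}}$ is $A_1^+\cup A_2^+\cup A_3^+$). Your in--in and out--in sub-cases are handled exactly as in the paper, by diameter bounds, and the numerical claims you state ($\diam A^+_{\mathrm{in}}=\tfrac12$, $\max$ distance from $A^+_{\mathrm{out}}$ to $A^+_{\mathrm{in}}$ below $\sqrt3/2$) are correct.

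The gap is in the out--out sub-case. Your first-order expansion at the side corners $(\pm\tfrac12,\tfrac{\sqrt7}{2})$ is a correct \emph{local} computation and pinpoints why the tight limiting configuration is excluded by the strict inequality $|x_1u|>\sqrt2$. But it is not a \emph{global} argument over $A^+_{\mathrm{out}}\times A^+_{\mathrm{out}}$: you never dispose of configurations away from the side corners (e.g.\ $u$ near the top corner $(0,\sqrt3)$, or $u,v$ on the same side), and you do not control the higher-order terms in your linearization, so the expansion covers only an infinitesimal neighborhood. The paper closes this in two steps. First, the finer partition handles the easy out--out pairings by diameter: any vertex in $A_1^+$ has ply radius $>\sqrt3/2$ while its distance to all of $A^+$ is $\le 0.79$, and $\diam(A_2^+)\approx 0.51<\sqrt3/2$ (so also $A_3^+$). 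Second, for the sole hard pairing $u\in A_2^+$, $v\in A_3^+$, the paper runs an iterative bootstrap rather than a linearization: from $|x_2u|\ge\sqrt3$ one gets $|uv|\ge\sqrt3/2$, which forces $u$ farther toward the left side corner and hence $|x_2u|$ larger, hence $|uv|$ larger, and so on; the recurrence $f(n)$ on $|x_2u|$ converges to $2$, i.e.\ to the side corner itself, which lies on the excluded open boundary. Your linearization could in principle replace this last step, but only \emph{after} a global reduction to a neighborhood of the side-corner pair --- and carrying out that reduction is essentially what the paper's four-way split is doing.
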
\setcounter{case}{0}
\begin{proof}
We split 
\[A^+ = \{x \in \R^2 : \sqrt{2} < |x_1x|\leq 2 , \sqrt{2} < |x_2x| \leq 2\}\] 
into four parts as in Figure \ref{fig:splitA}:
\begin{align*}
A_1 &= \{x \in \R^2 : \sqrt{3} < |x_1x|\leq 2, \sqrt{3} < |x_2x| \leq 2\},\\
A_2 &= \{x \in \R^2 : \sqrt{2} < |x_1x|\leq \sqrt{3}, \sqrt{3} < |x_2x| \leq 2\},\\
A_3 &= \{x \in \R^2 :\sqrt{3} < |x_1x|\leq 2, \sqrt{2} < |x_2x| \leq \sqrt{3}\},\\
A_4 &= \{x \in \R^2 : \sqrt{2} < |x_1x|\leq \sqrt{3}, \sqrt{2} < |x_2x| \leq \sqrt{3}\}.
\end{align*}
\begin{figure}
\begin{minipage}[t][][t]{.5\textwidth}
\includegraphics[width=\textwidth]{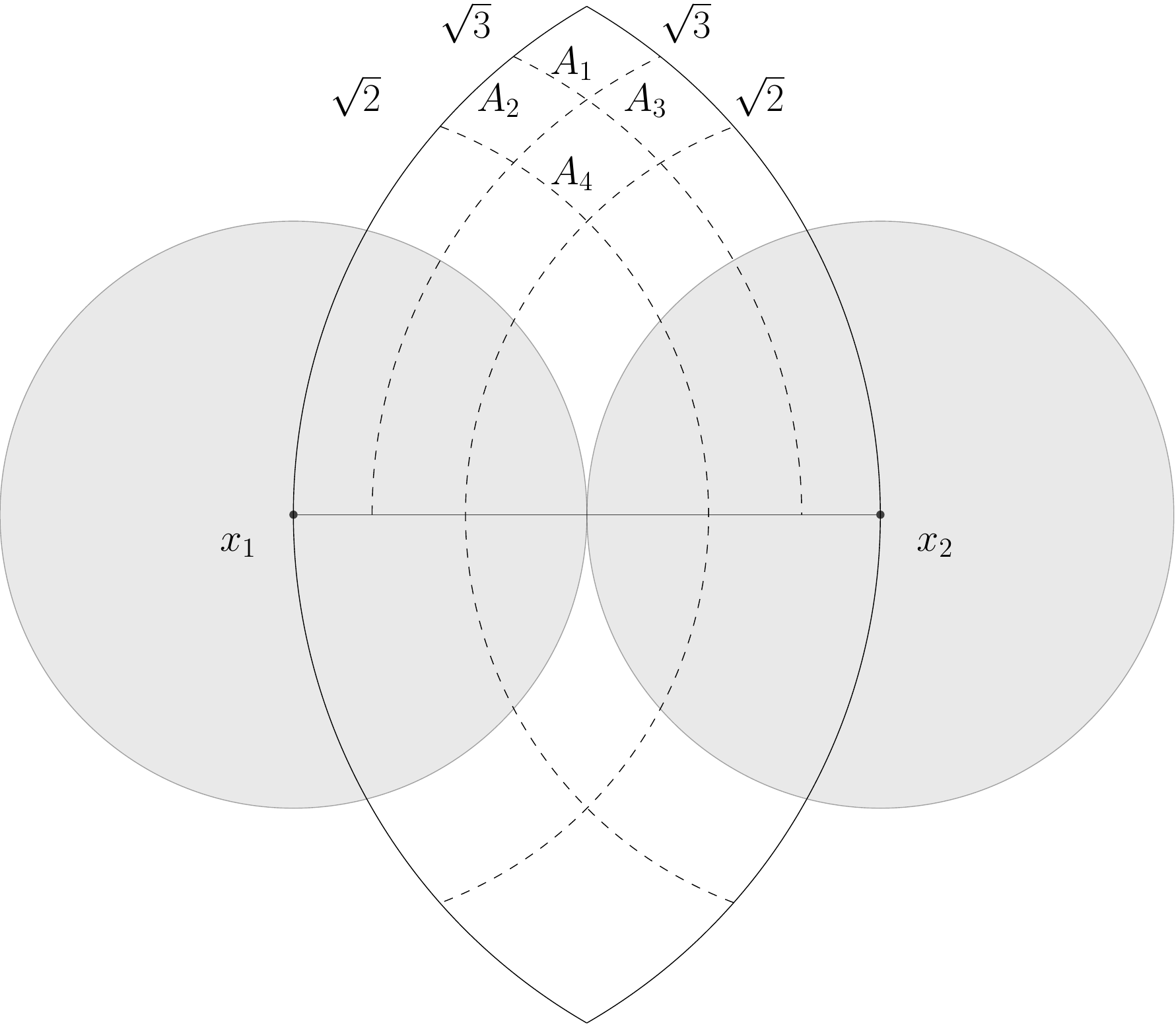}
\caption{Split $A^+$ into four regions.\newline} \label{fig:splitA}
\end{minipage}
\begin{minipage}[t][][t]{.5\textwidth}
\includegraphics[width=\textwidth]{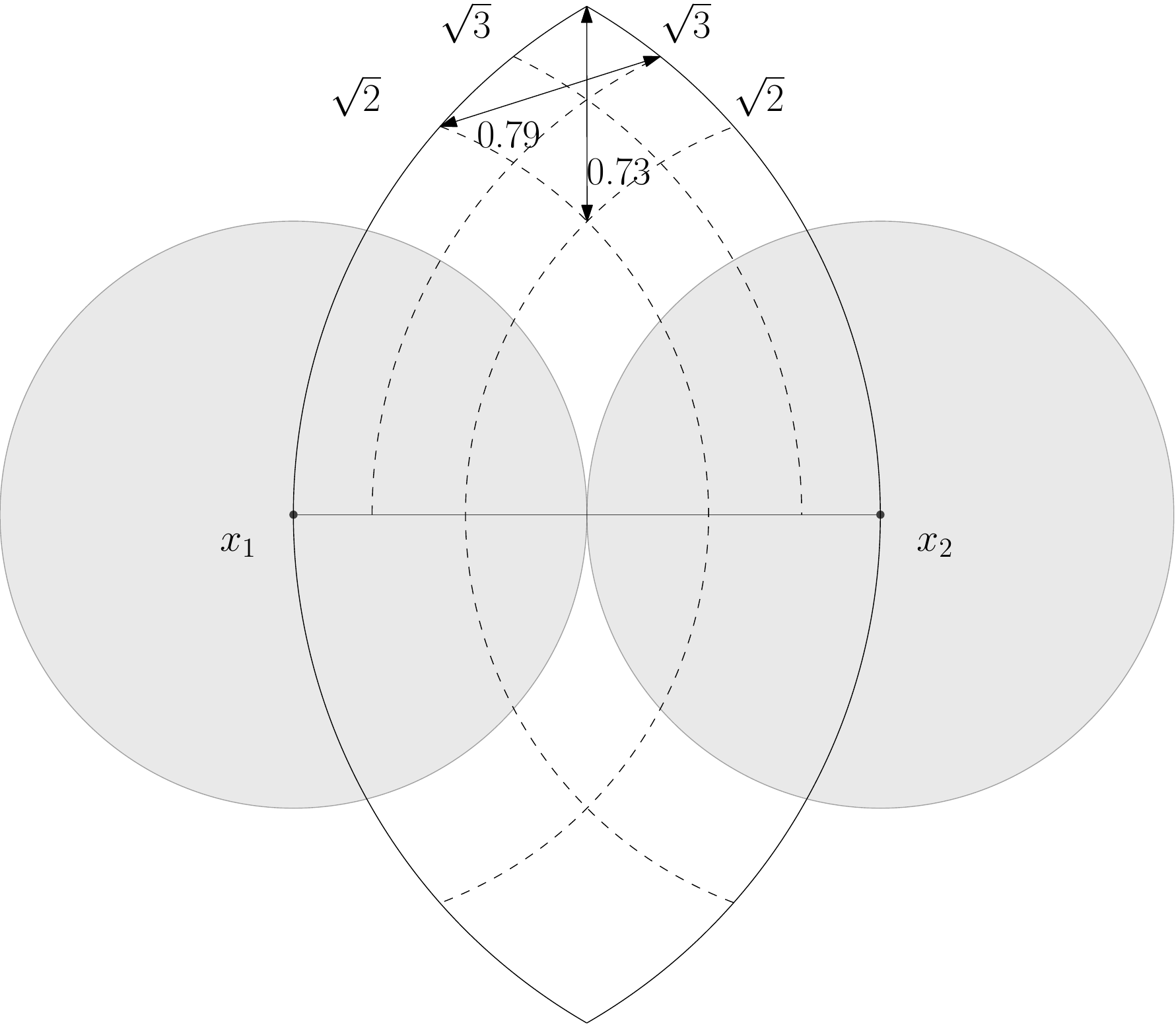}
\caption{Maximal distance of any vertex in $A$ to any vertex in $A_1$.}\label{fig:diameterA+}
\end{minipage}
\end{figure}
\begin{case} A vertex is placed in $A^+_1$.\\
The maximal distance of any vertex in $A^+$ to a vertex in $A^+_1$ is $\leq 0.79$ \mbox{(cf. Figure \ref{fig:diameterA+}),} but the ply disk has a radius of at least $\frac{\sqrt{3}}{2} \approx 0.85$. Thus whenever there exists a vertex in $A^+_1$ there cannot be another vertex in $A^+$.
\end{case}

\begin{case} A vertex is placed in $A^+_2$.\\
Similarly to Case 1. the maximal distance of any two vertices in $A^+_2$ and $A^+_4$ is less than 0.76 and thereby $A^+_4$ has to be empty since the ply disk has a radius of at least $\frac{\sqrt{3}}{2} \approx0.85$ (cf. Figure \ref{fig:diameterA2+}).\\
Assume a second vertex in $A^+_3$. The ply disk of the vertex in $A^+_2$ has a radius of at least $\frac{\sqrt{3}}{2}$ and thus this vertex needs to be placed at a distance of at least $\frac{\sqrt{3}}{2}$ away from the rightmost coordinate of $A^+_3$. Now the ply disk of the vertex in $A^+_2$ is at least $\approx 0.92$. A scheme is presented in Figure \ref{fig:circlesA3}.\\
\begin{figure}
\begin{minipage}[t][][t]{.5\textwidth}
\includegraphics[width=\textwidth]{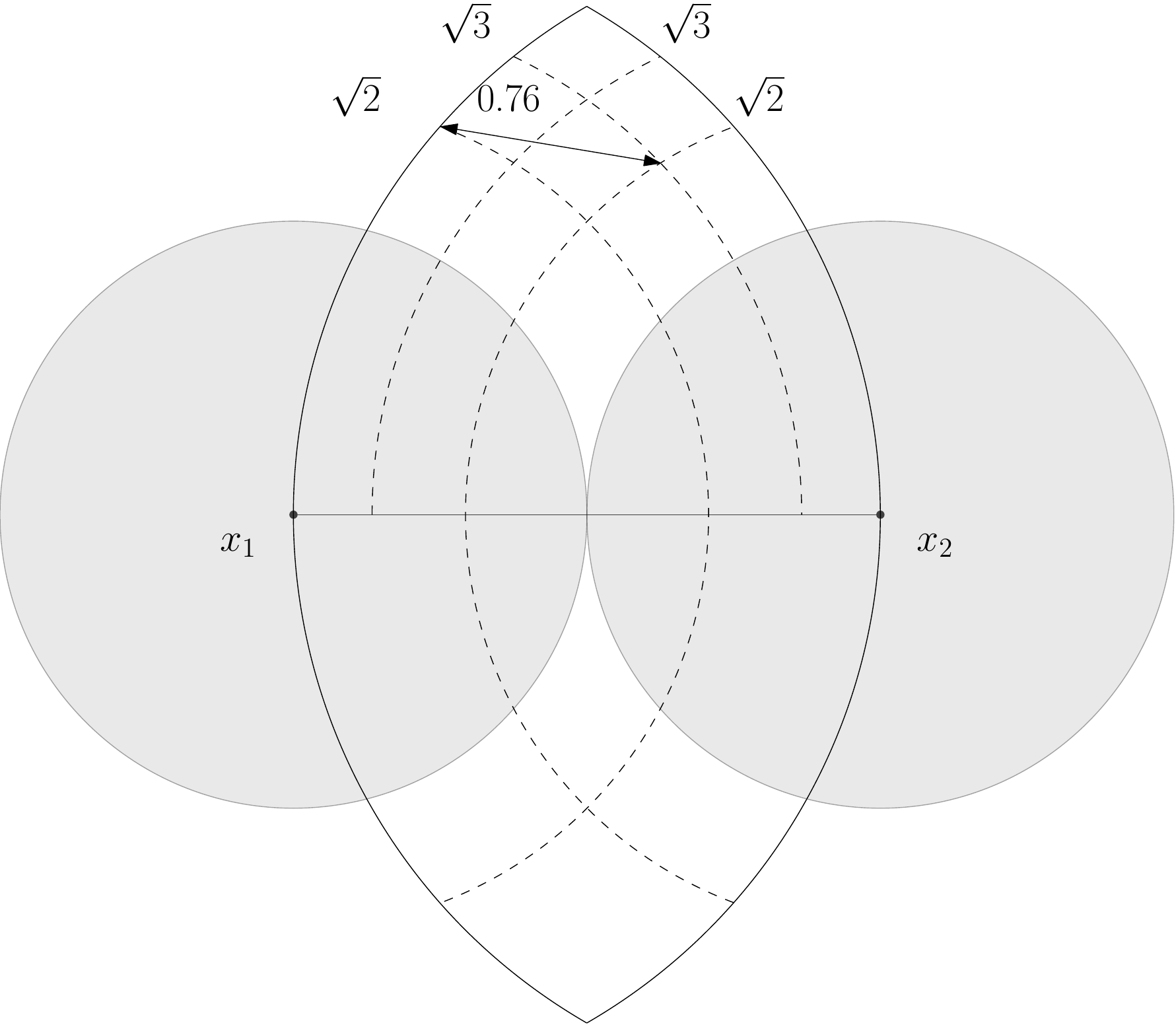} 
\caption{The maximal distance between any two vertices in $A^+_2$ and $A^+_4$}\label{fig:diameterA2+}
\end{minipage}
\begin{minipage}[t][][t]{.5\textwidth}
\includegraphics[width=\textwidth]{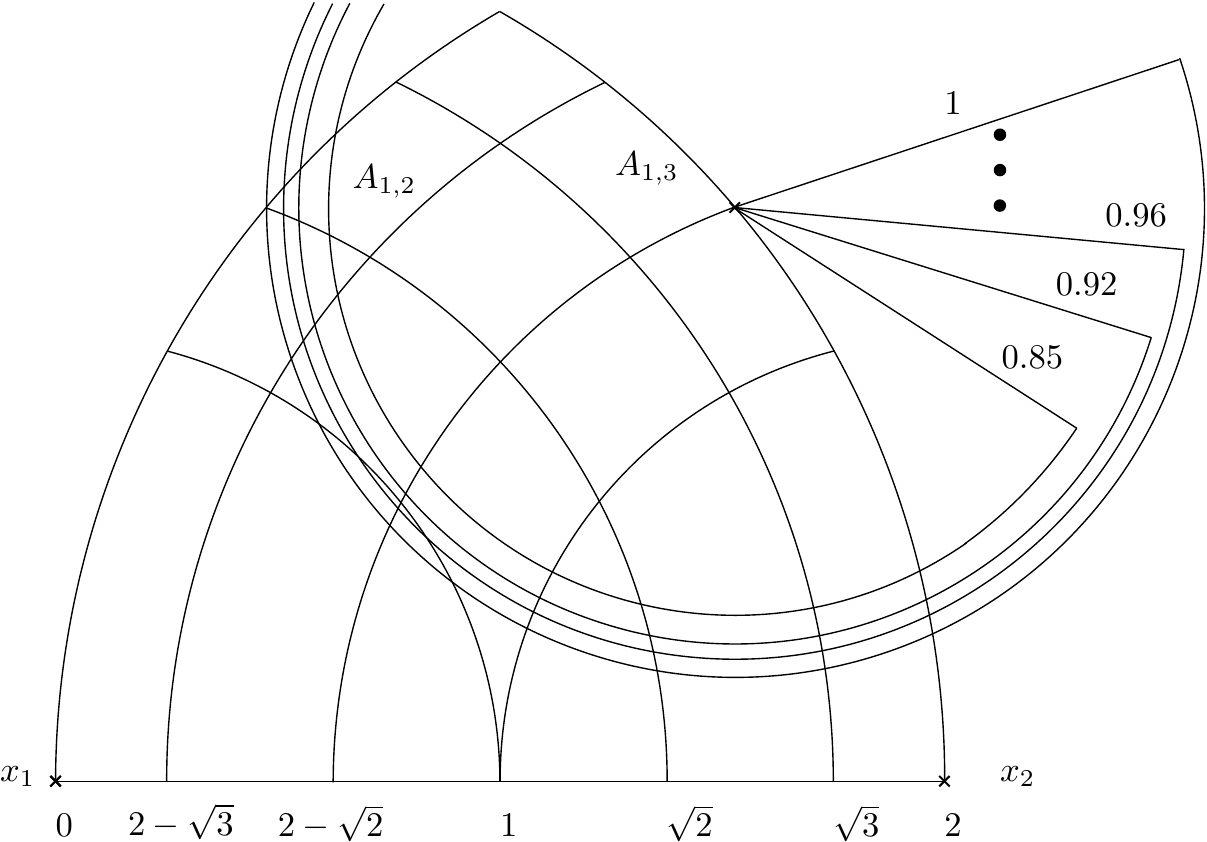}
\caption{A schematic description of the increasing ply-radius of the vertex in $A^+_2$.}\label{fig:circlesA3}
\end{minipage}
\end{figure}
To express this in detail we introduce a series based on the distances: \\
$f(n)$ describes the minimal distance between $x_2$ and the vertex in $A^+_2$ at the $n$th iteration. $f(n)$ is the side of the triangle defined by the edge length of $\sqrt{2}$, 2 and the angle $41.41^\circ + \alpha$. The angle $\alpha$ depends on the ply radius of the vertex in $A^+_2$ of the previous iteration. This is illustrated in Figure \ref{fig:formulaFn}:
\begin{align*}
f(1) &= \sqrt{3}\\
f(n) &= \sqrt{2^2 + \sqrt{2}^2 - 2 \cdot 2 \cdot \sqrt{2} 
\left( \frac{3}{4} \cos(\alpha) - \frac{\sqrt{7}}{4} \cdot \sin(\alpha)\right)}\\
\cos(\alpha) &= \left( \frac{2^2 + \sqrt{2}^2 - \left(\frac{f(n-1)}{2}\right)^2}{2 \cdot 2 \cdot \sqrt{2}} \right)
\end{align*}

%\begin{align*}
% f(n)^2 &= \sqrt{2}^2 + 2^2 - 2 \cdot 2 \cdot \sqrt{2} \cos(\alpha + 41.41^\circ)\\
%  \cos(\alpha + 41.41^\circ) &= \cos(\alpha) \cos(41.41^\circ) - \sin(\alpha) \sin(41.41^\circ)\\
% &= \frac{3}{4} \cos(\alpha) - \frac{\sqrt{7}}{4} \sin(\alpha) 
%\end{align*}

We now observe that $\displaystyle \lim_{n \rightarrow \infty} f(n) = 2$. Thereby if we want to place one vertex in $A^+_2$ and one vertex in $A^+_3$ we actually place a vertex at the highest coordinate in $B^+$ and a vertex at the highest coordinate in $C^+$ with distance 1 of each other. We conclude that whenever there exists one vertex in $A^+_2$ then $A^+_3$ has to be empty.

\begin{figure}
\centering
\includegraphics[width=.5\textwidth]{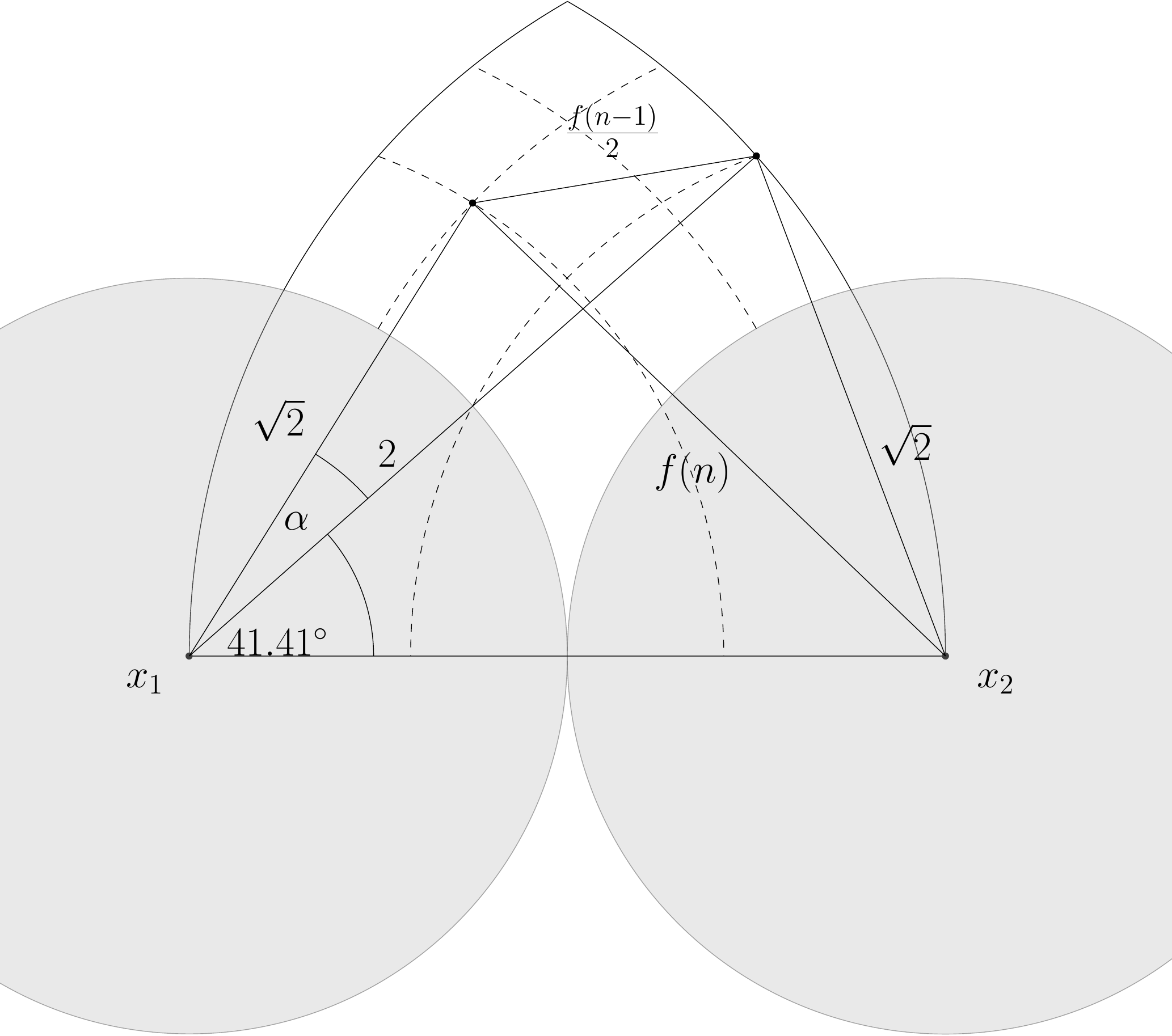}
\caption{$f(n)$ describes the minimal valid distance of any vertex in $A^+_2$ to the vertex $x_2$. $f(n)$ is the length of the side of the triangle defined by the edge length of $\sqrt{2}$, 2 and the angle $41.41^\circ + \alpha$. The angle $\alpha$ depends on the ply radius of the vertex in $A^+_2$ of the previous iteration.}\label{fig:formulaFn}
\end{figure}
\end{case}

\begin{case} A vertex is placed in $A^+_4$.\\
The diameter of the region $A^+_4$ is 0.5. With a ply disk of at least $\frac{\sqrt{2}}{2} \approx 0.7 $ there is at most one vertex in $A^+_4$.
\end{case}

The three cases conclude our observation, that there exists at most one vertex in $A^+$ which can be either in $A^+_1, A^+_2, A^+_3$ or $A^+_4$.
\end{proof}

\begin{observation}\label{Obs:D+=2}
There are at most two vertices in $D^+$.
\end{observation}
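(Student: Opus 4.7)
The plan is to suppose for contradiction that three vertices $w_1, w_2, w_3$ lie in $D^+$, and split $D^+$ horizontally into $D^+_T = D^+ \cap \{y \ge 1/2\}$ and $D^+_B = D^+ \cap \{y < 1/2\}$, showing that each contains at most one of them. The common ingredient is the following lower bound on ply-radii: since every $w = (a, b) \in D^+$ is adjacent in $K_n$ to both $x_1$ and $x_2$, which lie at distances in $[1, \sqrt 2]$ from $w$, its ply-radius satisfies $r_w \ge \tfrac12 \sqrt{(|a|+1)^2 + b^2} \ge \tfrac12 \sqrt{1 + b^2}$, and the empty-ply condition forces $|w_i - w_j| \ge r_{w_i}$ for every pair.

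For $D^+_T$, I would show that every vertex satisfies $r_w \ge \sqrt{5}/4$ and that $\diam(D^+_T) = \sqrt{2 - \sqrt 3}$, attained by $(0, 1)$ and $(\pm(1 - \tfrac{\sqrt 3}{2}), 1/2)$. To establish the diameter, I would enumerate the five extreme points $(0, 1), (\pm 1/4, \sqrt 7/4), (\pm(1 - \tfrac{\sqrt 3}{2}), 1/2)$ and check that along each of the four bounding circular arcs the distance from any fixed candidate is monotone in the arc parameter (because the far intersection of the line through the fixed point and the arc's center lies outside the arc's angular range). The numerical comparison $5/16 > 2 - \sqrt 3$, equivalent to $16\sqrt 3 > 27$, holds since $768 > 729$, giving $r_w > \diam(D^+_T)$; hence two vertices in $D^+_T$ would force $|w_i - w_j| \le \diam(D^+_T) < r_{w_i}$, a violation.

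For $D^+_B$, I would set $\beta := 2 - \sqrt 3$, chosen so that $\tfrac12 \sqrt{1 + \beta^2} = \sqrt{2 - \sqrt 3}$, which equals the supremum of pairwise distances in $D^+_B$ (this supremum is not attained, since it is approached only at $(\pm(1 - \tfrac{\sqrt 3}{2}), 1/2)$, which lies on the excluded line $y = 1/2$). If some $w_i \in D^+_B$ has $b_i \ge \beta$, then $r_{w_i} \ge \sqrt{2 - \sqrt 3}$ and $|w_i - w_j| < \sqrt{2 - \sqrt 3} \le r_{w_i}$, contradicting empty-ply. Otherwise both vertices lie in the cusp $D^+ \cap \{y < \beta\}$; the inner-arc bound $|a| \le 1 - \sqrt{1 - b^2}$ contains this cusp in a rectangle of width $2(1 - \sqrt{1 - \beta^2})$ and height $\beta$, whose diameter is strictly less than $1/2 \le r_{w_i}$, again a violation. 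Summing up, $|D^+| \le 2$.

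The main obstacle I expect is the rigorous proof of the diameter equality $\diam(D^+_T) = \sqrt{2 - \sqrt 3}$: since $D^+_T$ is bounded by four circular arcs (and one line segment) and is not convex, one cannot simply reduce the maximization to pairs of ``polygon vertices'', and on each arc I must verify that the distance to any fixed point is monotone in the angular parameter so that its maximum over the arc is attained at an endpoint.
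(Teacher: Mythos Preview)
Your argument is correct and takes a genuinely different route from the paper. The paper's proof is an adaptive ``greedy removal'' argument: it picks the highest vertex $x$ in $D^+$, observes that its ply-disk $C_x$ (of radius at least $\tfrac{\sqrt 2}{2}$ in the extremal case $x=(0,1)$) carves away most of $D^+$, and then asserts that the remaining set $D^+\setminus C_x$ has diameter below $0.3$, too small to host two further vertices with ply-radius at least $\tfrac12$. Your argument is a static partition: you cut $D^+$ at the fixed height $y=\tfrac12$ and show that each piece can hold at most one vertex by comparing its diameter with a uniform lower bound on the ply-radius over that piece. Both approaches rest on the same basic inequality $r_w\ge \tfrac12\sqrt{(|a|+1)^2+b^2}$, but they organise the case analysis differently.

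What each buys: the paper's method needs only one diameter estimate (for the residual region), but as written it checks this only for the extremal position $x=(0,1)$ and leaves the general position of the highest vertex implicit. Your method avoids that dependence on $x$ entirely and makes every step explicit, at the cost of the somewhat delicate diameter computation for $D^{+}_T$ that you correctly flag; once you verify monotonicity of the distance along each of the four boundary arcs (so that the maximum is attained at one of the five corners), the numerical comparison $5/16>2-\sqrt 3$ closes the case cleanly. Your treatment of $D^{+}_B$ via the threshold $\beta=2-\sqrt 3$ is neat and self-contained, and the strict inequality $|w_i-w_j|<\sqrt{2-\sqrt 3}$ indeed holds because the diameter of $\overline{D^{+}_B}$ is achieved only at the pairs $\{(0,0),(\pm(1-\tfrac{\sqrt 3}{2}),\tfrac12)\}$, none of which lies in $D^{+}_B$ with $b_i\ge\beta$.
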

\begin{proof}
Let $x \in D^+$ be the point with the largest distance from $e$. Note that this point is unique. The diameter of the set $D^+ \setminus C_x$, where $C_x$ is the ply disk of $x$, is less than $0.3$ and hence there cannot be placed two more disks with radius at least $0.5$ (cf. Figure \ref{fig:diameterD}).
\begin{figure}
\centering
\includegraphics[width=.5\textwidth]{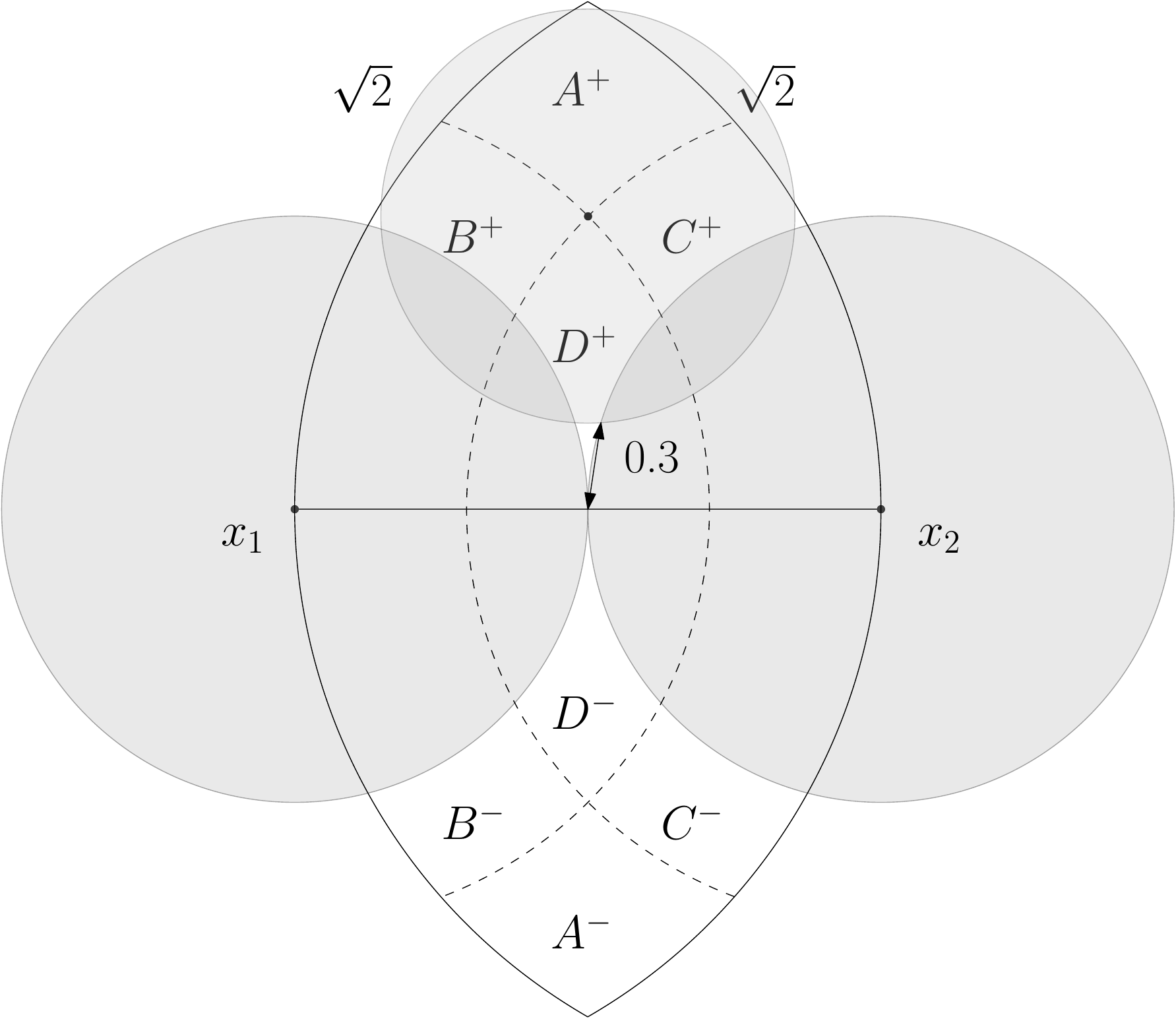}
\caption{One vertex is placed at the highest coordinate in $D^+$. Its ply disk indicates a small remaining region for any other vertices in $D^+$. }\label{fig:diameterD}
\end{figure}

\end{proof}

\begin{observation}\label{Obs:D+D-=3}
There are at most three vertices in $D^+$ and $D^-$.
\end{observation}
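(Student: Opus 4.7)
The plan is to argue by contradiction. Suppose $|D^+\cup D^-|\ge 4$. Since Observation~\ref{Obs:D+=2} (applied to both $D^+$ and, by reflective symmetry across $e$, to $D^-$) gives $|D^+|, |D^-|\le 2$, we must have $|D^+|=|D^-|=2$. Label the vertices so that $v_1, v_2\in D^+$ with $y(v_1)\ge y(v_2)$, and $v_3, v_4\in D^-$ with $y(v_3)\le y(v_4)$.

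First I would refine the proof of Observation~\ref{Obs:D+=2} to extract positional information beyond the counting bound. That proof shows that the ply-disk of the ``high'' vertex $v_1$ covers all but a residual subregion of $D^+$ of diameter less than $0.3$, pinned to the waist point $(0,0)$; the other vertex $v_2$ must lie inside this residual region. The symmetric analysis on $D^-$ places $v_4$ in an analogous residual region near $(0,0)$ on the opposite side of $e$.

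The second step exploits the extreme narrowness of $D$ at the waist: from the defining inequality $(a\pm 1)^2 + b^2\ge 1$, every $(a,b)\in D$ with small $|b|$ satisfies $|a|\le b^2/2$. Thus $v_2$ and $v_4$ sit essentially on the vertical axis through $(0,0)$, with $|v_2 v_4|$ dominated by the vertical gap $y(v_2)-y(v_4)$. Since each residual region has vertical extent at most $1-\sqrt{1/2}\approx 0.293$, we get $|v_2 v_4| < 0.6$. On the other hand, the empty-ply condition forces $|v_2 v_4|\ge \max(r_{v_2}, r_{v_4})$, and a priori $r_{v_2}\ge |v_2-x_1|/2\ge 1/2$. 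To obtain a strict contradiction I would push $r_{v_2}$ above $1/2$ by noting that the edge $v_2 v_3$ runs from the waist to deep inside $D^-$ (since $v_3$ lies in the counterpart of the disk $C_x$ near $(0,-1)$), so its length exceeds $1$ by a margin depending on $y(v_3)$, forcing $r_{v_2}\ge |v_2 v_3|/2 > 1/2$. Combined with the quadratic narrowing of $D$ near the waist, this upgrades $|v_2 v_4|<r_{v_2}$ into a strict inequality, contradicting empty-ply.

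The hardest part will be making the quantitative comparison robust over all admissible positions of $v_1$ and $v_3$ inside their respective ``top'' regions from Observation~\ref{Obs:D+=2}. My plan is to parameterize by $L=|v_1 v_3|$ and use the empty-ply chain $|v_1 v_2|\ge r_{v_1}\ge L/2$ together with its symmetric counterpart to express the maximal admissible $y(v_2)$ and $|y(v_4)|$ as decreasing functions of $L$. As $L$ grows, $v_2$ and $v_4$ are squeezed toward $(0,0)$ (shrinking $|v_2 v_4|$), while as $L$ shrinks, $r_{v_2}$ drops but the vertical extent of the admissible placements cannot grow enough to compensate, because of the quadratic narrowing $|a|\le b^2/2$ near the waist. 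A one-parameter optimization over $L\in [L_{\min}, 2]$ should then show the feasibility region for $(v_2,v_4)$ is empty uniformly, closing the contradiction.
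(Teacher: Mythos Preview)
Your high-level strategy coincides with the paper's: assume $|D^+|=|D^-|=2$, isolate the two ``near-waist'' vertices $v_2,v_4$, and show they are closer than the ply-radius of one of them. The divergence is in execution.

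The paper avoids any parametric analysis by fixing, once and for all, a tripartition $D^+=D^+_1\cup D^+_2\cup D^+_3$ (and symmetrically for $D^-$). The boundary between $D^+_1$ and the rest is the Apollonius locus $\{p:\max_i|x_ip|=2|cp|\}$; a vertex below it has a ply-disk that already reaches $c$. They then carve out a bottom sliver $D^+_3$ so that any vertex in the middle strip $D^+_2$ has a ply-disk covering both $D^+_1$ and $D^+_3$. This forces the two vertices of $D^+$ into $D^+_1$ and $D^+_3$ with no case analysis on their actual positions, and likewise for $D^-$. A single diameter estimate $\diam(D^+_3\cup D^-_3)\le 0.42<\tfrac12$ then finishes.

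Your plan instead tracks the actual positions of $v_1,v_3$ and tries to reduce to a one-parameter optimization over $L=|v_1 v_3|$. This is where the proposal is incomplete. The residual region for $v_2$ is $D^+\setminus D_{v_1}$, which depends on the full position of $v_1$ inside $D^+$, not just on $L$: moving $v_1$ down enlarges the portion of $D^+$ above it while shrinking the portion below, and the ply-radius bound $r_{v_1}\ge \max(L/2,\,|v_1x_i|/2)$ shifts with $y(v_1)$ as well. So ``the maximal admissible $y(v_2)$'' is really a supremum over both $L$ and the location of $v_1$; a one-variable sweep in $L$ alone does not cover all configurations. The same issue recurs when you invoke $r_{v_2}\ge |v_2 v_3|/2$ to push $r_{v_2}$ strictly above $\tfrac12$: that lower bound depends on where $v_3$ sits, which you have not pinned down beyond $v_3\in D^-$.

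To make your route rigorous you would need an additional monotonicity argument showing that the extremal configuration occurs with $v_1,v_3$ at specific canonical positions---effectively reinventing the paper's fixed strips. The tripartition is precisely the device that lets the paper bypass this nested optimization.
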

\begin{proof}
Consider the top vertex $d_t$ to place in $D^+$. The distance of either $x_1$ or $x_2$ needs to be at most $2 \times \text{dist}(d_t, c)$ where $c$ is the center of $(x_1, x_2)$. Solving the equation 
\[\text{dist}(x, d_t) = 2 \times \text{dist}(c, d_t)\]
for $x = x_1$ and $x_2$ partitions the region $D^+$ into a top ($D^+_1$) and a bottom part.
Placing one vertex in $D^+_1$ and one in $D^-_1$ we can bound the region to place the second vertex in $D^+$ to have a minimal distance of $\frac{1.58}{2}$ of the topmost coordinate in $D^+_1$ (placing a vertex at the topmost coordinate of $D^-_3$). The region between $c$ and this bound is called $D^+_3$ and the region in between $D^+_2$.
Any vertex in $D^+_2$ excludes vertices in $D^+_1$ and $D^+_3$ and thus of any two vertices in $D^+$ $d_t$ has to lie in $D^+_1$ where $d_b$ (the bottom vertex) has to lie in $D^+_3$. The regions are drawn in Figure \ref{fig:splitD1D2D3}.
\begin{figure}
\centering
\includegraphics[width=.5\textwidth]{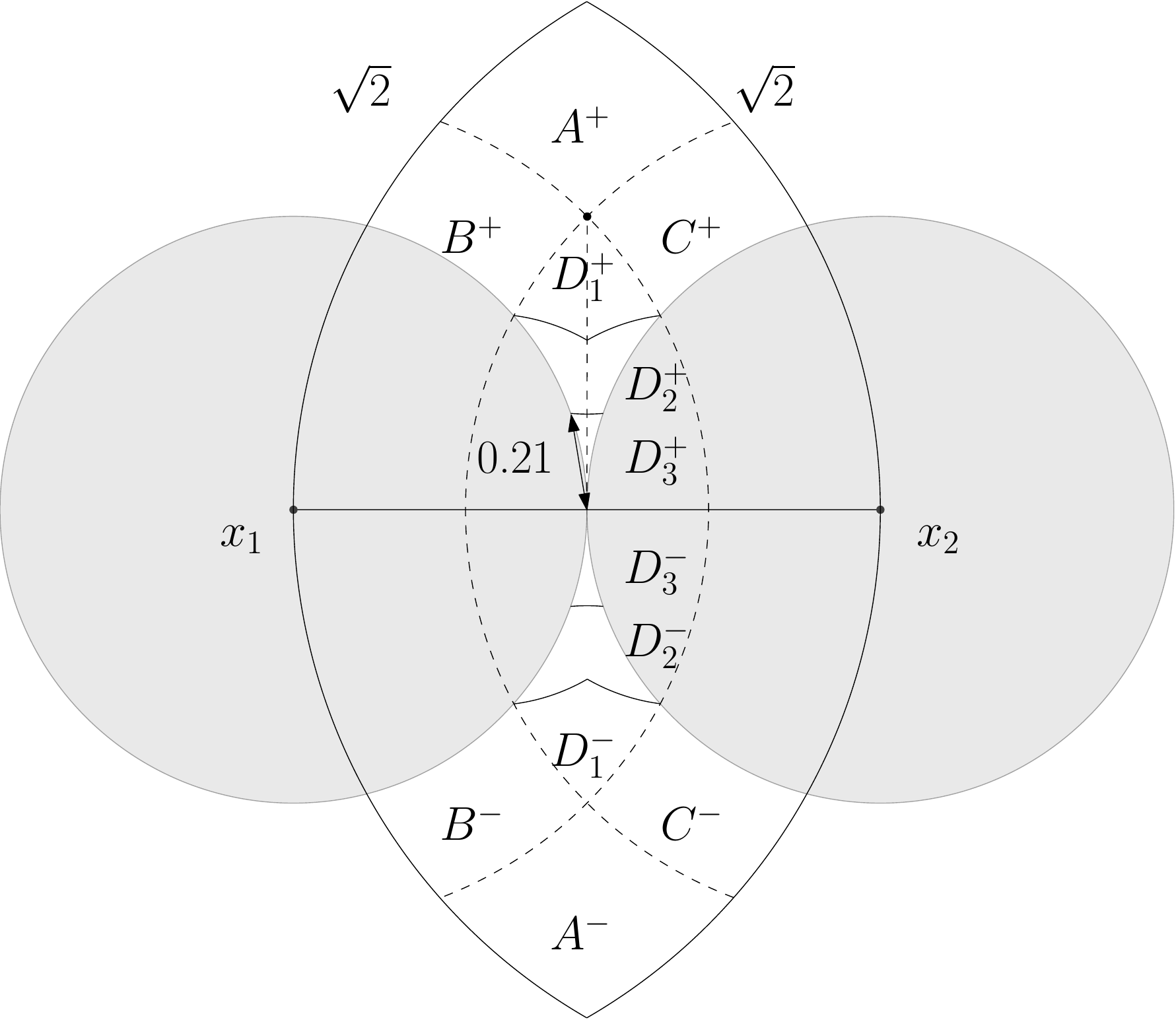}
\caption{Partition of $D^+$ and $D^-$. Whenever there is a vertex in $D^+_2$ then $D^+_1$ and $D^+_3$ are empty.}\label{fig:splitD1D2D3}
\end{figure}

Any four vertices in $|D^+| + |D^-|$ have to be placed in $D^+_1, D^+_3, D^-_1$ and $D^-_3$. Two vertices in $D^+_3$ and $D^-_3$ need a distance $\geq 0.5$. Thus according to the diameter of $D^+_3 \cup D^-_3 \leq 0.42$ there can not be two vertices with ply disk radii of at least $ 0.5$.

\end{proof}

\begin{observation}\label{Obs:A^+=1A-=0}
Whenever there exists a vertex in $A^+$ there cannot be a vertex in $A^-$.
\end{observation}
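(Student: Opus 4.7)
The plan is to derive a contradiction with the choice of $(x_1,x_2)$ as a longest edge of $\Gamma$: assuming both $a^+\in A^+$ and $a^-\in A^-$ exist simultaneously, I will show that $|a^+ a^-|>2=|x_1 x_2|$, which is impossible.

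Place coordinates so that $x_1=(-1,0)$ and $x_2=(1,0)$; then $A^+$ lies in the half-plane $y>0$ and $A^-$ in $y<0$. As a first step, combining $|p\,x_1|\geq \sqrt{2}$ with $|p\,x_2|\leq 2$ (and the mirror combination) confines the $x$-coordinate of every point of $A$ to the interval $[-\tfrac{1}{2},\tfrac{1}{2}]$; this is already implicit in the figure but I will record it cleanly.

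Next I will pin down the $y$-coordinate of $a^+=(x^+,y^+)$. The strict inequalities $|a^+ x_1|>\sqrt{2}$ and $|a^+ x_2|>\sqrt{2}$ built into the definition of $A$ translate to
\[(y^+)^2 \;>\; 2-\min\bigl((x^++1)^2,\,(x^+-1)^2\bigr).\]
On $x^+\in[-\tfrac{1}{2},\tfrac{1}{2}]$ the right-hand minimum is at most $1$, attained at $x^+=0$, so $(y^+)^2>1$, i.e.\ $y^+>1$. A symmetric argument gives $y^-<-1$ for every $a^-\in A^-$.

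Combining these two coordinate bounds, $|a^+ a^-|\geq y^+-y^- > 2 = |x_1 x_2|$, contradicting the maximality of $(x_1,x_2)$. I do not foresee a substantial obstacle; the only delicate point is keeping the inequality $y^+>1$ strict, which is ensured because the paper defines $A$ with the \emph{strict} lower bound $\sqrt{2}<|x_i\,p|$ (so the circle $|x_i\,p|=\sqrt{2}$ belongs to $D$, not $A$, and the limiting configuration $a^{\pm}=(0,\pm1)$ is excluded).
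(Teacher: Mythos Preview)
Your argument is correct and follows the same idea as the paper's proof, which simply asserts in one line that the minimal distance between a point of $A^+$ and a point of $A^-$ exceeds $2$, contradicting the maximality of $(x_1,x_2)$. You have supplied the explicit coordinate computation that the paper omits; in particular your verification that the maximum of $\min((x+1)^2,(x-1)^2)$ over $x\in[-\tfrac12,\tfrac12]$ equals $1$ (at $x=0$) and your observation that the strict lower bound $\sqrt{2}<|x_i\,p|$ in the definition of $A$ keeps the inequality $y^+>1$ strict are exactly what is needed to justify the paper's one-line claim.
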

\begin{proof}
Assume there exist one vertex in $A^+$ and one vertex in $A^-$. The minimal distance these vertices can have is $> 2$ which contradicts the maximality of the edge $(x_1, x_2)$.
\end{proof}

\begin{observation}\label{Obs:A+=1D-=1}
Whenever one vertex exists in $A^-$ there is at most one vertex in $D^+$.
\end{observation}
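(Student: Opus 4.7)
\bigskip
\noindent\textbf{Proof plan.}
I would argue by contradiction: assume $a\in A^-$ and that two vertices $d_t,d_b\in D^+$ exist with $d_{t,y}>d_{b,y}$. The key quantitative fact is that every $a\in A^-$ satisfies $|a_y|>1$. Indeed, summing the defining inequalities $(a_x\pm 1)^2+a_y^2>2$ yields $a_x^2+a_y^2>1$, while the bounds $(a_x\pm 1)^2+a_y^2\leq 4$ force $|a_x|\leq 1$; the tighter of the two strict inequalities then gives $a_y^2>2-(1-|a_x|)^2\geq 1$.

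From $|a_y|>1$ and $d_{t,y}>0$, the vertical component of $ad_t$ alone gives $|ad_t|\geq d_{t,y}+|a_y|>d_{t,y}+1$. Since $(a,d_t)$ is an edge incident to $d_t$, $r_{d_t}\geq |ad_t|/2>(d_{t,y}+1)/2$, and the empty-ply condition forces $|d_td_b|\geq r_{d_t}>(d_{t,y}+1)/2$.

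I would then establish the geometric lemma that, for every $d_t\in D^+$, $|d_tc|\leq (d_{t,y}+1)/2$ with equality iff $d_t=(0,1)$, where $c=(0,0)$ is the midpoint of $x_1x_2$. This is equivalent to $4d_{t,x}^2+3d_{t,y}^2-2d_{t,y}-1\leq 0$ and follows by substituting the bound $d_{t,x}^2+d_{t,y}^2\leq 1-2|d_{t,x}|$ (itself a consequence of $(d_{t,x}\pm 1)^2+d_{t,y}^2\leq 2$) into this quadratic. Combined with the claim that $c$ attains $\max\{|d_tp|:p\in D^+,\ p_y\leq d_{t,y}\}$, one obtains $|d_td_b|\leq |d_tc|\leq (d_{t,y}+1)/2$, contradicting the strict bound of the previous paragraph. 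The only loophole $d_t=(0,1)$ is ruled out because there $|ad_t|\geq 1+|a_y|>2$ would violate the longest-edge bound.

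The main obstacle is verifying rigorously that $c$ attains this restricted maximum, since $D^+$ is a lens bounded by four circular arcs and is not itself convex. I would handle this via a Lagrange-multiplier argument on each boundary arc: the interior critical points of $|d_tp|$ on a circle of center $x_i$ lie on the line through $d_t$ and $x_i$, which one checks never meets the relevant sublens in its interior, reducing the maximum to a finite list of arc endpoints among which $c=(0,0)$ wins. A clean alternative, in the spirit of Observation~\ref{Obs:D+D-=3}, is to exploit that $A^-$ lies strictly below $D^-$ and therefore imposes strictly stronger constraints than a vertex of $D^-$ would, so that the diameter-based estimate used there degenerates and rules out $|D^+|\geq 2$ at once.
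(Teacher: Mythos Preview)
Your route and the paper's share the same core idea: every vertex $d\in D^+$ has a ply-disk large enough to contain the midpoint $c=(0,0)$. The paper obtains this in one stroke via the Apollonius locus $\{x:|ax|=2|cx|\}$, which for the extremal $a$ is the circle of center $(0,\tfrac13)$ and radius $\tfrac23$ and happens to contain all of $D^+$; you obtain it through the chain $r_{d_t}\geq|ad_t|/2>(d_{t,y}+1)/2\geq|d_tc|$. The paper then concludes in a single sentence (``the ply disk of any point $d$ in the circle covers $c$ and thus there cannot be a second vertex in $D^+$'') without further justification; your claims~(4) and~(5) are an explicit attempt to supply that missing step, which is a genuine improvement in rigour over what the paper writes.

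Two concrete problems with your execution, though. First, the substitution you give for the ``geometric lemma'' does not work: writing $4d_{t,x}^2+3d_{t,y}^2-2d_{t,y}-1=3(d_{t,x}^2+d_{t,y}^2)+d_{t,x}^2-2d_{t,y}-1$ and applying $d_{t,x}^2+d_{t,y}^2\leq 1-2|d_{t,x}|$ leaves $d_{t,x}^2-6|d_{t,x}|+2-2d_{t,y}$, which is \emph{positive} for $d_t$ near $c$ (e.g.\ $d_t=(0,0.1)$ gives $1.8$). The inequality $|d_tc|\leq(d_{t,y}+1)/2$ is true---it says $D^+$ lies inside the ellipse $4x^2+3(y-\tfrac13)^2=\tfrac43$, tangent at $(0,1)$---but you need both the inner and outer arc constraints to see it; one bound alone is not enough. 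Second, your Lagrange-multiplier plan for claim~(5) is correct in principle but heavier than necessary. A quicker completion, closer in spirit to the paper's one-line jump, is this: $D^+$ subtends an angle of less than $60^\circ$ at $c$, since every point of $D^+$ lies within $\arctan(1/\sqrt7)\approx 20.7^\circ$ of the vertical (the extreme directions go to the corners $(\pm\tfrac14,\tfrac{\sqrt7}4)$, and along each boundary arc the angle is monotone). Hence for any $d_1,d_2\in D^+$ with $|d_1c|\geq|d_2c|$ the law of cosines gives $|d_1d_2|^2=|d_1c|^2+|d_2c|^2-2|d_1c||d_2c|\cos\angle d_1cd_2\leq|d_1c|^2$, and combined with $r_{d_1}>|d_1c|$ this puts $d_2$ strictly inside $D_{d_1}$. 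This replaces both your claim~(4) and claim~(5) at once and avoids the boundary case-analysis entirely.
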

\begin{proof}
Assume the vertex $a \in A^-$ is fixed at the highest possible position. The distance of $a$ to the center $c$ of $(x_1, x_2)$ is $> 1$. Where as the distance of any point $d \in D^+$ to $c$ is $<1$ due to the maximality of $(x_1, x_2)$. 
Thus if $d$ is at the highest possible coordinate $c$ is covered by its ply disk (distance $a$ to $b$ = 2).\\
Solving the equation 
\[\text{dist}(a, c) = 2 \text{dist}(x,c)\]
 results in a circle which includes $D^+$ completely (shown in Figure \ref{OnevertexInA-}). The circle is centered $\frac{1}{3}$ above $c$ and has a radius of $\frac{2}{3}$. The ply disk of any point $d$ in the circle covers $c$ and thus there can not be a second vertex in $D^+$.
\begin{figure}
\centering
\includegraphics[width=.5\textwidth]{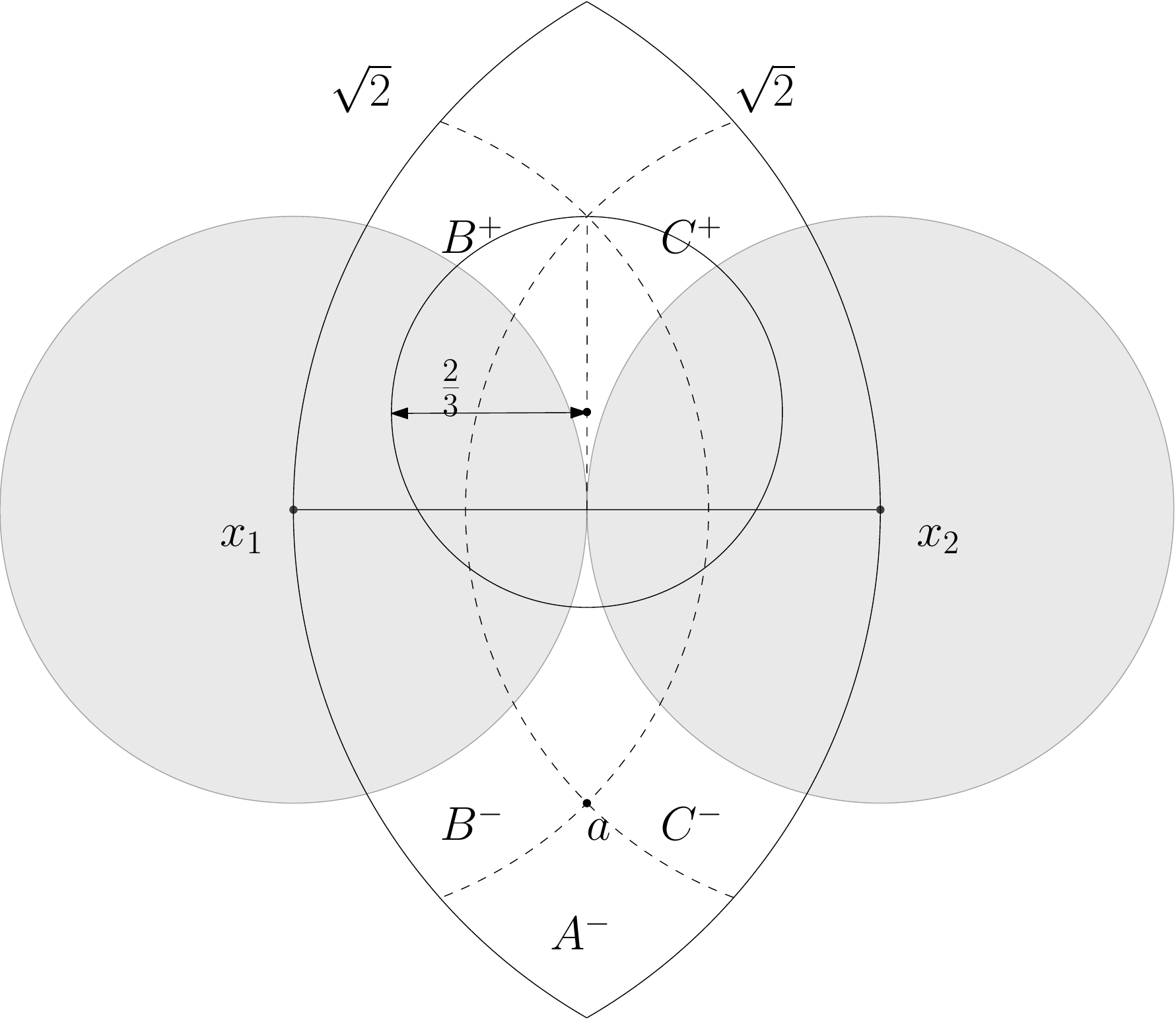}
\caption{A vertex in $A^-$ causes the ply disk of any vertex in $D^+$ to cover the center of $(x_1, x_2)$ and thus there can not be a second vertex in $D^+$. }\label{OnevertexInA-}
\end{figure}
\end{proof}

\begin{observation}\label{Obs:A+=1D+D-=2}
Whenever there exists one vertex in $A^+$ then there are at most two vertices in $D^+ \cup D^-$.
\end{observation}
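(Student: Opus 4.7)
The plan is to argue by contradiction, assuming $a \in A^+$ and $|D^+\cup D^-|\geq 3$. Using the symmetric version of Observation~\ref{Obs:A+=1D-=1} (reflecting across the horizontal line $l$ through $x_1$ and $x_2$) we obtain $|D^-|\le 1$, and Observation~\ref{Obs:D+=2} gives $|D^+|\le 2$. Combining these bounds forces the configuration $|D^+|=2$ and $|D^-|=1$; denote the two vertices in $D^+$ by $d_t,d_b$ with $d_t$ the one of larger $y$-coordinate, and the unique vertex in $D^-$ by $d^-$.

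I would then pin down the positions of these three vertices by the extremality arguments already used in the appendix. By the uniqueness statement in Observation~\ref{Obs:D+=2}, $d_t$ must lie at (essentially) the unique topmost point of $D^+$, and by the sub-region analysis in Observation~\ref{Obs:D+D-=3}, the second vertex $d_b$ is forced into the bottom piece $D^+_3$ near the midpoint $c$ of the edge $(x_1,x_2)$. In addition, the constraint $|a\,d^-|\le 2$ together with $a\in A^+$ (whose $y$-coordinate necessarily exceeds $1$ near the vertical through $c$) pushes $d^-$ into the upper portion of $D^-$ adjacent to $c$.

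The core geometric step is to show that the ply-disk of $d_t$ contains $d_b$ in its interior. Since $d_t$ is adjacent to $x_1,x_2,a$, and $d^-$, the radius $r_{d_t}$ is at least half the length of the longest of these edges; in particular $r_{d_t}\ge\sqrt{2}/2$ from $|d_t\,x_i|=\sqrt{2}$, and the edge $d_t\,d^-$ (which crosses $l$) further inflates this radius. A diameter estimate for $D^+_3$, in the same spirit as those in Observations~\ref{Obs:B+=1}, \ref{Obs:D+=2}, and \ref{Obs:D+D-=3}, then shows that the open disk of radius $r_{d_t}$ around $d_t$ engulfs all of $D^+_3$, placing $d_b$ inside $D_{d_t}$ and contradicting the empty-ply property.

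The main obstacle I anticipate is the careful computation of $r_{d_t}$ as a function of the precise positions of $a$ and $d^-$. A clean way may be to split into sub-cases based on whether $|d_t\,d^-|\ge \sqrt{3}$ or not, applying Lemma~\ref{le:ratio-incident-edges} in the latter case to transfer the bound to another incident edge. Alternatively, one can invoke an Apollonius-style circle centred at $c$ with reference vertex $a$ (symmetric to the one in Observation~\ref{Obs:A+=1D-=1}) to force $d^-$'s ply-disk to cover $c$, and then combine with the fact that $d_t$'s ply-disk already extends nearly to $c$ to cover the intermediate $d_b$.
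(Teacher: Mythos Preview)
Your opening reduction to $|D^+|=2$, $|D^-|=1$ is correct and matches the paper. The gap is in your ``core geometric step'': the claim that $D_{d_t}$ contains $d_b$ is not true in general.

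First, Observation~\ref{Obs:D+=2} does not force $d_t$ to the topmost point of $D^+$; its proof takes the \emph{actual} higher vertex (wherever it lies) and bounds the remainder. From Observation~\ref{Obs:D+D-=3} you only get $d_t\in D^+_1$, so its height can be anywhere in $[1/\sqrt{3},1]$. Take for instance $d_t=(0,0.7)$: then $r_{d_t}\approx 0.61$ from the edges to $x_1,x_2$, and the edges to $a$ and $d^-$ that you propose to use do not enlarge this --- for admissible $a\approx(0,1.7)$ and $d^-\approx(0,-0.25)$ they contribute only about $0.5$ and $0.48$. A second vertex $d_b\approx(0,0.08)$ then lies in $D^+$ but outside $D_{d_t}$, so your engulfing claim fails. (This configuration is still not empty-ply, but the violation is elsewhere.)

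The paper pulls a lever you never use: the ply-disk of $a$ must not contain $d_t$. Since $r_a\ge |ax_i|/2$, this pushes $a$ far above $c$; a short quadratic (taking $d_t$ at its lowest legal height $1/\sqrt{3}$) yields $|ac|\approx 1.47$. Then $|ad^-|\le 2$ pins $d^-$ to within about $0.53$ of $c$. The contradiction now comes from $D_{d_b}$, not $D_{d_t}$: the second vertex $d_b\in D^+$ sits close to $c$, and its ply-disk covers the entire region still available to $d^-$. So the victim is $d^-$ and the offending disk is $D_{d_b}$. Your alternative sketch gestures in this direction, but ``combining'' two ply-disks is not a legitimate empty-ply violation; if instead you let the long edge $(d_b,a)$ inflate $r_{d_b}$ and show $d^-\in D_{d_b}$, you recover exactly the paper's argument.
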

\begin{proof}
Assume there exist four vertices in $A^+, D^+$ and $D^-$. By Observation \ref{Obs:A+=1D-=1} there is at most one vertex in $D^-$ thus there are exactly two vertices in $D^+$.

Note that by Observation \ref{Obs:D+D-=3} the top vertex in $D^+$ has at least the distance $\frac{1}{\sqrt{3}}$ from $c$. Assume the top vertex in $D^+$ to be placed exactly at this coordinate.
The minimal distance $d$ from the vertex in $A^+$ to either $x_1$ or $x_2$ can be calculated by solving the following equation:
\begin{align*}
d^2 &= 1^2 + \left(\frac{d}{2} + \frac{1}{\sqrt{3}}\right)^2\\
d^2 &= 1 + \left(\frac{d}{2}\right)^2 + 2 \cdot \frac{d}{2} \cdot \frac{1}{\sqrt{3}} + \frac{1}{3}\\
d^2 ( 1- \frac{1}{4}) &= 1 + \frac{1}{3} + \frac{d}{\sqrt{3}}\\
\frac{3}{4} d^2 - \frac{1}{\sqrt{3}} d - \frac{4}{3} &= 0
\end{align*}

Which has the one positive solution $d = \frac{2}{9} \left(\sqrt{3} + \sqrt{39}\right) \approx 1.77$ and thus the minimal distance of any vertex in $A^+$ to the center of $(x_1, x_3)$ is $\frac{d}{2} + \frac{1}{\sqrt{3}} \approx  1.47$ (cf. Figure \ref{fig:OnevertexBelow}).
The distance of any vertex in $D^-$ to the vertex in $A^+$ has to be $\leq 2$ due to the maximality of the edge $(x_1, x_2)$.
But the ply disk of the second vertex in $D^+$ covers the remaining region to place any vertex below $(x_1, x_2)$. 
This contradicts our assumption that there exist four vertices in $A^+, D^+$ and $D^-$. 
\end{proof}
\begin{figure}
\centering
\includegraphics[width=.40\textwidth]{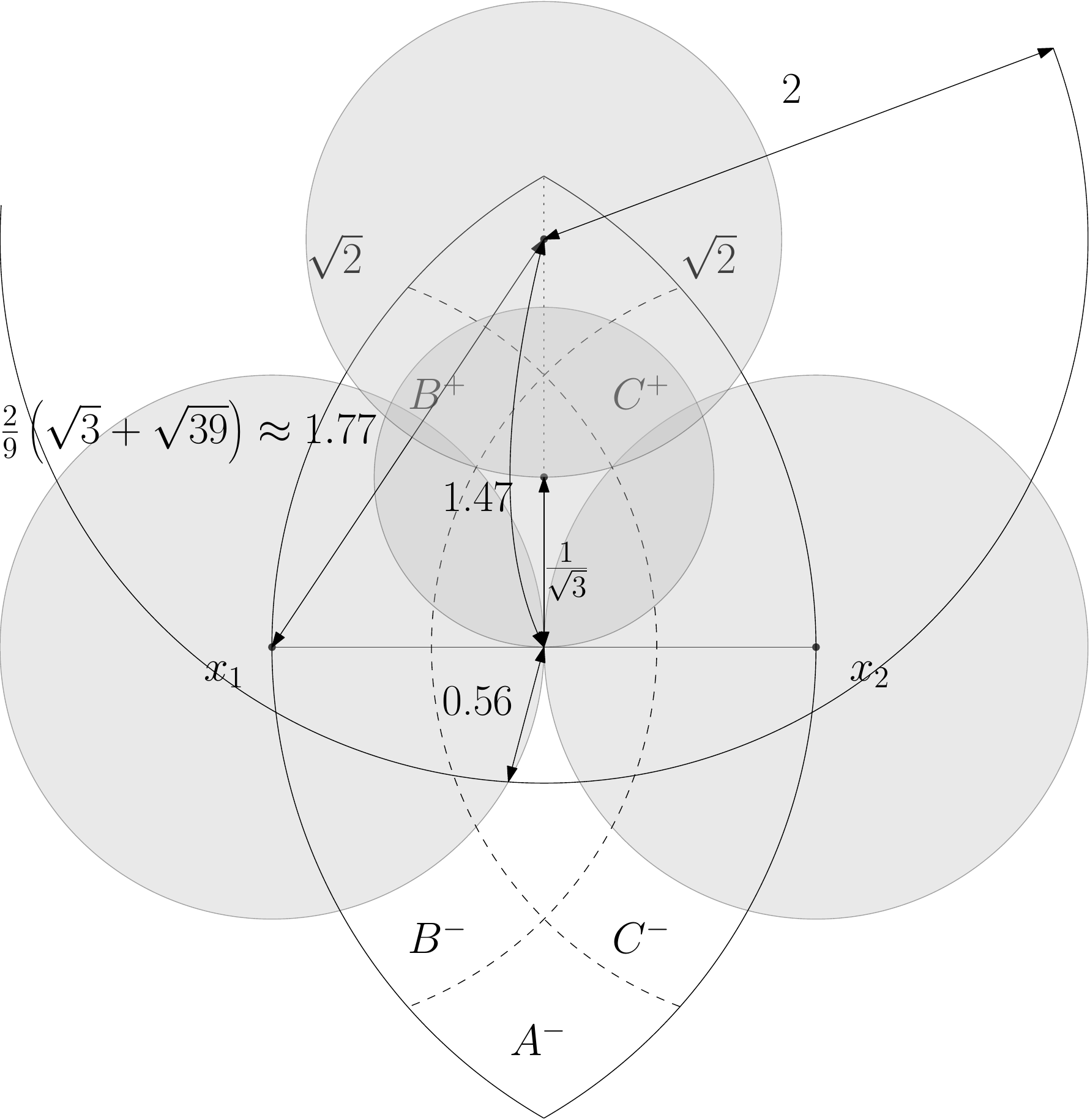}
\caption{Whenever there exist exactly two vertices in $D^+$ and one in $A^+$ there cannot exist a vertex in $D^-$.}\label{fig:OnevertexBelow}
\end{figure}

\begin{observation}\label{Obs:B+C-=2D+D-=1}
Whenever there exists exactly one vertex in $B^-$ and exactly one vertex in $C^+$ then there exists at most one vertex in $D^+$ or $D^-$.
\end{observation}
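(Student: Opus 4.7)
The plan is a proof by contradiction: assume $b \in B^-$, $c \in C^+$, and two distinct vertices $d_1, d_2 \in D^+ \cup D^-$. I will first record the ply-radius lower bounds that drive the whole argument. Since $b$ is adjacent to $x_2$ at distance $\geq \sqrt{2}$, we have $r_b \geq \sqrt{2}/2$, and by symmetry $r_c \geq \sqrt{2}/2$; every $d \in D^+ \cup D^-$ satisfies $r_d \geq 1/2$ because it is adjacent to both $x_1$ and $x_2$ at distance $\geq 1$. Moreover, for $c$ with $|cx_1|$ close to $2$, the stronger bound $r_c \geq |cx_1|/2$ gives $r_c$ up to $1$. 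The $180^\circ$ rotation about the midpoint of $(x_1,x_2)$ swaps $B^- \leftrightarrow C^+$ and $D^+ \leftrightarrow D^-$, so the analysis reduces to two sub-cases: (a) both $d_i$'s lie in $D^+$ (the ``both in $D^-$'' case is obtained by applying the rotation), and (b) one $d_i$ lies in each of $D^+, D^-$.

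In case (a), I will invoke the analysis from the proof of Observation~\ref{Obs:D+D-=3}: two vertices of $D^+$ must be placed one in the top sub-region $D^+_1$ (where $y \geq 1/\sqrt{3}$) and one in the bottom sub-region $D^+_3$ near the origin. The goal is to show that $D_c$ always contains at least one of $d_1, d_2$. To do so I will partition $C^+$ along the value of $|cx_1|$: when $|cx_1|$ is close to $2$, the stronger bound $r_c \geq |cx_1|/2$ makes $D_c$ large enough to contain the entire $D^+_1$ (including the top point $(0,1)$, which is at distance $\leq |cx_1|/2$ from any such $c$); when $|cx_1|$ is close to $\sqrt{2}$, $c$ is forced near the shared corner of $C^+$ and $D^+$ at $(1/4, \sqrt{7}/4)$, and direct distance computations to points of $D^+_1$ or $D^+_3$ show that $D_c$ of radius $\geq \sqrt{2}/2$ must swallow at least one of them.

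Case (b) is the main obstacle. Here $d_1 \in D^+$ and $d_2 \in D^-$ must each avoid $D_b$ and $D_c$ and satisfy $|d_1 d_2| \geq 1/2$. My plan is to show that the uncovered parts $D^+ \setminus D_c$ and $D^- \setminus D_b$ are together too constrained: since $b \in B^-$ and $c \in C^+$ lie in opposite regions of the partition and $|bc| \leq 2$, the ply-disks $D_b$ and $D_c$ overlap in a neighborhood of the origin, exactly where $D^+$ and $D^-$ meet through their shared tip. I plan to sub-partition $B^-$ and $C^+$ in the style of the $A^+_1, \ldots, A^+_4$ partition used in the proof of Observation~\ref{Obs:A+=1}, and verify the claim on each resulting combination either by exhibiting a point of $D^+ \cup D^-$ inside $D_b \cup D_c$ that excludes any placement of $d_1$ or $d_2$, or by bounding the diameter of the uncovered region below $1/2$. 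The hardest aspect is the position-dependence of $b$ and $c$: the precise shape of the uncovered region shifts as these vertices move, so a patient case-by-case analysis rather than a single uniform bound seems necessary.
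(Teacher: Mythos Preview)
Your primary decomposition (by whether the two hypothetical $D$-vertices lie in the same half-plane or in opposite ones) is different from the paper's: the paper instead splits on the positions of $b$ and $c$, using the subdivision $B^-=B_1^-\cup B_2^-$, $C^+=C_1^+\cup C_2^+$ introduced in Observation~\ref{Obs:B+=1}, and in each sub-case argues that $D_b\cup D_c$ \emph{together} leave too little of $D^+\cup D^-$ uncovered to host two vertices.

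There is a genuine gap in your case~(a): deriving the contradiction from $D_c$ alone is impossible. Put $c$ at the extreme corner $\bigl(\tfrac34,\tfrac{\sqrt{15}}{4}\bigr)$ of $C^+$, so that $|cx_1|=2$ and $r_c=1$. The far corner $\bigl(-\tfrac14,\tfrac{\sqrt7}{4}\bigr)$ of $D^+$ lies in $D_1^+$ but is at distance $\sqrt{(38-2\sqrt{105})/16}\approx 1.046>1$ from $c$, so your claim that $D_c$ contains all of $D_1^+$ already fails here. More decisively, take
\[
d_1=\Bigl(\tfrac{3-\sqrt{15}}{4},\ \tfrac{\sqrt{15}-1}{4}\Bigr)\in D^+,\qquad d_2=(0,0)\in D^+ .
\]
Then $|d_1x_2|=\sqrt2$, $|d_1c|=1$, $|d_2c|=\sqrt{3/2}$ and $|d_1d_2|=\sqrt{(5-\sqrt{15})/2}\approx 0.751$, whence $r_{d_1}=\sqrt2/2$ and $r_{d_2}=\sqrt6/4$; one checks that every empty-ply constraint among $\{x_1,x_2,c,d_1,d_2\}$ is satisfied (several on the boundary, but a tiny perturbation makes them strict). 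Hence the five-vertex subconfiguration is consistent, and no contradiction arises from $c$ alone. What actually breaks such a configuration is $b$: the constraint $|bc|\le 2$ forces $b$ toward the corner $\bigl(-\tfrac14,-\tfrac{\sqrt7}{4}\bigr)$ of $B^-$, after which either $D_b$ (of radius $\ge |bc|/2$, close to~$1$) contains $d_2$, or the long edge $(d_1,b)$ raises $r_{d_1}$ beyond $|d_1d_2|$. So $b$ is indispensable even in your ``same-side'' case; once you bring it back and reason about $D_b\cup D_c$ jointly, you have essentially reproduced the paper's position-of-$b,c$ analysis, and the (a)/(b) split no longer buys anything.
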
\setcounter{case}{0}

\begin{proof}
We will use the same partition of $B^-$ and $C^+$ as in Observation \ref{Obs:B+=1} and distinguish three cases about the placement within $B^-$ and $C^+$.
\begin{case}
There exist one vertex in $B^-_2$ and one vertex in $C^+_2$.
\end{case}

Note that the placement of the vertices is unique in this case. The distance of the two vertices is 2. Their ply disks meet at the center of $(x_1, x_2)$ and the rest of $D^+$ and $D^-$ is covered. This implies a unique coordinate to place a vertex in $D^+$.

\begin{case}
There exist one vertex in $B^-_1$ and one vertex in $C^+_2$.
\end{case}
Note that the maximal diameter of any possible region to place a vertex in $D^+$ is $\leq 0.59$. This region can be obtained by placing the vertex in $C^+_2$ at the rightmost coordinate and the vertex in $B^-_1$ at the lowest possible coordinate. 
Observe that moving any of both vertices reduce the diameter of the region, since either the topmost coordinate of the region moves down or the bottommost coordinate of the region moves up (cf. Figure \ref{fig:longDiagonal}).\\
The distance from $x_2$ to the center between the vertex in $C^+_2$ and $x_1$ is $\sqrt{\frac{3}{2}}$ and thus any vertex in the top region has a ply radius of $\leq 0.61$ any thus covers the remaining region of $D^+$ completely. The region below has a diameter of $\leq 0.34$ and therefore there exists at most one vertex with ply radius $\geq 0.5$. 
\begin{figure}
\centering
\includegraphics[width=.5\textwidth]{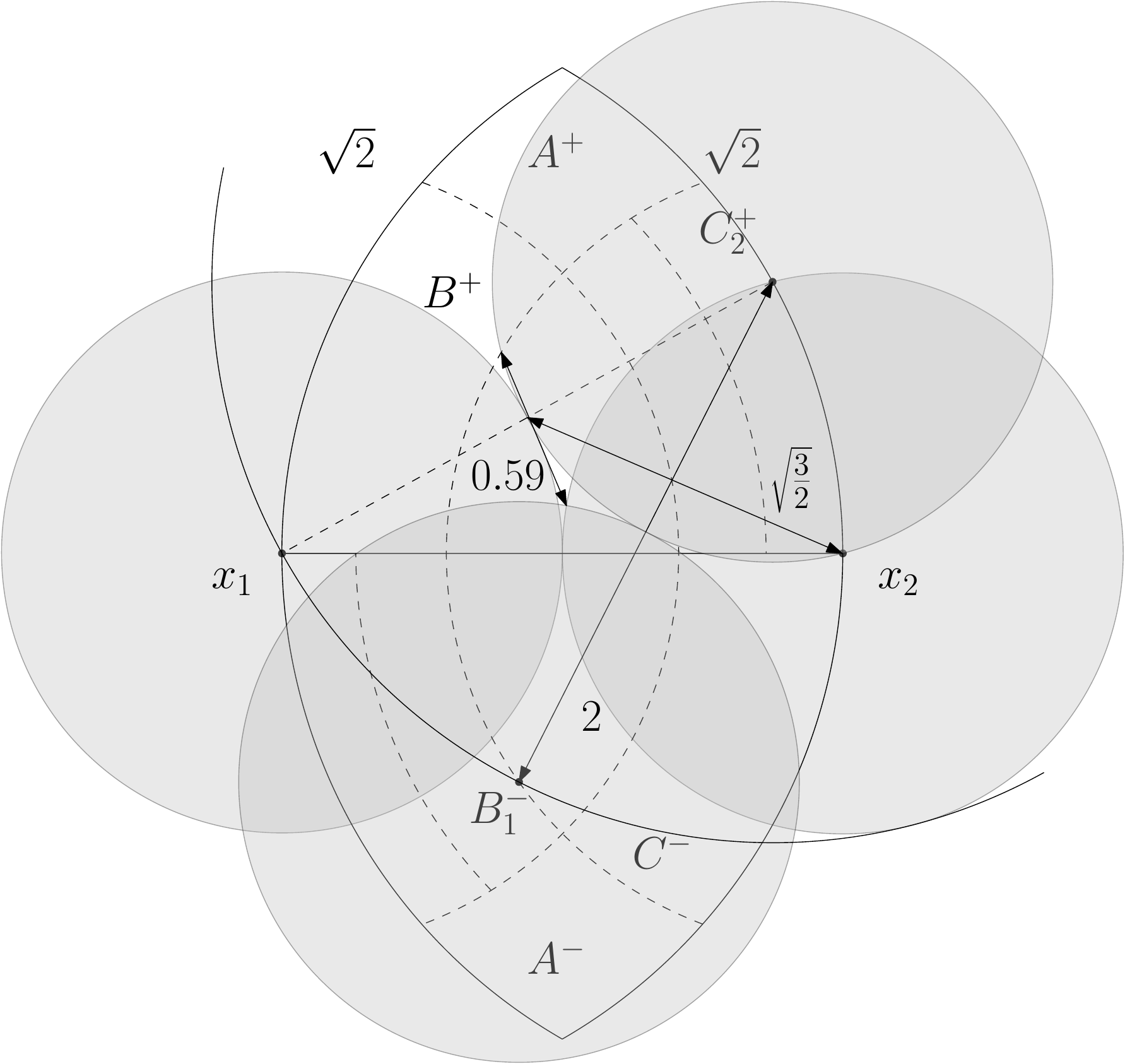}
\caption{The region with the maximal possible diameter in $D^+$ whenever there exist vertices in $B^-_1$ and $C^+_2$ }\label{fig:longDiagonal}
\end{figure}

\begin{case}
There exist one vertex in $B^-_1$ and one vertex in $C^+_1$.
\end{case}
Note that in any configuration the rightmost coordinate of $D^-$ and the leftmost coordinate of $D^+$ is covered by the ply disks of the vertex in $B^-_1$ and of the vertex in $C^+_1$. Thus $D_3^{+-}$ is covered in any case. All possible configurations result in one of the following cases:
\begin{enumerate}
\item[a)] $D^+_2$ and $D^-_2$ are covered.\\ Similar to Observation \ref{Obs:D+D-=3} there cannot be two vertices in $D^+_3 \cup D^-_3$.
\item[b)] Exactly one of the $D_2$ is not covered completely and the center of $(x_1, x_2)$ is covered.\\ Thus there can be at most one vertex in either $D^+$ or $D^-$.
\end{enumerate}
\end{proof}
This concludes Observation \ref{Obs:B+C-=2D+D-=1} saying whenever there exist vertices in $B^-$ and $C^+$ there exists at most one vertex in $D^+ \cup D^-$.

\bigskip
\begin{observation}\label{Obs:B-A+=2B+C+=1}
Whenever there exist one vertex in $B^-$ and one vertex in $A^+$ either $B^+$ or $C^+$ is empty.
\end{observation}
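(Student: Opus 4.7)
The plan is to argue by contradiction: assume the empty-ply drawing contains a vertex $a\in A^+$, a vertex $d\in B^-$, a vertex $b\in B^+$, and a vertex $c\in C^+$, and derive a contradiction by showing that the position of $a$ is so tightly constrained by the presence of $d$ that it must conflict with either $b$ or $c$.

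The first step is to localize $a$. The topmost point of the region $B^-$ is its corner $(-1/4,-\sqrt{7}/4)$, which arises as the intersection of the circle $|x_1v|=1$ with the circle $|x_2v|=\sqrt{2}$; therefore every $d\in B^-$ has $d_y\le -\sqrt{7}/4$. On the other hand, every $a\in A^+$ satisfies $a_y\ge 1$, since the lowest point of $A^+$ is the corner $(0,1)$ where the two outer circles $|x_1v|=\sqrt{2}$ and $|x_2v|=\sqrt{2}$ meet above $l$. Since $(x_1,x_2)$ is the longest edge of the drawing with length $2$, we have $|ad|\le 2$, and so the vertical gap $a_y-d_y$ is at most $2$. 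Combined with $a_y-d_y\ge 1+\sqrt{7}/4$, this forces $a_y\le 2-\sqrt{7}/4<1.34$, so $a$ is confined to the lower portion of $A^+$, near the corner $(0,1)$.

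The second step is a case analysis on the sign of $a_x$. In the case $a_x\le 0$, the vertex $a$ lies in the left half of the lower $A^+$, close to $B^+$. I would argue that for every point $b\in B^+$, either $|ab|<r_a$ or $|ab|<r_b$, where $r_a\ge|x_1a|/2$ and $r_b\ge|x_2b|/2$. The intuition is that if $|x_2b|$ is close to $\sqrt{2}$ (so $b$ is near the corner $(0,1)$), then $b$ is very close to $a$ and falls inside $D_a$, while if $|x_2b|$ is close to $2$ (so $b$ is near the corner $(-3/4,\sqrt{15}/4)$), then $r_b$ is close to $1$, and $|ab|$ is bounded above by the diameter of the narrow region containing both $a$ and $b$, so $a$ falls inside $D_b$. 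In either sub-case we obtain a contradiction, showing $B^+$ must be empty. The symmetric argument (swapping the roles of $x_1$ and $x_2$) handles the case $a_x\ge 0$ by showing that $C^+$ must be empty; thus in all cases at least one of $B^+$ or $C^+$ is empty.

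The main obstacle is the quantitative case analysis in the second step, specifically the intermediate regime where $|x_2b|\approx\sqrt{3}$. Here neither of the two bounds on $|ab|$ is automatically tight, and one must combine the localization of $a$ from Step~1 with an explicit computation (for instance, via the law of cosines) that simultaneously compares $|ab|$ to $|x_1a|$ and $|x_2b|$. I expect the cleanest exposition to partition $B^+$ into two sub-regions based on the arc $|x_2b|=\sqrt{3}$, analogously to the sub-partitions used in Observations~\ref{Obs:B+=1} and~\ref{Obs:A+=1}, and verify the contradiction in each sub-region separately.
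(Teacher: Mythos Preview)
Your localization of $a$ via $|ad|\le 2$ is relevant, but you overlook the far more useful consequence of the edge $(a,d)$: since $a$ and $d$ are adjacent in $K_n$, the ply-radius of $a$ satisfies
\[
r_a \;\ge\; \tfrac{|ad|}{2} \;\ge\; \tfrac{\operatorname{dist}(A^+,B^-)}{2} \;\approx\; \tfrac{1.68}{2} \;=\; 0.84.
\]
This is the paper's entire argument, and it is considerably stronger than the bound $r_a\ge |x_1a|/2\ge \sqrt{2}/2\approx 0.71$ that you rely on in Step~2. With $r_a\ge 0.84$ in hand the proof is one line: whichever of $B^+$ or $C^+$ lies on the same side of the perpendicular bisector of $(x_1,x_2)$ as $a$ has diameter at most $0.75$ (Observation~\ref{Obs:B+=1}) and abuts $A^+$, so it is entirely contained in $D_a$. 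No case split on $|x_2b|$, no comparison with $r_b$, and no intermediate regime to worry about.

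In other words, you extracted only the \emph{upper} constraint from the hypothesis $d\in B^-$ and then tried to finish with bounds coming from $x_1,x_2$ alone; the paper instead extracts the \emph{lower} constraint, which directly inflates $D_a$ enough to swallow one of the two side regions. Your proposed route may be completable with enough case analysis, but the ``main obstacle'' you identify is an artifact of using the weaker radius bound.
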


\begin{proof}
Any vertex in $A^+$ has a ply radius of at least $\frac{\text{dist}(B^-, A^+)}{2} \approx \frac{1.68}{2} $.Thus this vertex covers either $B^+$ or $C^+$ completely whether the vertex is closer to $B^+$ or $C^+$ as presented in Figure \ref{fig:B-andA+}.
\begin{figure}
\centering
\includegraphics[width=.5\textwidth]{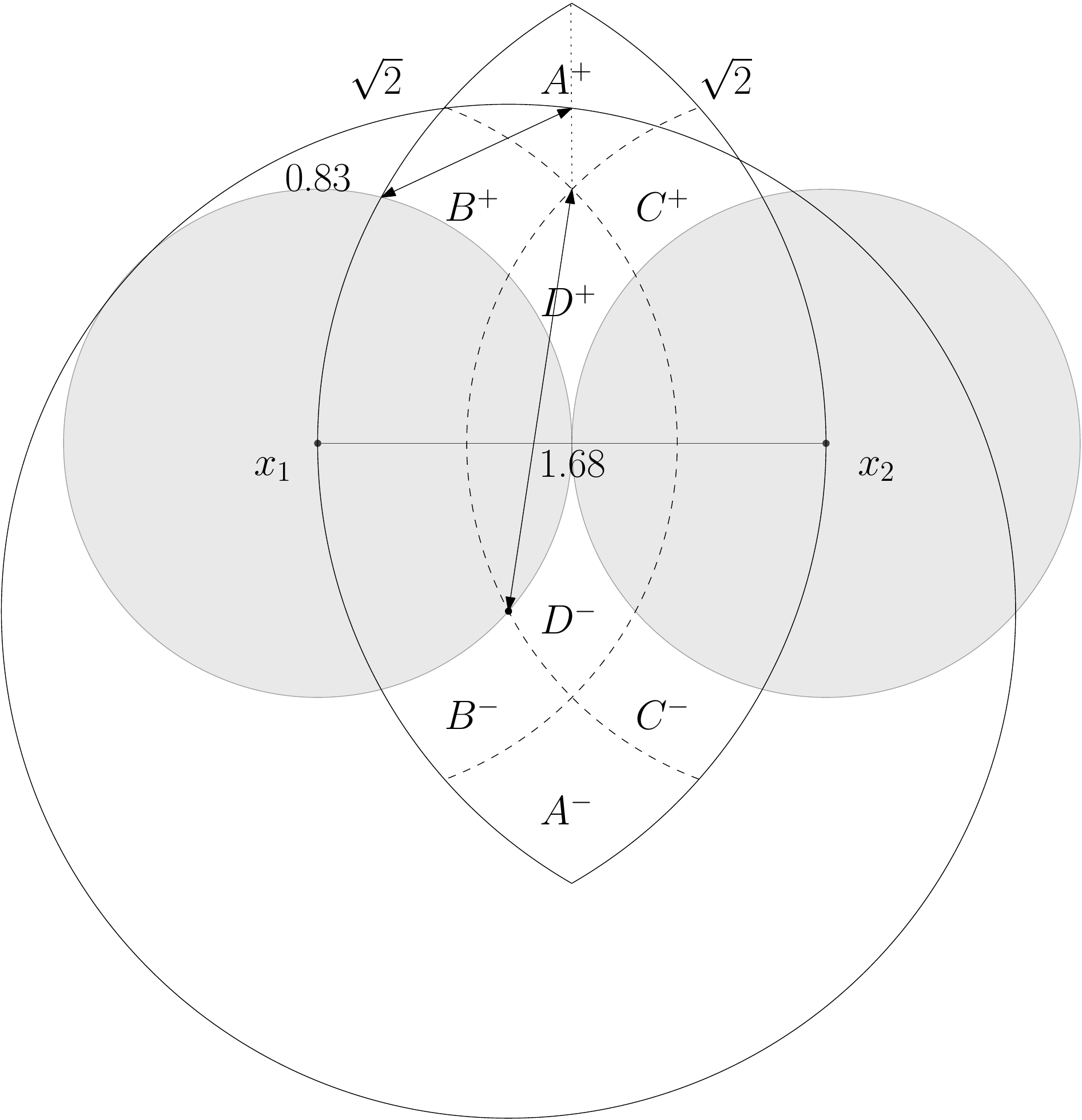}
\caption{Any vertex in $A^+$ has a ply radius of at least $\frac{\text{dist}(B^-, A^+)}{2} \approx \frac{1.68}{2} $.Thus this vertex covers either $B^+$ or $C^+$ completely whether the vertex is closer to $B^+$ or $C^+$.}\label{fig:B-andA+}
\end{figure}
\end{proof}
In the next lemmas we distinguish the different cases about the number of vertices placed below the edge $(x_1, x_2)$.

\begin{lemma}\label{noVertexAbove}
Whenever no vertex is below the edge $(x_1, x_2)$ then $n = 2 + |A^+| + |B^+| + |C^+| + |D^+| \leq 7$.
\end{lemma}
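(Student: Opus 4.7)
The plan is to apply the observations already established earlier in the appendix directly, since each one bounds the number of vertices in a single region above the line $\ell$ through $x_1$ and $x_2$. Specifically, by Observation~\ref{Obs:A+=1} we have $|A^+|\le 1$, by Observation~\ref{Obs:B+=1} we have $|B^+|\le 1$ and $|C^+|\le 1$, and by Observation~\ref{Obs:D+=2} we have $|D^+|\le 2$. Summing these four inequalities yields
\[
|A^+|+|B^+|+|C^+|+|D^+| \;\le\; 1+1+1+2 \;=\; 5.
\]
Adding the two endpoints $x_1,x_2$ of the longest edge, which by assumption account for the only vertices not located strictly above $\ell$ (since no vertex lies below), we obtain $n \le 2+5 = 7$, as claimed.

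The argument is essentially immediate once the four regional bounds are in hand, so there is no genuine obstacle here beyond making sure no vertex is double-counted; the partition of the admissible annular strip into $A^+,B^+,C^+,D^+$ (together with $x_1,x_2$ lying on $\ell$ itself) covers every possible vertex location exactly once, so the sum indeed counts $n-2$. I would also briefly remark that this bound matches the construction in Figure~\ref{fig:k7}, which realises $K_7$ and therefore shows the inequality is tight when no vertex lies below $\ell$; this avoids any lingering worry that a sharper combined bound is needed. The substantive work for Theorem~\ref{thm:k8} is thus entirely pushed into the case analysis of the remaining lemmas, where at least one vertex lies below $\ell$ and the cross-region interactions captured by Observations~\ref{Obs:D+D-=3}--\ref{Obs:B-A+=2B+C+=1} must be invoked.
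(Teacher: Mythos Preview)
Your proposal is correct and follows essentially the same route as the paper: both simply sum the per-region bounds from Observations~\ref{Obs:B+=1}, \ref{Obs:A+=1}, and \ref{Obs:D+=2} to get $|A^+|+|B^+|+|C^+|+|D^+|\le 5$ and then add the two endpoints. Your added remarks about tightness and the role of the later lemmas are accurate but not part of the paper's proof, which is just the three-line summation.
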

\begin{proof}
By Observation \ref{Obs:B+=1}, $|B^+| \leq 1$ and $|C^+| \leq 1$, by Observation \ref{Obs:A+=1}, $|A^+|\leq 1$ and by Observation \ref{Obs:D+=2} $|D^+| \leq 2$.
Thereby $|A^+| + |B^+| + |C^+| + |D^+| \leq 5$ and the lemma follows. 
\end{proof}

\begin{lemma}\label{oneVertexBelow}
Whenever exactly one vertex is below the edge $(x_1, x_2)$ then $ n = 3 + |A^+| + |B^+| + |C^+| + |D^+| \leq 7$.
\end{lemma}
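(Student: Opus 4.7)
The plan is to let $w$ denote the unique vertex strictly below $(x_1,x_2)$, to case-split on which of the four regions $A^-,B^-,C^-,D^-$ contains $w$, and in each case to invoke the already established observations to bound $|A^+|+|B^+|+|C^+|+|D^+|\le 4$. Adding the two endpoints of the longest edge together with $w$ itself then yields $n\le 7$, as required.

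Before the case split I would record the trivial upper bound coming from Observations~\ref{Obs:B+=1}, \ref{Obs:A+=1}, and~\ref{Obs:D+=2}, namely $|A^+|,|B^+|,|C^+|\le 1$ and $|D^+|\le 2$, which by itself sums to $5$. So in each of the four cases the task is merely to shave off one unit, and I would do so as follows. If $w\in A^-$, Observation~\ref{Obs:A^+=1A-=0} (applied by vertical symmetry) forces $|A^+|=0$, giving the bound $0+1+1+2=4$. If $w\in D^-$, I would subcase on $|A^+|$: when $|A^+|=1$, Observation~\ref{Obs:A+=1D+D-=2} makes $|D^+\cup D^-|\le 2$, so $|D^+|\le 1$ and the total is at most $1+1+1+1=4$; when $|A^+|=0$, Observation~\ref{Obs:D+D-=3} still leaves $|D^+|\le 2$ and the total is $0+1+1+2=4$. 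If $w\in B^-$, I would again subcase on $|A^+|$: when $|A^+|=1$, Observation~\ref{Obs:B-A+=2B+C+=1} gives $|B^+|+|C^+|\le 1$ and Observation~\ref{Obs:A+=1D+D-=2} (using that $|D^-|=0$ because the only below vertex lies in $B^-$) gives $|D^+|\le 2$, so the total is at most $1+1+2=4$; when $|A^+|=0$ the trivial bounds already give at most $0+1+1+2=4$. The case $w\in C^-$ is handled by the left-right symmetry of the partition with respect to the midpoint of $(x_1,x_2)$.

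The main obstacle is the case $w\in D^-$, since the two $D$-regions together already accommodate up to three vertices, and the unconditional bounds alone do not beat~$5$. The decisive trick there is to pair the presence of the vertex $w\in D^-$ with a hypothetical vertex in $A^+$ via Observation~\ref{Obs:A+=1D+D-=2}, which forces $|D^+|\le 1$ precisely in the subcase where we cannot simply drop $|A^+|$. The case $w\in B^-$ is similar in spirit: when $|A^+|=1$ one must combine Observations~\ref{Obs:B-A+=2B+C+=1} and~\ref{Obs:A+=1D+D-=2}, while the subcase $|A^+|=0$ is trivial. Apart from these two subtleties, the remainder of the proof is straightforward bookkeeping based on the catalogue of observations proved above.
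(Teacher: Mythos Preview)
Your argument is correct. It reaches the same conclusion as the paper via the same catalogue of observations, but the organisation differs in two small ways worth noting. First, the paper argues by contradiction: it assumes $n=8$, so that exactly five vertices lie above with $|A^+|=|B^+|=|C^+|=1$ and $|D^+|=2$, and then shows the single below-vertex cannot lie in any of $A^-,B^-,C^-,D^-$; you instead argue directly, casing on the below-region and bounding the above-count by~$4$. Second, for the case $w\in B^-$ (and symmetrically $C^-$) the paper invokes Observation~\ref{Obs:B+C-=2D+D-=1} (a vertex in $B^-$ together with one in $C^+$ forces $|D^+\cup D^-|\le 1$, contradicting $|D^+|=2$), whereas you invoke Observation~\ref{Obs:B-A+=2B+C+=1} (a vertex in $B^-$ together with one in $A^+$ forces $|B^+\cup C^+|\le 1$). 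Both routes work; the paper's contradiction framing is a bit shorter because it pins down the above-configuration once and for all, while your direct version avoids the extra ``exactly five above'' step but pays for it with the $|A^+|$ subcase split. Your appeal to Observation~\ref{Obs:A+=1D+D-=2} in the $B^-$, $|A^+|=1$ subcase is harmless but unnecessary, since $|D^+|\le 2$ already follows from Observation~\ref{Obs:D+=2}.
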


\begin{proof}
Assume there exists an empty ply drawing of $K_8$ with exactly one vertex below $(x_1, x_2)$.
By Lemma \ref{noVertexAbove} there have to be exactly five vertices above $(x_1, x_2)$, namely exactly one in $A^+$, one in $B^+$, one in $C^+$ and two in $D^+$.

By Observation \ref{Obs:A^+=1A-=0} the vertex below cannot be placed in $A^-$, by Observation \ref{Obs:B+C-=2D+D-=1} the vertex below cannot be placed in $B^-$ or $C^-$ and by Observation \ref{Obs:A+=1D+D-=2} the vertex below cannot be placed in $D^-$. 
This contradicts our assumption that $K_n$ with $n\geq 8$ can be drawn with one vertex below $(x_1, x_2)$. 
\end{proof}

 \begin{lemma}
 Whenever exactly two vertices are below $(x_1,x_2)$ then $n = 4 + |A^+| + |B^+| + |C^+| + |D^+| \leq 7$.
 \end{lemma}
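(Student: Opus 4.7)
The plan is to argue by contradiction. Assume an empty-ply drawing of $K_n$ with $n\ge 8$ and with exactly two non-extremal vertices below the edge $(x_1,x_2)$, so that the upper half must host at least four vertices: $|A^+|+|B^+|+|C^+|+|D^+|\ge 4$. Observations~\ref{Obs:A+=1}, \ref{Obs:B+=1}, and \ref{Obs:D+=2} bound the entries of the lower-half multiset $(|A^-|,|B^-|,|C^-|,|D^-|)$ by $1,1,1,2$ respectively, so summing to $2$ leaves exactly the following cases: (i) $|D^-|=2$; (ii) $|A^-|=1$ together with a second vertex in any lower region; (iii) $|B^-|=|C^-|=1$; and (iv) $|B^-|=|D^-|=1$ or, symmetrically, $|C^-|=|D^-|=1$.

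The first two families I would dispose of directly. In case (i), Observation~\ref{Obs:D+D-=3} gives $|D^+|\le 1$, and if additionally $|A^+|=1$, Observation~\ref{Obs:A+=1D+D-=2} forces $|D^+|=0$; in either sub-case the upper count is at most~$3$. In case (ii), Observations~\ref{Obs:A^+=1A-=0} and~\ref{Obs:A+=1D-=1} give $|A^+|=0$ and $|D^+|\le 1$, so the upper count is at most $0+1+1+1=3$. Both contradict the lower bound of~$4$.

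Case (iii) uses the ``cross'' observations. I would split on $|A^+|$: if $|A^+|=1$, Observation~\ref{Obs:B-A+=2B+C+=1} gives $|B^+|+|C^+|\le 1$, so achieving an upper count of~$4$ forces $|D^+|=2$ together with exactly one of $|B^+|=1$ or $|C^+|=1$; if $|A^+|=0$, the only way to reach~$4$ is $|B^+|=|C^+|=1$ with $|D^+|=2$. In every sub-case one of the pairs $(|B^-|=1,|C^+|=1)$ or $(|B^+|=1,|C^-|=1)$ is realized, so Observation~\ref{Obs:B+C-=2D+D-=1} (in its original or vertically reflected form) yields $|D^+\cup D^-|\le 1$; combined with $|D^-|=0$ this contradicts $|D^+|=2$. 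Case (iv) is handled in the same spirit: for $|A^+|=1$, Observations~\ref{Obs:A+=1D+D-=2} and~\ref{Obs:B-A+=2B+C+=1} jointly cap the upper count at~$3$; for $|A^+|=0$, equality at~$4$ forces $|C^+|=1$, and Observation~\ref{Obs:B+C-=2D+D-=1} applied to $(|B^-|=1,|C^+|=1)$ together with $|D^-|=1$ forces $|D^+|=0$, a contradiction.

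The main obstacle will be bookkeeping rather than geometry: ensuring that the enumeration of lower-half configurations is truly exhaustive and that the correct (possibly vertically or horizontally reflected) form of each observation is invoked in every sub-case. The heavy geometric work has already been done in the observations themselves, so the proof reduces to disciplined case analysis combining previously established constraints.
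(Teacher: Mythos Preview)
Your proposal is correct and follows essentially the same approach as the paper: the same exhaustive case split on the placement of the two lower vertices, and the same combination of Observations~\ref{Obs:A^+=1A-=0}, \ref{Obs:A+=1D-=1}, \ref{Obs:A+=1D+D-=2}, \ref{Obs:B+C-=2D+D-=1}, and \ref{Obs:B-A+=2B+C+=1} in each case. The only differences are cosmetic --- you group all configurations with $|A^-|=1$ into a single case (ii) and order the remaining cases differently, and in case~(i) you invoke Observation~\ref{Obs:A+=1D+D-=2} where the paper instead uses the reflected form of Observation~\ref{Obs:A+=1D-=1} to conclude $|A^+|=0$; both routes yield the same bound of~$3$.
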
 \setcounter{case}{0}
 
\begin{proof}
 Assume there exists an empty ply drawing of $K_n$ with $n \geq 8$ there are at least four vertices above $(x_1, x_2)$. We will distinguish four cases about the placement of the two vertices below.
 
\begin{case}
 There exists one vertex in $A^-$.
 \end{case}
By Observation \ref{Obs:B+=1}, $|B^+| \leq 1$ and $|C^+|\leq 1$,  by Observation \ref{Obs:A^+=1A-=0}, $|A^+| = 0$ and by Observation \ref{Obs:A+=1D-=1}, $|D^+| \leq 1$.
 
 Thus whenever there exists one vertex in $A^-$, $n = 4 +  |A^+| + |B^+| + |C^+| + |D^+| \leq 7$ holds.
  
  \begin{case}
$A^-$ is empty and there are exactly two vertices in $D^-$.
 \end{case}
By Observation \ref{Obs:B+=1}, $|B^+| \leq 1$ and $|C^+|\leq 1$,  by Observation \ref{Obs:A+=1D-=1}, $|A^+| = 0$, and by Observation \ref{Obs:D+D-=3}, $|D^+| \leq 1$.

Thus whenever there exists two vertices in $D^-$, $n = 4 +  |A^+| + |B^+| + |C^+| + |D^+| \leq 7$ holds. 
  
 \begin{case}
 $A^-$ is empty and there is exactly one vertex in $D^-$.
 \end{case}
 There exists a vertex in either $B^-$ or in $C^-$. Since both cases are symmetric we assume without loss of generality there exists exactly one vertex in $B^-$.
 
\begin{enumerate}
\item[a)] Assume there exists a vertex in $A^+$:\\
By Observation \ref{Obs:B-A+=2B+C+=1} there exists at most one vertex in either $B^+$ or $C^+$.
Furthermore since there exists a vertex in $D^-$ we know by Observation \ref{Obs:A+=1D+D-=2} that there exists at most one vertex in $D^+$.
Thereby $n = 4 + |A^+| + |B^+| + |C^+| + |D^+| \leq 7$ holds.

\item[b)] Assume $A^+$ to be empty:\\
Assuming $K_8$ is empty ply drawable and there exist four vertices above, namely exactly one in $B^+$, one in $C^+$ and two in $D^+$. By Observetion \ref{Obs:B+C-=2D+D-=1} there can be at most one vertex in $D^+ \cup D^-$ which is already placed in $D^-$. 
\end{enumerate}
This concludes the case where since $n = 4 + |A^+| + |B^+| + |C^+| + |D^+| \leq 7$ holds in both subcases. 

 \begin{case}
 There exist exactly one vertex in $B^-$ and exactly one vertex in $C^-$.
 \end{case}
 Assuming $K_8$ is empty ply drawable and there exist four vertices above. 
 By Observation \ref{Obs:B+=1}, \ref{Obs:A+=1} and \ref{Obs:D+=2} there exists at least one vertex in either $B^+$ or $C^+$. By Observetion \ref{Obs:B+C-=2D+D-=1} there can be at most one vertex in $D^+ \cup D^-$. Hence $A^+$ must also contain one vertex. 
	By Observation \ref{Obs:B-A+=2B+C+=1} there exists at most one vertex in either $B^+$ or $C^+$. Thereby $n = 4 + |A^+| + |B^+| + |C^+| + |D^+| \leq 7$ holds . 
 \end{proof}
 
 \begin{lemma}
  Whenever exactly three vertices are below $(x_1,x_2)$ then $n = 5 + |A^+| + |B^+| + |C^+| + |D^+| \leq 7$.
 \end{lemma}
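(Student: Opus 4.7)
\emph{Plan.} I would follow the template of the preceding (two below) lemma. Assume for contradiction that there exists an empty-ply drawing of $K_n$ with $n\ge 8$ in which exactly three vertices lie below the longest edge $(x_1,x_2)$; then at least three vertices lie above, and it suffices to show that any placement of three below-vertices forces $|A^+|+|B^+|+|C^+|+|D^+|\le 2$.

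By Observations~\ref{Obs:B+=1}, \ref{Obs:A+=1}, \ref{Obs:D+=2}, and \ref{Obs:D+D-=3}, the three below-vertices distribute, up to the left--right reflection swapping $B$ and $C$, into one of five configurations: $(A^-,B^-,C^-)$, $(A^-,B^-,D^-)$, $(A^-,D^-,D^-)$, $(B^-,C^-,D^-)$, and $(B^-,D^-,D^-)$. For each case I would bound the above-count by invoking the appropriate Observations~\ref{Obs:A^+=1A-=0}--\ref{Obs:B-A+=2B+C+=1} together with their top--bottom and $B\leftrightarrow C$ reflections.

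Concretely, in $(A^-,B^-,C^-)$, Observation~\ref{Obs:A^+=1A-=0} gives $|A^+|=0$, while the top--bottom reflection of Observation~\ref{Obs:B-A+=2B+C+=1} applied to $|A^-|=1$ together with either $|B^+|=1$ or $|C^+|=1$ contradicts $|B^-|=|C^-|=1$, so $|B^+|=|C^+|=0$; Observation~\ref{Obs:A+=1D-=1} then caps $|D^+|\le 1$. In $(A^-,B^-,D^-)$, Observations~\ref{Obs:A^+=1A-=0} and~\ref{Obs:A+=1D+D-=2} give $|A^+|=0$ and $|D^+|\le 1$, and whenever $|C^+|=1$ Observation~\ref{Obs:B+C-=2D+D-=1} forces $|D^+|=0$, so above~$\le 2$ in either subcase. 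In $(A^-,D^-,D^-)$, Observation~\ref{Obs:A+=1D+D-=2} with $|D^-|=2$ yields $|D^+|=0$, leaving at most $|B^+|+|C^+|\le 2$. In each of $(B^-,C^-,D^-)$ and $(B^-,D^-,D^-)$ I would split on whether $|A^+|=1$: when $|A^+|=1$, Observation~\ref{Obs:A+=1D+D-=2} caps $|D^+|$ and Observation~\ref{Obs:B-A+=2B+C+=1} (with its left--right reflection for $C^-$) kills one of $|B^+|,|C^+|$, after which Observation~\ref{Obs:B+C-=2D+D-=1} (in the correct reflection) drops $|D^+|$ to $0$ as soon as the surviving $B^+$- or $C^+$-vertex is present; when $|A^+|=0$, Observation~\ref{Obs:B+C-=2D+D-=1} (together with its $B\leftrightarrow C$ reflection) forces $|C^+|$ or $|B^+|$ to vanish in the $(B^-,D^-,D^-)$ case (otherwise $|D^+\cup D^-|\le 1<2$) and annihilates $|D^+|$ in the $(B^-,C^-,D^-)$ case whenever any vertex of $B^+\cup C^+$ is present, again giving above~$\le 2$.

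The main obstacle is the combinatorial bookkeeping: each subcase requires simultaneously invoking Observations~\ref{Obs:B+C-=2D+D-=1} and~\ref{Obs:B-A+=2B+C+=1} in the correct top--bottom and $B\leftrightarrow C$ reflected forms, and the branching on $|A^+|$ in the last two configurations has to be tracked carefully. No new geometric lemma beyond those used for the two-below case should be needed.
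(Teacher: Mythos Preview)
Your proposal is correct and follows essentially the same approach as the paper: a case analysis on the distribution of the three below-vertices among $A^-,B^-,C^-,D^-$, resolved by the same Observations~\ref{Obs:B+=1}--\ref{Obs:B-A+=2B+C+=1} and their reflections.

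One point worth noting: your decomposition into five configurations is actually more complete than the paper's four. The paper opens with the claim ``in any case there exists at least one vertex in either $B^-$ or $C^-$'' and therefore never treats your configuration $(A^-,D^-,D^-)$; but nothing in the listed observations excludes this case a priori (Observation~\ref{Obs:A+=1D+D-=2} in its top--bottom form allows $|A^-|=1$ with $|D^-|=2$). Your handling of it---$|A^+|=0$ by Observation~\ref{Obs:A^+=1A-=0}, $|D^+|=0$ by the reflected Observation~\ref{Obs:A+=1D+D-=2}, leaving only $|B^+|+|C^+|\le 2$---is the natural patch. Minor cosmetic remarks: in $(B^-,D^-,D^-)$ with $|A^+|=0$, it is specifically $|C^+|$ that must vanish (not ``$|C^+|$ or $|B^+|$''), and the paper obtains $|A^+|=0$ there more directly via the contrapositive of the reflected Observation~\ref{Obs:A+=1D-=1} rather than splitting on $|A^+|$.
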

 \setcounter{case}{0}
  \begin{proof}
 Note that in any case there exists at least one vertex in either $B^-$ or $C^-$.
 We distinguish the following cases by the placement of the vertices below $(x_1, x_2)$. 
 \begin{case}
	There exist three vertices below, namely in $A^-$, in $B^-$ and in $C^-$.
 \end{case}
 By Observation \ref{Obs:A^+=1A-=0}, $|A^+| = 0$.
 $B^+$ and $C^+$ are empty by Observation \ref{Obs:B-A+=2B+C+=1} since there exist vertices in $B^-$ and $C^-$. Additionally there is at most one vertex in $D^+$ by Observation \ref{Obs:A+=1D-=1}. Thereby $n = 5 + |A^+| + |B^+| + |C^+| + |D^+| \leq 7$ holds. 
	
 \begin{case}
	There exist three vertices below, namely in $A^-$, in $D^-$ and without loss of generality in $B^-$.
 \end{case}
  By Observation \ref{Obs:A^+=1A-=0}, $|A^+| = 0$, by Observation \ref{Obs:A+=1D-=1} there exists at most one vertex in $D^+$.\\
  Assuming that $K_8$ is empty ply drawable there have to be three vertices above namely exactly one in $B^+$, exactly one in $C^+$ and exactly one in $D^+$. By Observation \ref{Obs:B+C-=2D+D-=1} there exists a long diagonal and thus $|D^+ \cup D^-| \leq 1$ which contradicts the case that there already exists one vertex in $D^-$.

 \begin{case}
	There exist exactly three vertices below, namely in $B^-$, in $C^-$ and in $D^-$.
 \end{case}
 By Observation \ref{Obs:B+C-=2D+D-=1} whenever there exist a vertex in $B^+$ or $C^+$, $D^+$ is empty since there already exists a vertex in $D^-$.
 
	By Observation \ref{Obs:B-A+=2B+C+=1} whenever there exists a vertex in $A^+$ either $C^+$ or $B^+$ is empty. 
	The remaining possible placements are: 
	\begin{enumerate}
	\item two vertices in $D^+$
	\item one vertex in $B^+$ and
		\begin{enumerate}
			\item one vertex in $A^+$
			\item one vertex in $C^+$
		\end{enumerate}
	\item one vertex in $A^+$ and exactly one vertex in $D^+$
	\end{enumerate}
 Thereby there exist at most two vertices above $(x_1, x_2)$ and $n = 5 + |A^+| + |B^+| + |C^+| + |D^+| \leq 7$.

\begin{case}
	There exist three vertices below, namely two vertices in $D^-$ and without loss of generality one vertex in $B^-$.
 \end{case}
 By Observation \ref{Obs:A+=1D-=1} $|A^+| =0$, since there already exist two vertices in $D^-$.\\
By Observation \ref{Obs:B+C-=2D+D-=1} $|C^+| =0$, since there already exist two vertices in $D^-$.\\
By Observation \ref{Obs:D+D-=3} there is at most one more vertex in $D^+$.\\
By Observation \ref{Obs:B+=1} there is at most one more vertex in $B^+$. \\
Thus there exist at most two vertices above $(x_1, x_2)$ and $n = 5 + |A^+| + |B^+| + |C^+| + |D^+| \leq 7$.

\end{proof}
The four lemmas conclude the proof of Theorem \ref{thm:k8} since the placement of more than 3 vertices below will be symmetric to placing the corresponding number of vertices above.

\section*{Omitted proof of Theorem~\ref{thm:bipartiteK2X}}

In this section we give the omitted proof of Theorem~\ref{thm:bipartiteK2X}:

\medskip
\rephrase{Theorem}{\ref{thm:bipartiteK2X}}{
		Graph $K_{2,m}$ with $m \geq 15$ does not admit any empty-ply drawing.}

%Figure \ref{fig:k-2-12} shows that $K_{2, 12}$ admits an empty ply drawing, while from the Theorem \ref{th:empty-ply-max-degree} it follows that $K_{2,25}$ does not admit an empty ply drawing. %Here we study the existence in the range $[13, 24]$.
\medskip

Before stating the proof, we give some conditions for the vertices of an empty ply drawing of a complete bipartite graph $K_{1,X}$ and then we extend these requirements to $K_{2,X}$, where $X \geq 1$.

Given a complete bipartite graph, $K_{1,X}$ with $X>1$, let $u$ be the only vertex of the first set, called $V_1$, and let $2m_u$ be the distance from $u$ to its farthest vertex (and so the radius of the ply disk of $u$). The $X$ vertices of the second set, called $V_2$, of an empty-ply drawing of $K_{1,X}$ have distance in the range $R = [m_u, 2m_u]$ from $u$. Following the proof of Theorem~\ref{th:empty-ply-max-degree} we split $R$ in two ranges $R_{u_1} = [m_u, \sqrt{2}m_u]$ and $R_{u_2} = [\sqrt{2}m_u, 2m_u]$. 

A condition for the vertices is given by the following lemma.
\begin{lemma}\label{lem:sector}
In an empty ply drawing of $K_{1,X}$ with $X>1$ the angular distance of any two vertices $x_1, x_2 \in V_2$ drawn both in   $R_{u_1}$, or $R_{u_2}$, is $\geq 27.89^\circ$.
\end{lemma}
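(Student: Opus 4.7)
The plan is to translate the empty-ply condition into a pairwise distance constraint via the law of cosines (exactly as in the proof of Theorem~\ref{th:empty-ply-max-degree}), and then to extract the sharp angular bound $\arccos(5\sqrt{2}/8) \approx 27.89^\circ$ by optimizing over the worst-case ratio of the two edge lengths $|u x_1|$ and $|u x_2|$.

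More concretely, I would fix two vertices $x_1, x_2 \in V_2$ in the same range, say $R_{u_1}$ (the case $R_{u_2}$ is identical since both ranges have the form $[s, \sqrt{2}\,s]$), and WLOG assume $d_1 := |u x_1| \leq d_2 := |u x_2|$. After rescaling so that $d_1 = 1$, I set $q := d_2 \in [1, \sqrt{2}]$ and $\alpha := \angle x_1 u x_2$. Since $x_2$ has only one incident edge in $K_{1,X}$, its ply-disk $D_{x_2}$ has radius exactly $q/2$; the empty-ply condition then forces $x_1 \notin D_{x_2}$, i.e.\ $|x_1 x_2| \geq q/2$. Plugging $|x_1 x_2|^2 = 1 + q^2 - 2q\cos\alpha$ into this inequality and rearranging yields
\[
\cos\alpha \;\leq\; \frac{4 + 3q^2}{8q} \;=:\; f(q).
\]
The infimum of $\alpha$ is then realized at the maximum of $f$ over $[1, \sqrt{2}]$.

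A short computation shows that $f$ is convex on $(0,\infty)$ (indeed $f''(q) = 1/q^3 > 0$), with an interior critical point at $q = 2/\sqrt{3}$ that is a \emph{minimum}; hence the maximum of $f$ on $[1,\sqrt{2}]$ is attained at an endpoint. Comparing $f(1) = 7/8$ with $f(\sqrt{2}) = 5\sqrt{2}/8 \approx 0.8839$, the maximum occurs at $q = \sqrt{2}$, giving $\alpha \geq \arccos(5\sqrt{2}/8) \approx 27.89^\circ$ as claimed.

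The only subtleties to watch for, and where the main obstacle (if any) would lie, are these: first, there are two symmetric empty-ply inequalities, one from each disk $D_{x_i}$, but for $q \geq 1$ the one from the larger disk $D_{x_2}$ is strictly the binding one (the other gives $\cos\alpha \leq (3 + 4q^2)/(8q)$, which is weaker); and second, one must correctly identify that the extremal configuration sits at the boundary $q = \sqrt{2}$ of the admissible ratio rather than at the interior critical point $q = 2/\sqrt{3}$. Both points are a matter of a direct verification rather than of any genuine technical difficulty.
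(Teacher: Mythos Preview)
Your proposal is correct and follows essentially the same approach as the paper: translate the empty-ply constraint into a law-of-cosines inequality and read off the critical angle at the endpoint ratio $q=\sqrt{2}$. In fact your version is more complete, since the paper simply computes the angle for the specific configuration with distances $\sqrt{2}\,m_u$ and $2m_u$ without justifying that it is extremal, whereas you supply that justification via the convexity of $f(q)=(4+3q^2)/(8q)$ on $[1,\sqrt{2}]$.
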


\begin{figure}
\centering
\includegraphics[width=.5\textwidth]{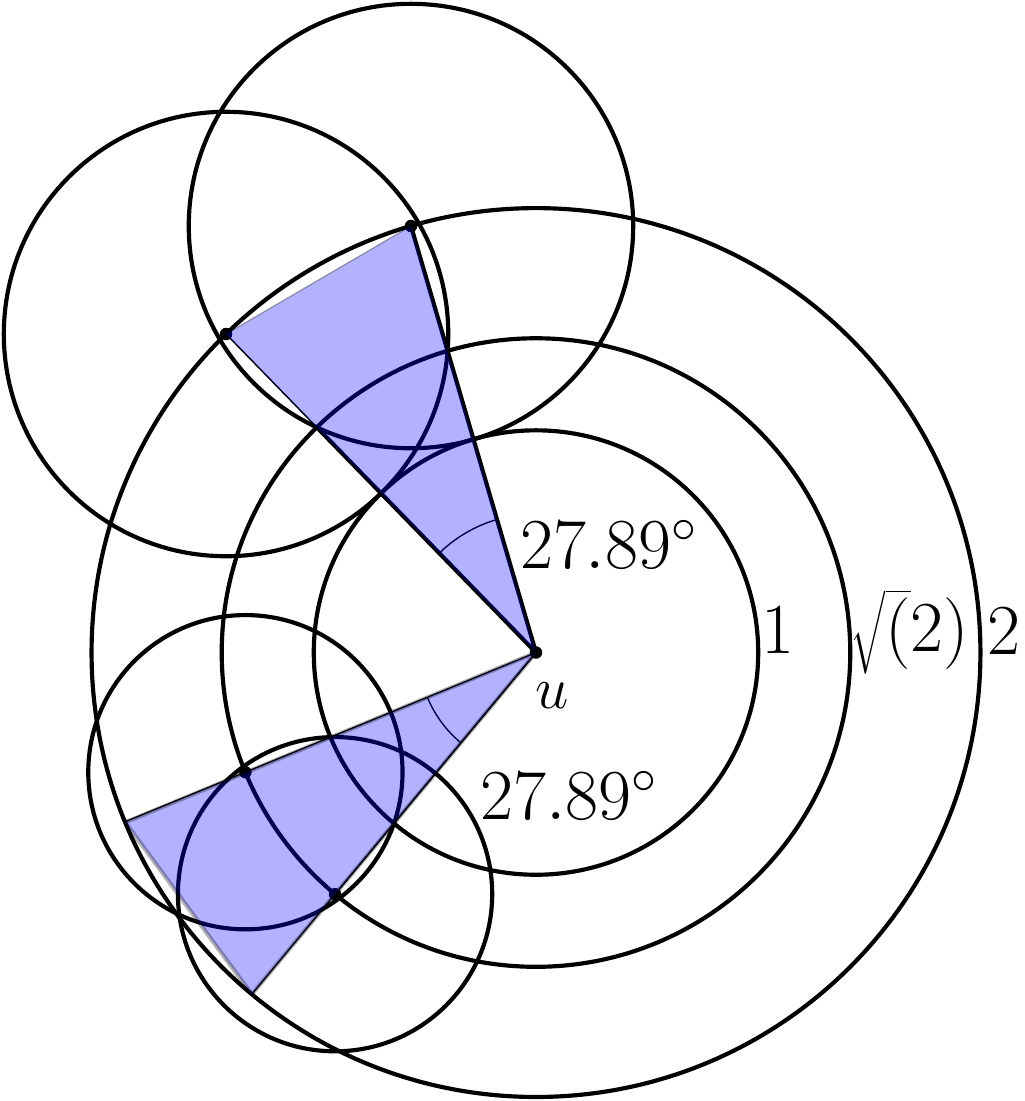}
\caption{The minimum angular distance with respect to $u$ of two vertices both with distance in the same range $[\sqrt{2} m, 2m]$ or $[m, \sqrt{2} m]$ from $u$ is $27.89^\circ$, as shown by the blue circular sector $S$. Thereby in any circular sector with ange $< 27.89^\circ$ there can be at most 2 vertices iff exactly one is in $[\sqrt{2} m, 2m]$ and exactly one is in $[m, \sqrt{2} m]$.}
\label{figAppendix:27degree}
\end{figure}

\begin{proof}
We prove the statement for $R_{u_2}$. The proof for $R_{u_1}$ is analogous.\
Given two vertices $x_1, x_2 \in V_2$ in $R_{u_2}$ drawn at distance $2m_u$ and $\sqrt{2}m_u$ from $u$ respectively, (i.e. the maximum and the minimum distances from $u$) the triangle formed by $u$, $x_1$ and $x_2$ has sides with length $2m_u$, $m_u$ and $\sqrt{2}m_u$. Thus the angle formed by the two sides incident to $u$ is given by the following equation:
\[1^2 \cdot m_u^2 = (2^2 + \sqrt{2}^2 - 2 \cdot 2 \cdot \sqrt{2} \cdot \cos(\alpha_d)) \cdot m_u^2\]
\[\cos(\alpha_d) = \frac{2^2 + \sqrt{2}^2 - 1^2}{ 2 \cdot 2 \cdot \sqrt{2}}\]
\[\alpha_d = 27.89^\circ \]
It follows that two vertices drawn both in $R_{u_2}$ (resp. $R_{u_1}$) and having angular distance $\alpha_d$ with respect to $u$ have distance  $\sqrt{2}m_u$ and $2m_u$ (resp. $m_u$ and $\sqrt{m_u}$) from $u$.
\end{proof}

Let $S \cap R_{u_1}$ and $S \cap R_{u_2}$ be the ranges of $S$, Lemma~\ref{lem:sector} implies that a circular sectors $S$ around $u$ with angle $\alpha = (\alpha_d -\epsilon)^\circ $ contains at most one vertex per range.
%Lemma~\ref{lem:sector} implies that a circular sectors $S$ around $u$ with angle $\alpha = (\alpha_d -\epsilon)^\circ $ contains at most one vertex in $S \cap R_{u_1}$ and in $S \cap R_{u_2}$. We call these intersections the ranges of $S$. This result is also supported by Theorem~\ref{th:empty-ply-max-degree}: given a vertex $u$ there are at most $12$ non overlapping sectors $S$ that contain at most $2$ vertices (one per range) for a total of $24$ vertices around $u$, that is $K_{1,24}$. 

We now extend the property given in Lemma~\ref{lem:sector} to the case when $V_1$ contains two vertices, that is $K_{2,X}$.

Given a complete bipartite graph, $K_{2,X}$ with $X>1$. Let, w.l.o.g., $u, v \in V_1$ lying on a horizontal line, $l_{h}$, at distance $1$ from each other. The $X$ vertices of $V_2$ of an empty-ply drawing are in $R_{uv_1} = R_{u_1}\cap R_{v_1}$  and %$R_{uv_2} = R_{u_2}\cap R_{v_2} \setminus R_{uv_1}$ 
$R_{uv_2} = R_{uv_1}^C$ (bounded by $2*dist(u, v)$ from both $u$ and $v$). 
(Figure \ref{figAppendix:RAB12}).

\begin{figure}
\centering
\includegraphics[width=0.5\textwidth]{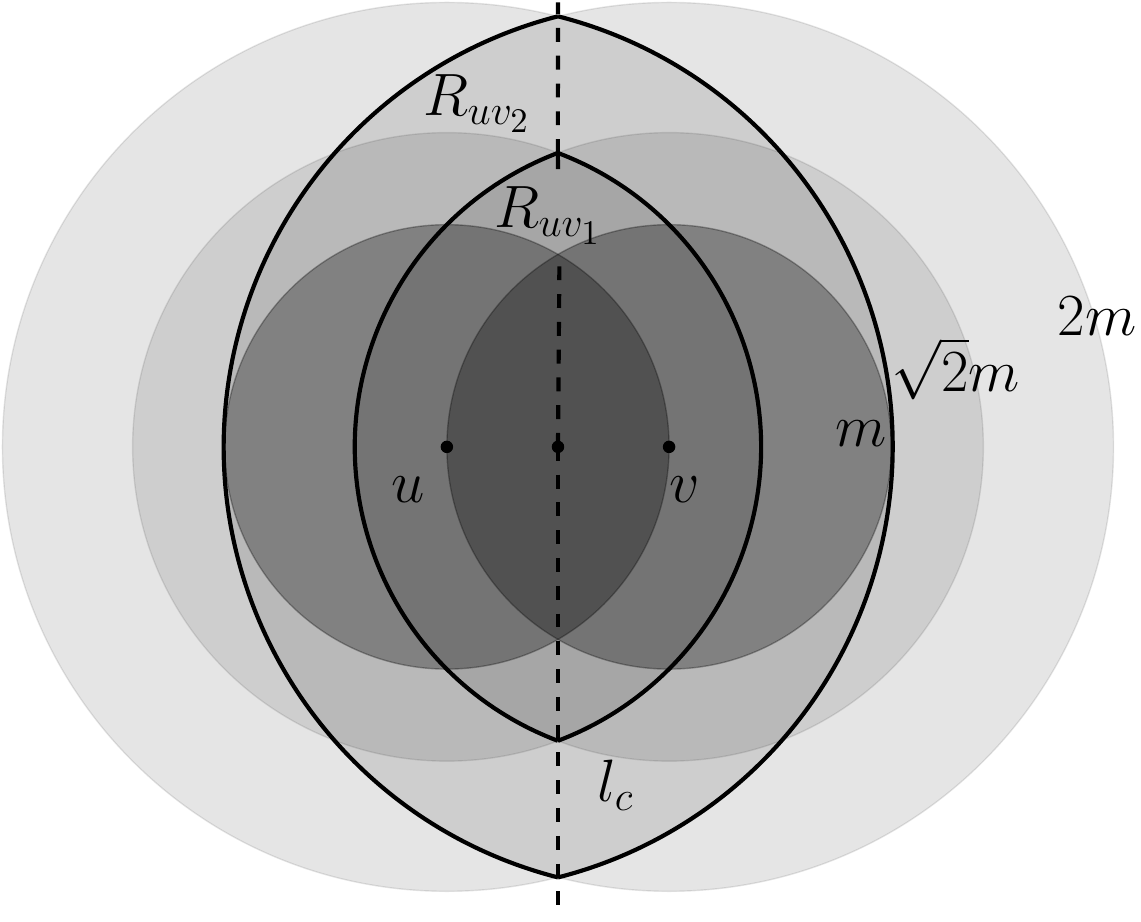}
\caption{The circles around $u$ and $v$ and the corresponding areas of $R_{uv_1}$ and $R_{uv_2}$.}\label{figAppendix:RAB12}
\end{figure}

%We now split the plane containing  $u$ and $v$ in the left half-plane containing $u$ and the right half-plane containing $v$, as follows: 
Let $l_c$, the perpendicular bisector of $(u,v)$, split the plane in the left half-plane containing $u$ and the right half-plane containing $v$.
Each vertex $x_i \in V_2$, where $i = 1, \dots, |V_2|$,  in the left (resp. right) half-plane has $dist(u, x_i) < dist(v, x_i)$ (resp. $dist(u, x_i) > dist(v, x_i)$) while exactly on the line $dist(u, x_i) = dist(v, x_i)$. 
It follows that the ply disks of the vertices in the same half-plane depend only on either $u$ or $v$.

Also, we define for  $u$ and $v$ the cover disk of $u$ with respect to $v$, $Cov_{u}(v)$ (and viceversa), as the area that contains the vertices $x_i \in V_2$with  $i = 1, \dots, |V_2|$,  adjacent to $v$ whose ply disk has radius at least $\frac{dist(x_i, v)}{2}$ and therefore its disk covers $v$, i.e., drawing any vertex $x_i$ adjacent to $v$ in this area violates the empty-ply condition of the ratio of the incident edges.\\

We are now ready to prove the upper bound on the maximum number of vertices in $R_{uv_1}$ and $R_{uv_2}$.

\begin{lemma}\label{lem:vrab2}
There are at most 10 vertices in $R_{uv_2}$.
\end{lemma}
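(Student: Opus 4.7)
The plan is to establish the bound $|V_2 \cap R_{uv_2}| \leq 10$ by combining Lemma~\ref{lem:sector} (the $27.89^\circ$ angular gap within $R_{u_2}$ or within $R_{v_2}$) with a direct empty-ply argument near the perpendicular bisector $l_c$ of the segment $uv$. After normalizing $|uv|=1$ and placing $u=(0,0)$, $v=(1,0)$, I would partition $V_2 \cap R_{uv_2}$ according to which side of $l_c$ each vertex lies on.

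First, observe that a vertex $x$ in the right half-plane $H_v$ (the half closer to $v$) satisfies $|xu|>|xv|$, so if $x\in R_{uv_2}$ then $|xu|\in(\sqrt{2},2]$ and thus $x\in R_{u_2}$. Lemma~\ref{lem:sector} then forces any two such vertices to have angular distance at least $\alpha_d \doteq 27.89^\circ$ around $u$. I would next compute the angular range of the region $\{x\in H_v : |xu|\in(\sqrt{2},2],\ |xv|\in[1,2]\}$ around $u$ and show it equals $(-75.5^\circ,75.5^\circ)$ (the boundary $\cos\theta_u = 1/(2|xu|)$ of $H_v$ and the arc $|xv|=2$ meet precisely at $(|xu|,\theta_u)=(2,75.5^\circ)$). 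Hence at most $\lfloor 151^\circ/\alpha_d\rfloor+1 = 6$ vertices of $V_2\cap R_{uv_2}$ sit in $H_v$. Symmetrically, at most $6$ sit in $H_u$.

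To reduce the apparent bound of $12$ to the claimed $10$, I would analyze the ``extremal'' placements of the $6$-vertex configurations above and below $l_h$. Fitting $6$ vertices on a given side forces one of them to lie near $l_c$ at height about $1.35$ above (and one below) $l_h$; such a vertex lies simultaneously in $R_{u_2}$ and $R_{v_2}$ and has ply-disk radius approaching $3/4$. A direct distance computation shows that the corresponding extremal vertices of $H_v$ and of $H_u$ (both high above $l_h$) are at distance strictly less than the sum of their ply-disk radii, so they cannot coexist in an empty-ply drawing---they must either coincide on $l_c$ (and be counted once) or one of them must be absent. Applying this argument once above $l_h$ and once below $l_h$ removes two vertices from the naive total, yielding the claimed bound of $10$.

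The main obstacle will be making the collision step fully rigorous: I would need to show not only that the ``canonical'' extremal pair collides, but also that no shifting of the angles around $u$ or around $v$ can decouple the two extremal positions without breaching either the angular gap $\alpha_d$ or the radial bounds $|xu|,|xv|\in[1,2]$. I expect to handle this via a monotonicity argument that as the outermost angle (relative to $u$ on the $v$-side) is decreased to move the extremal vertex off $l_c$, the corresponding placement on the $u$-side would have to move in the opposite direction, in both cases staying in the cover-disk region of $\text{Cov}_u(v)$ or $\text{Cov}_v(u)$. This geometric analysis will mirror, at a smaller scale, the region-based case analysis used in the proof of Theorem~\ref{thm:k8}, specialized to the narrower geometry forced by $K_{2,m}$.
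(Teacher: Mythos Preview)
Your sector-counting partition by the bisector $l_c$---applying Lemma~\ref{lem:sector} around $u$ for vertices in $H_v$ and around $v$ for vertices in $H_u$---is exactly the paper's route. The paper arrives at the same angular span $\beta_2 \doteq 151.04^\circ$ per half-plane and then simply writes $\lfloor 2\beta_2/\alpha_d \rfloor = 10$, silently dropping the fencepost $+1$ per side that you correctly flag. Your proposal is therefore the paper's argument plus an attempt to justify that suppressed step via a collision near $l_c$.

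That collision step has two problems. First, the stated condition is wrong: ``distance strictly less than the sum of their ply-disk radii'' is \emph{not} an empty-ply violation---overlapping disks are allowed, since that only raises the ply, not the vertex-ply. What you need is $|xy| < \max(r_x,r_y)$, so that one vertex lies strictly inside the other's disk. This stronger inequality does hold in the symmetric extremal placement (both vertices at angle $\approx 64^\circ$ and radius $2$ from their respective centers sit at mutual distance $\approx 0.76$ while each ply-radius is $1$), but it must be verified across the whole range of admissible angles in $[63.95^\circ,75.5^\circ]$ and radii in $(\sqrt{2},2]$, which your monotonicity sketch does not yet deliver. Second, ``once above $l_h$, once below'' cuts $12$ to $10$ only when both half-planes carry six vertices, so that each side has a forced extremal both above and below. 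In a $6+5$ split the five-vertex side need only span $4\alpha_d \doteq 111.6^\circ$ and none of its vertices is forced anywhere near $l_c$; the pairwise collision you rely on may simply fail. To finish that case you would instead have to show that the two forced extremals on the $6$-side (top and bottom) each carve a definite angular wedge out of the opposite center's usable range, pushing it below $4\alpha_d$---a more quantitative statement than a single pairwise collision.
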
 
\begin{proof}
We show that there can be at most $10$ sectors in $R_{uv_2}$. Let  w.l.o.g. $dist(u, v)=1$, for the property of an empty-ply drawing each vertex should be drawn in the intersection area of two disks with radius $2$. 
Since, as explained above, the ply disks of the vertices in the same half-plane depend on either $u$ or $v$, the number of vertices for an empty-ply drawing depends on the number of non-overlapping sectors starting from $u$ (resp. $v$) that can be placed in the right (resp. left) half-plane.
It is easy to see that the maximum angle to place the sectors is when the ply disks of $u$ and $v$ is $1$, i.e. there is at least one vertex at distance $2$ from $u$ and $v$.
In this case, $m=1$ and $R_{uv_2}=[\sqrt{2}, 2]$ and the available angle to place the sectors for each vertex is given by the points of $l_c$ at distance $2$ from $u$ and $v$ that is $151.04^\circ$ as shown by Figure~\ref{figAppendix:Ruv2beta}. It follows that the maximum number of vertices in $R_{uv_2}$ is given by $\lfloor\frac{\lceil 2 \beta_2\rceil}{\alpha} \rfloor = 10$.
\end{proof}

\begin{figure}
	\centering
	\includegraphics[width=.35\textwidth]{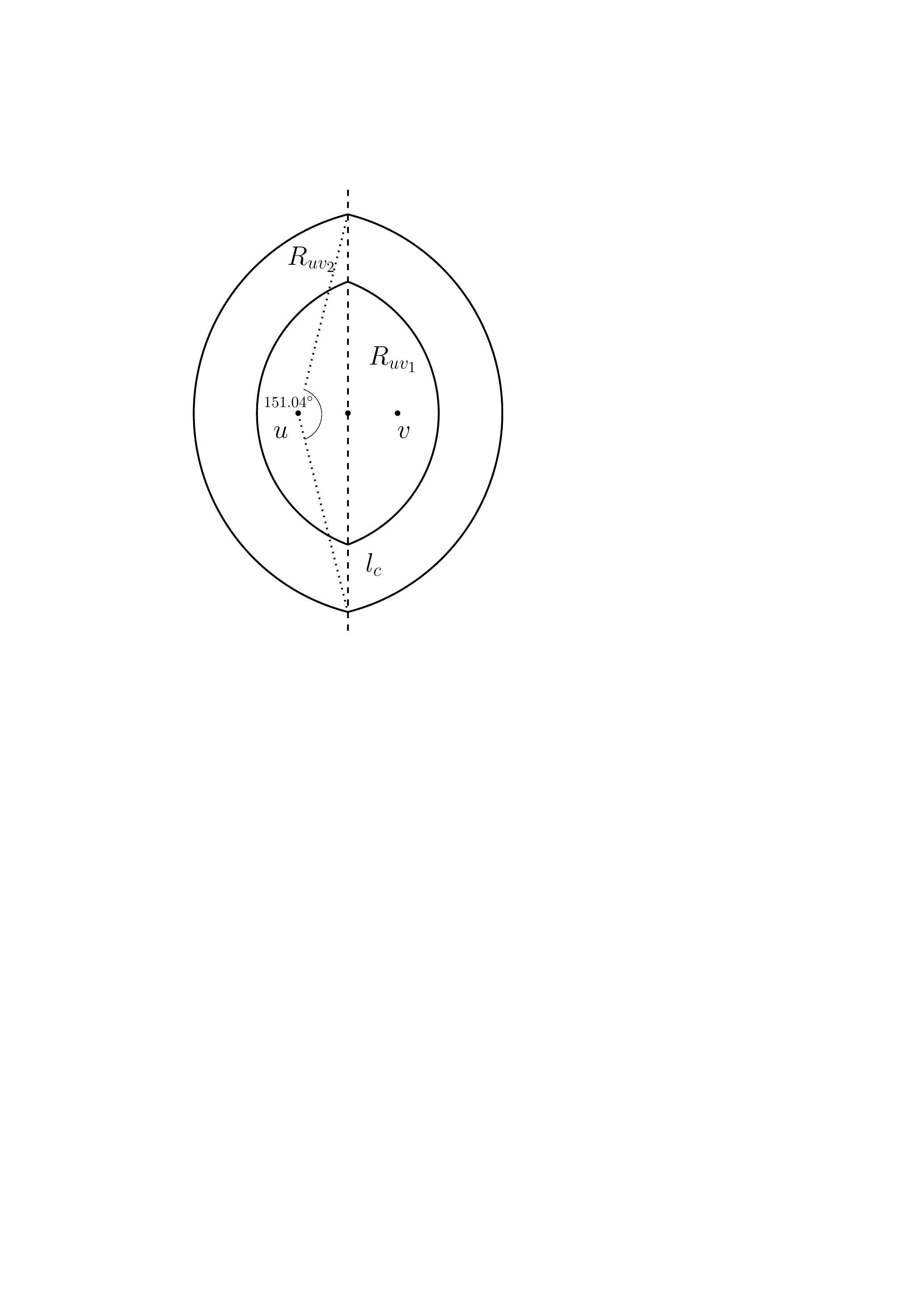}
	\caption{The maximum angle to place the sectors in $R_{uv_2}$ is $\beta_2 = 151.04^\circ$ that is when the radii of the ply disks of both $u$ and $v$ is the maximum, i.e. 1.}
	\label{figAppendix:Ruv2beta}
\end{figure}

Note that Lemma \ref{lem:sector} implies that there is a sector with $\gamma = (2 \beta_2) - (10 \alpha) = 302,08 - 278.9 =  23,14^\circ$ angle that is free from vertices. 
%This space is not sufficient to place a sector, or vertex, since it requires $27.89^\circ$. 
%This space can be distributed among the other sectors.

Although the previous Lemma gives an upper bound of $10$ sectors in $R_{uv_2}$ and, each of them can contain at most $1$ vertex per range, the actual number of sectors that can be placed in  $R_{uv_1}$ is different since the angle in this range, namely $\beta_1$, is smaller. 

Thus, we state the following Lemma.
\begin{lemma}\label{lem:vrab1}
In the area $R_{uv_1}$ there are at most $6$ vertices. 
\end{lemma}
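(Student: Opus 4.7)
The plan is to follow the template of the proof of Lemma~\ref{lem:vrab2}. After normalizing $\operatorname{dist}(u,v)=1$, the first step is to argue, exactly as in that proof, that the widest possible shape for $R_{uv_1}$ arises in the limit case $m=1$ (so that the ply disks of $u$ and $v$ themselves have maximal radius $1$). In this worst case $R_{uv_1}$ is the intersection of the two annuli $|p-u|,|p-v|\in[1,\sqrt{2}]$, a region symmetric both about the line $uv$ and about the perpendicular bisector $l_c$, whose upper half has the four ``corners'' $(0,1)$, $(1/2,\sqrt{3}/2)$, $(1,1)$, $(1/2,\sqrt{7}/2)$.

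Next I would compute the angular span $\beta_1$ of $R_{uv_1}$ as seen from $u$. In the upper half-plane, the extremal rays from $u$ touch $R_{uv_1}$ at the corners $(1,1)$ (where $|p-u|=\sqrt{2}$, $|p-v|=1$) and $(0,1)$ (where $|p-u|=1$, $|p-v|=\sqrt{2}$), which subtend respectively angles $45^\circ$ and $90^\circ$ at $u$; one should check that rays at angles outside $[45^\circ,90^\circ]$ never reach $R_{uv_1}$ in the upper half. This contributes an arc of $45^\circ$, and by the symmetry of $R_{uv_1}$ across $uv$, another $45^\circ$ below, giving $\beta_1 = 90^\circ$ in total. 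An identical calculation with the roles of $u$ and $v$ swapped yields the same span from~$v$.

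Because every vertex of $V_2$ placed in $R_{uv_1}$ lies in $R_{u_1}\cap R_{v_1}$, Lemma~\ref{lem:sector} applies from both centers: any two such vertices have angular separation at least $\alpha\approx 27.89^\circ$ at $u$ (and at $v$). Splitting the count between the two half-planes delimited by $l_c$, as in the proof of Lemma~\ref{lem:vrab2} -- vertices in the half-plane closer to $v$ have their ply disk governed by $u$, those in the other half-plane by $v$ -- the admissible count is bounded by
$$\left\lfloor \frac{2\beta_1}{\alpha} \right\rfloor \;=\; \left\lfloor \frac{180^\circ}{27.89^\circ} \right\rfloor \;=\; 6,$$
establishing the claim.

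The main obstacle will be the rigorous justification that $m=1$ really is the extremal value (the region shrinks monotonically as $m$ decreases towards the triangle-inequality bound $1/2$, since both annuli shrink simultaneously), together with the careful verification that no vertex can ``slip in'' at an angle from $u$ outside $[45^\circ,90^\circ]\cup[-90^\circ,-45^\circ]$, i.e.~that the two circular arcs bounding $R_{uv_1}$ never widen this angular envelope. Once these geometric facts are pinned down, the sectorial counting argument of Lemma~\ref{lem:vrab2} carries over verbatim.
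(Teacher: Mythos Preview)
Your approach diverges from the paper's in a structural way, and your final counting step has a genuine gap.

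The paper does \emph{not} split by the perpendicular bisector $l_c$. It splits $R_{uv_1}$ into the parts above and below the line $l_h$ through $u$ and $v$, and it invokes the cover disks $Cov_u(v)$, $Cov_v(u)$ to argue that the usable angle from each of $u,v$ in one such half is only $\beta_1\le 41.41^\circ$. That yields $\lceil 2\beta_1/\alpha\rceil=3$ vertices per half, hence $6$ in total. None of those ingredients appear in your argument.

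The gap in your own argument is the passage to $2\beta_1=180^\circ$. You compute $\beta_1=90^\circ$ as the \emph{entire} angular span of $R_{uv_1}$ seen from $u$ (the union of the arcs $[45^\circ,90^\circ]$ and $[-90^\circ,-45^\circ]$), and then you double it, appealing to the $l_c$-split of Lemma~\ref{lem:vrab2}. But in that lemma $\beta_2$ is the span from $u$ of only the half-plane on the $v$-side of $l_c$; doubling accounts for the symmetric contribution from $v$ on the other side. Your $\beta_1$ already covers the whole of $R_{uv_1}$, so the factor $2$ has no meaning: you are counting the same angular budget twice. If you really want to follow the $l_c$-split, you must compute the span from $u$ of only the $v$-side of $R_{uv_1}$, which is $[45^\circ,\arctan\sqrt{7}]\cup[-\arctan\sqrt{7},-45^\circ]$, roughly $48.6^\circ$ rather than $90^\circ$.

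In fact no split is needed here at all. Since $R_{uv_1}\subseteq R_{u_1}$, Lemma~\ref{lem:sector} already forces \emph{every} pair of vertices in $R_{uv_1}$ to be $\ge\alpha$ apart as seen from $u$ (the ply radii in $K_{2,X}$ are at least as large as in $K_{1,X}$, so the separation constraint only tightens). With two disjoint $45^\circ$ arcs you get at most $\lfloor 45^\circ/\alpha\rfloor+1=2$ vertices in each, hence at most $4$ altogether --- stronger than the stated $6$. So your route can be salvaged, but the formula $\lfloor 2\beta_1/\alpha\rfloor$ as written does not prove anything.
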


\begin{proof}
Similar to the previous proof, the available angle from $u$ and $v$ in $R_{uv_1}$ should be considered with respect to $\sqrt{2}$, to maximize the number of sectors that can be placed. Also, differently from $R_{uv_2}$,  in this range the available space for the sectors is limited by the presence of $Cov_{u}(v)$ and $Cov_{v}(u)$ (these two disks do not influence the available space for the sectors in $R_{ab_2}$ at least at distance $2$ from $u$ and $v$). Taking into account this limitation, the available angle per each vertex $u$ and $v$ is $\beta_1 \leq 41.41^\circ$.
The available space can be split in two non intersecting parts: above and below $l_{h}$.
The number of sectors, and so vertices, that can be placed in each of these parts is given by $\lceil{\frac{2 \beta_1}{\alpha}}\rceil = \lceil2.97\rceil = 3$ and  so there cannot be $4$ vertices on either the top or the bottom side. It follows that the number of vertices in $R_{ab_1}$ is limited to $6$. 
\end{proof}

Lemma~\ref{lem:vrab2} and~\ref{lem:vrab1} imply that $K_{2, 17}$ does not admit an empty-ply drawing.

However, in the following we show that drawing the vertices in $R_{uv_1}$ and $R_{uv_2}$  the upper bound for the number of vertices of $V_2$ for an empty-ply drawing is smaller than $17$.

\begin{lemma} \label{lem:10-4}
If $R_{uv_2}$ contains exactly $10$ vertices then $R_{uv_1}$ contains at most $4$ vertices.
\end{lemma}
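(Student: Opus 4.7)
Plan: When $R_{uv_2}$ already contains $10$ vertices, the packing argument of Lemma~\ref{lem:vrab2} is tight: only about $11.6^\circ$ of angular slack remains on each side of $l_c$ out of the $151.04^\circ$ available from the controlling center ($v$ for the left half-plane and $u$ for the right half-plane). Hence the top-most and bottom-most vertex on each side is forced close to the extreme of the accessible angular range, i.e., close to the two points $(\frac{1}{2}, \pm\frac{\sqrt{15}}{2})$ where the outer arc $|x-u|=2$ meets $|x-v|=2$ (taking $|uv|=1$ w.l.o.g.). These four ``corner'' vertices therefore have ply-disk radius close to $1$ and sit near the vertical extremes of the drawing.

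I would then study how these four corner ply-disks constrain $R_{uv_1}$. Recall that $R_{uv_1}$ consists of two lens-shaped components above and below $l_h$, each with corners $(0, 1)$, $(\frac{1}{2}, \frac{\sqrt{3}}{2})$, $(1, 1)$, $(\frac{1}{2}, \frac{\sqrt{7}}{2})$ and angular extent about $2\beta_1 \approx 82.8^\circ$ from $u$ (and from $v$). The upper corner vertices of $R_{uv_2}$ sit just above the upper component of $R_{uv_1}$; a direct computation via the law of cosines shows that the open disk of radius $r_x \approx 1$ around such a corner vertex $x$ covers the top portion of this component, removing at least one sector of width $\alpha \approx 27.89^\circ$ from the angular range available to $R_{uv_1}$-vertices. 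Hence at most $2$ vertices fit in the upper component, and by symmetry at most $2$ in the lower one, giving the bound of $4$.

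The main obstacle is the geometric computation verifying that, regardless of whether the $10$ vertices of $R_{uv_2}$ split $5{+}5$, $6{+}4$, or $4{+}6$ between the two sides of $l_c$, the corner ply-disks always trim at least $\alpha$ off the angular range of each component of $R_{uv_1}$. Care is needed because the corner vertex of $R_{uv_2}$ enjoys about $11.6^\circ$ of positional freedom; one must verify that even in the worst-case position within this slack, the disk $D_x$ still covers the relevant portion of $R_{uv_1}$. The conclusion then follows by combining this refined angular bound with the packing inequality of Lemma~\ref{lem:vrab1}.
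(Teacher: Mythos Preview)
Your plan takes a genuinely different route from the paper. You aim to exploit the ply-disks of the extremal vertices in $R_{uv_2}$ to carve angular range out of $R_{uv_1}$. The paper never considers any interaction between vertices of the two ranges. Instead it extracts one metric consequence of the hypothesis: packing $10$ vertices into $R_{uv_2}$ forces $m \in [\tfrac{1.88}{2}, 1]$. With $m \geq 0.94$, each of the two components of $R_{uv_1}$ (above and below $l_h$) is thin enough that the ply-disk of \emph{any} vertex placed there covers the entire segment of $l_c$ lying in that component; hence at most one vertex can sit on each side of $l_c$, giving $2+2=4$.

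This argument is self-contained once the bound on $m$ is in hand, and it dodges precisely the obstacle you name: there is no case split over how the $10$ vertices distribute between the two half-planes, and no need to track where the extremal $R_{uv_2}$-vertices land. Your route could perhaps be pushed through, but note that your slack estimate of $11.6^\circ$ understates the freedom of a single extremal vertex. With five vertices in a $151.04^\circ$ arc needing only $4\alpha \approx 111.6^\circ$ of pairwise separation, the topmost one can float within roughly $40^\circ$; in a $6{+}4$ split one side is even looser. Verifying that the corner disks still trim a full $\alpha$-sector from $R_{uv_1}$ under this much play is substantially more work than the paper's bisector argument.
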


\begin{proof}
If $10$ vertices are drawn in $R_{uv_2}$% $[\sqrt{2}m, 2m]$
then $m \in [\frac{1.88}{2}, 1]$.  
In the inner range, the vertices have to lie in the intersection of the disks with radius $\sqrt{2}m$ centered in $u$ and $v$ and outside the of radius $\frac{1.88}{2}$ with the same centers. 
Let $l_h$ divide $R_{uv_1}$ the part above and below $l_h$. The ply disks of two vertices drawn one in the upper part and the other in the lower part do not overlap, thus we can argue about the two parts independently. Since the ply disk of any vertex drawn in the upper, or lower, part covers entirely its bisector, that is a  segment on $l_c$, it follows that in each part there can be at most $2$ vertices (i.e., one the left and one on the right of the bisector).
It follows that there can be at most $4$ vertices in $R_{uv_1}$ ($2$ in each part), if there are exactly $10$ vertices in $R_{uv_2}$.
\end{proof}

\begin{lemma}\label{lem:9-5}
If $R_{uv_2}$ contains exactly $9$ vertices then $R_{uv_1}$ contains at most $5$ vertices.
\end{lemma}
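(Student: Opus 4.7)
The proof will follow the same strategy as Lemma~\ref{lem:10-4}. Assume, toward a contradiction, that $9$ vertices lie in $R_{uv_2}$ and $6$ in $R_{uv_1}$. Scale the drawing so that $|uv|=1$, and let $2m$ denote the common radius of the ply disks of $u$ and $v$, so that $m\in[1/2,1]$.

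First, I re-run the angular packing argument of Lemma~\ref{lem:vrab2} with $9$ sectors in $R_{uv_2}$ rather than $10$. Each vertex in $R_{uv_2}$ requires an angular sector of width $\alpha = 27.89^\circ$ (by Lemma~\ref{lem:sector}) as seen from $u$ or from $v$, and these $9$ sectors must fit into the total available angle $2\beta_2(m) = 4\arccos(1/(4m))$. The inequality $2\beta_2(m) \geq 9\alpha$ inverts to a lower bound $m \geq m_9 := 1/(4\cos(9\alpha/4))$, yielding the range $m\in[m_9,1]$, slightly looser than the $[0.94,1]$ range of Lemma~\ref{lem:10-4}.

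Second, I show that for $m\in[m_9,1]$, the region $R_{uv_1}$ contains at most $5$ vertices. Following the split by $l_h$ as in Lemma~\ref{lem:10-4}, each of the upper and lower parts of $R_{uv_1}$ holds at most $3$ vertices (Lemma~\ref{lem:vrab1}); hence $6$ would require exactly the $3+3$ pattern. I rule this out: placing three vertices in a single half requires the angle $\beta_1(m)$ at $u$ (or $v$) available for sectors in $R_{uv_1}$---as limited by the cover disks $Cov_u(v)$ and $Cov_v(u)$ of Lemma~\ref{lem:vrab1}---to satisfy $\lceil 2\beta_1(m)/\alpha\rceil \geq 3$. A direct calculation of $\beta_1(m)$ shows this fails throughout $[m_9,1]$, because as $m$ grows the cover disks $Cov_u(v)$ and $Cov_v(u)$ expand and squeeze $\beta_1(m)$ below $\alpha$.

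The main obstacle is the tightness of the second step: since ``at most $5$'' is weaker than Lemma~\ref{lem:10-4}'s ``at most $4$'', the bound $m_9$ is correspondingly looser than $0.94$, leaving relatively little slack. Should the explicit numerical verification of $\beta_1(m)<\alpha$ narrowly fail in some sub-range, a supplementary disk-packing argument via Lemma~\ref{lem:area}---combining the lower bounds $\sqrt{2}m/2$ and $m/2$ on the ply-disk radii of vertices in $R_{uv_2}$ and $R_{uv_1}$, respectively, with the area of the lens $\{x:|x-u|\leq 2m\}\cap\{x:|x-v|\leq 2m\}$---should close the remaining gap.
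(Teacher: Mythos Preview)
Your plan diverges from the paper's argument, and the crucial second step does not go through.

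The problem is the claim that $\beta_1(m)<\alpha$ on $[m_9,1]$ ``because as $m$ grows the cover disks $Cov_u(v)$ and $Cov_v(u)$ expand''. The cover disks are Apollonius circles determined solely by the positions of $u$ and $v$; they do not depend on $m$ at all. Moreover, the paper's own Lemma~\ref{lem:vrab1} computes $\beta_1\le 41.41^\circ$ at $m=1$ (indeed the maximum is attained there), and $41.41^\circ>\alpha=27.89^\circ$, so $\lceil 2\beta_1/\alpha\rceil=3$ is \emph{achieved}. Thus a purely inner angular count cannot rule out three vertices per half of $R_{uv_1}$; this is exactly why Lemma~\ref{lem:vrab1} only gives the bound~$6$. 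Your area fallback via Lemma~\ref{lem:area} is also far too loose here: the total ply-disk area of $9+6+2$ vertices is on the order of $8\pi m^2$, while the union is easily contained in a disk of radius $3m$ (area $9\pi m^2$), so the factor-$4$ inequality is nowhere near violated.

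What the paper actually does is an \emph{interaction} argument between the two ranges rather than a standalone bound on $R_{uv_1}$. From nine vertices in $R_{uv_2}$ one first gets $m\ge 0.89$. Then one shows that if three vertices are squeezed into one half of $R_{uv_1}$, one of them is forced out to distance at least $1.33$ from both $u$ and $v$; the ply disk of that vertex then blocks an angular sector of at least $45.44^\circ$ in the outer range $R_{uv_2}$, which is incompatible with fitting nine sectors there. The missing ingredient in your plan is precisely this feedback from the inner placement to the outer angular budget.
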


\begin{proof}
There exist $9$ vertices in $R_{uv_2}$%$[\sqrt{2}m, 2m]$ 
then $m \in [\frac{1.78}{2}, 1]$. 
If we draw three vertices in the range $[\frac{1.78}{2}, \sqrt{2}]$ there exists one vertex whose minimal distance
to either $u$ or $v$ is $\geq 1.33$. Any vertex with distance $\geq 1.33$ implies a cover-disk regarding $u$ and $v$ which covers a sector of at least $45.44^\circ$. This angle can not be used to draw any vertex in the range $[\sqrt{2}m, 2m]$. Thus $3$ vertices in the upper, or lower, part of $[m, \sqrt{2}m]$ imply at most $9$ vertices.
It follows that there cannot be $9$ vertices in $R_{uv_2}$ and  $6$ in $R_{uv_1}$.
\end{proof}

Lemma~\ref{lem:10-4} and ~\ref{lem:9-5} imply Theorem~\ref{thm:bipartiteK2X}

\section*{Appendix B}

\rephrase{Theorem}{\ref{thm:ternary-shrinking}}{
	For no $q\in(0,1)$, rooted ternary trees admit empty-ply drawings constructed in orthogonal fashion with shrink factor $q$, i.e. when the distance of a vertex to its children is $q$ times the distance to its parent.
}

\begin{proof}
	Consider the sequence of centers $v_0,v_1,\dots$ as shown in Fig.~\ref{fig:limit} with its limit $w$.
	\begin{figure}[t]
		\scalebox{0.6}{\includegraphics{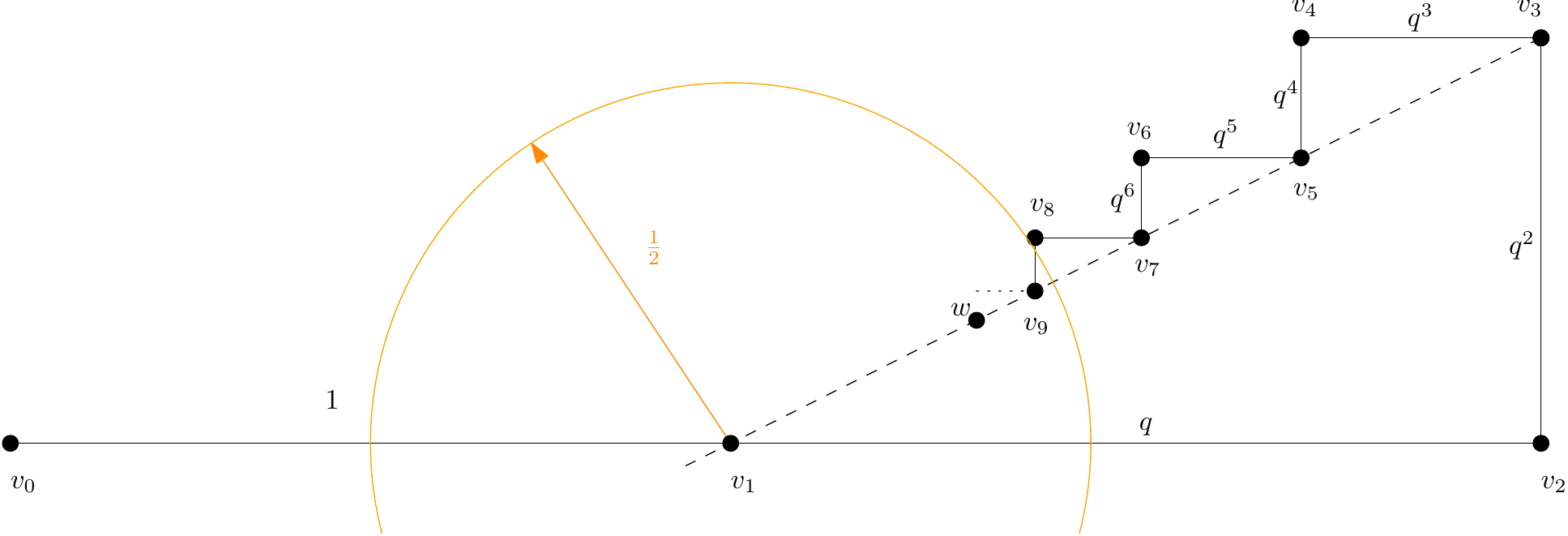}}
		\caption{The sequence of centers used in the proof of Theorem~\ref{thm:ternary-shrinking}.\label{fig:limit}}
	\end{figure}
	The Euclidean distance between $v_3$ and $w$ is:
	$$|v_3 w|=\frac{q^3}{1-q^2}\sqrt{1+q^2}~~~\Rightarrow~~~|v_1 w|=\left(q-\frac{q^3}{1-q^2}\right)\sqrt{1+q^2}$$
	If the following inequality is not satisfied for some scaling factor $q\in(0,1)$, then for any $\varepsilon > 0$, infinitely many points of the sequence $v_0,v_1,\dots$ will belong to the $\varepsilon$-neighborhood of~$w$, and consequently inside the disk centered at $v_1$.
	$$|v_1 w|=\left(q-\frac{q^3}{1-q^2}\right)\sqrt{1+q^2}\ge \frac12$$
	Since the function $f(q)=\left(q-\frac{q^3}{1-q^2}\right)\sqrt{1+q^2}-\frac12$ is negative in $(0,1)$, we get 
	$|v_1 w|<\frac{1}{2}$. It follows that $w$ lies inside the ply disk of $v_1$. Consequently, infinitely many points of the sequence $v_0,v_1,\dots$ lie inside the ply disk of $v_1$.
\end{proof}

\end{document}